\newcommand{\statement}[1]{\paragraph{#1}\pdfbookmark[1]{#1}{#1}} 
\renewcommand{\Delta}{\varDelta}
\renewcommand{\Phi}{\varPhi}
\renewcommand{\Psi}{\varPsi}
\renewcommand{\Lambda}{\varLambda}
\renewcommand{\Gamma}{\varGamma}
\renewcommand{\Omega}{\varOmega}
\let\epsi\varepsilon
\DeclarePairedDelimiter{\intervaloo}{\lparen}{\rparen}
\DeclarePairedDelimiter{\intervaloc}{\lparen}{\rbrack}
\DeclarePairedDelimiter{\intervalcc}{\lbrack}{\rbrack}
\DeclarePairedDelimiter{\abs}{\lvert}{\rvert}
\DeclarePairedDelimiter{\paren}{\lparen}{\rparen}
\DeclarePairedDelimiter{\norm}{\lVert}{\rVert}
    \newcommand{\VERT}[1]{#1|\mkern-1.5mu#1|\mkern-1.5mu#1|}
    \NewDocumentCommand{\tnorm}{ s o m }{
        \IfBooleanTF{#1}{
        \MT_delim_tnorm_star_wrapper:nnn%
            {\VERT{\bgroup\left}}{#3}{\VERT{\aftergroup\egroup\right}}
        }{
            \IfValueTF{#2}{
                \@nameuse{MT_delim_tnorm_nostarscaled_wrapper:nnn}%
                    {\VERT{\@nameuse {\MH_cs_to_str:N #2 l}}}
                    {#3}
                    {\VERT{\@nameuse {\MH_cs_to_str:N #2 r}}}
            }{
                \MT_delim_tnorm_nostarnonscaled_wrapper:nnn%
                    {\VERT{}}
                    {#3}
                    {\VERT{}}
            }
        }
    }
\DeclarePairedDelimiter{\commutator}{\lbrack}{\rbrack}
\DeclarePairedDelimiter{\List}{\{}{\}}
\DeclarePairedDelimiter{\braket}{\langle}{\rangle}
\DeclarePairedDelimiter{\bra}{\langle}{\rvert}
\DeclarePairedDelimiter{\ket}{\lvert}{\rangle}
\DeclareDocumentCommand{\ketbra}{o m m}{
    \IfValueTF{#1}
        {\ket[#1]{#2}\bra[#1]{#3}}
        {\ket{#2}\bra{#3}}%
}
\DeclareDocumentCommand{\proj}{o m}{
    \IfValueTF{#1}
        {\ket[#1]{#2}\bra[#1]{#2}}
        {\ket{#2}\bra{#2}}%
}
\DeclarePairedDelimiterXPP{\dd}[1]{d}{\lparen}{\rparen}{}{#1}
\DeclarePairedDelimiterXPP{\dist}[1]{d}{\lparen}{\rparen}{}{#1}
\DeclarePairedDelimiterXPP{\diam}[1]{\SYMdiam}{\lparen}{\rparen}{}{#1}
\let\bBigg@@\bBigg@
\renewcommand{\bBigg@}[2]{{%
  \mathchoice
    {\bBigg@@{#1}{#2}}%
    {\bBigg@@{#1}{#2}}%
    {\big@size=.5\big@size\bBigg@@{#1}{#2}}%
    {\big@size=.3\big@size\bBigg@@{#1}{#2}}}}%
\DeclareDocumentCommand{\trace}{s o e{_} m}{
    \IfValueTF{#3}
        {\Tr_{#3}}
        {\Tr}%
    \IfBooleanTF{#1}
        {\paren*{#4}}
        {
            \IfValueTF{#2}
                {\paren[#2]{#4}}
                {\paren{#4}}%
        }%
}
\DeclareDocumentCommand{\Trace}{s o e{_} m}{
    \IfValueTF{#3}
        {\Tr_{#3}}
        {\Tr}%
    \IfBooleanTF{#1}
        {\paren*{#4}}
        {
            \IfValueTF{#2}
                {\paren[#2]{#4}}
                {\paren{#4}}%
        }%
}
\providecommand\given{}
\newcommand\SetSymbol[1][]{%
    \nonscript\,#1\vert
    \allowbreak
    \nonscript\,
    \mathopen{}}
\DeclarePairedDelimiterX\Set[1]\{\}{%
    \renewcommand\given{%
        \SetSymbol[\delimsize]}
    \nonscript\,
    #1
    \nonscript\,
}
\DeclarePairedDelimiterXPP{\pd}[1]{\scale_make_bigger_l:N\delimsize\lparen d}{\lparen}{\rparen}{\scale_make_bigger_r:N\delimsize\rparen}{#1}
\DeclarePairedDelimiterXPP{\pdist}[1]{\scale_make_bigger_l:N\delimsize\lparen d}{\lparen}{\rparen}{\scale_make_bigger_r:N\delimsize\rparen}{#1}
\DeclarePairedDelimiterXPP{\pdiam}[1]{\scale_make_bigger_l:N\delimsize\lparen \SYMdiam}{\lparen}{\rparen}{\scale_make_bigger_r:N\delimsize\rparen}{#1}
    \newcommand{\texorpdfstring}[2]{#1}
    \newcommand{\href}[2]{#2}
    \newcommand{\hypersetup}[1]{}
    \newcommand{\orcidlink}[1]{ORCiD}
    \newcommand{\pdfbookmark}[1]{}
\crefname{equation}{}{}
\renewcommand{\namecref}{\lcnamecref}
\newcommand*{\diff}{\@ifnextchar^{\DIfF}{\DIfF^{}}}
\def\DIfF^#1{%
    \mathop{\mathrm{\mathstrut d}}\nolimits^{#1}\gobblespace}
\def\gobblespace{\futurelet\diffarg\opspace}
\def\opspace{%
    \let\DiffSpace\!%
    \ifx\diffarg(%
        \let\DiffSpace\relax
    \else
        \ifx\diffarg[%
            \let\DiffSpace\relax
        \else
            \ifx\diffarg\{%
                \let\DiffSpace\relax
            \fi
        \fi
    \fi
    \DiffSpace
} 
\newcommand{\sumstack}[2][]{\ifstrempty{#1}{\sum_{\substack{#2}}}{\smashoperator[#1]{\sum_{\substack{#2}}}}}
\newcommand{\e}{{\mathrm{e}}}
\newcommand{\I}{\mathrm{i}}
 \newcommand{\R}{\mathbb{R}}
\newcommand{\C}{\mathbb{C}}
\newcommand{\N}{\mathbb{N}}
\newcommand{\Z}{\mathbb{Z}}
\newcommand{\Hi}{{\mathcal{H}}}
\newcommand{\alg}{\mathcal{A}}
\newcommand{\algloc}{\alg^{\mathup{loc}}}
\newcommand{\unit}{\mathbf{1}}
\DeclareMathOperator{\id}{id}
\newcommand\cexpsym{\mathbb{E}}
\DeclareDocumentCommand{\cexp}{s o m m}{%
    \cexpsym\c_math_subscript_token{#3}
    \IfBlankF{#4}
    {
        \exp_last_unbraced:Ne \paren {\IfBooleanT{#1}{*}\IfValueT{#2}{[\exp_not:N #2]}} {#4}
    }
}
\newcommand{\calF}{\mathcal{F}}
\DeclareMathOperator{\Tr}{Tr}
\newcommand{\SYMdiam}{\operatorname{diam}}
\DeclareMathOperator{\supp}{supp}
\newcommand{\supint}{^{\mathup{int}}}
\newcommand{\Phiint}{\Phi\supint}
\newcommand{\tauint}{\tau\supint}
\newcommand{\Hint}{H\supint}
\newcommand{\ketU}{\ket{\Uparrow}}
\newcommand{\ketD}{\ket{\Downarrow}}
\newcommand{\braU}{\bra{\Uparrow}}
\newcommand{\braD}{\bra{\Downarrow}}
\newcommand{\ua}{\uparrow}
\newcommand{\da}{\downarrow}
\newcommand{\pauli}[1]{\sigma^{#1}}
\newcommand{\componentX}{X}
\newcommand{\componentZ}{Z}
\newcommand{\pauliX}{\pauli{\componentX}}
\newcommand{\pauliZ}{\pauli{\componentZ}}
\newcommand{\suchthat}{\mathpunct{\ordinarycolon}}
\newcommand{\quadtext}[1]{\quad\text{#1}\quad}
\newcommand{\qquadtext}[1]{\quad\quadtext{#1}\quad}
\newcommand{\Alignindent}{\hspace*{2em}&\hspace*{-2em}}
\newcommand{\mathup}[1]{\mathrm{#1}}
\let\oldsetminus\setminus
\newbox\mybox
\newcommand\cutsetminus[1]{%
    \setbox\mybox\hbox{\(#1\oldsetminus\)}%
    \ht\mybox=0pt%
    \dp\mybox=0pt%
    \usebox\mybox%
}
\renewcommand\setminus{%
    \mathbin{%
        \mathchoice%
            {\displaystyle\oldsetminus}
            {\textstyle\oldsetminus}
            {\cutsetminus{\scriptstyle}}
            {\cutsetminus{\scriptscriptstyle}}
    }%
}
\newcounter{theoremenv}
\newcounter{theoremenvglobal}
\newlist{thmlist}{enumerate}{1}
\setlist[thmlist]{
    label=\textup{(\alph{thmlisti})},
    ref={(\alph{thmlisti})},
    nosep,
}
\renewcommand{\p@thmlisti}{\perh@ps{\protect\ref{auto-label:\arabic{theoremenvglobal}}}}
\DeclareRobustCommand{\perh@ps}[1]{#1}
\newcommand{\itemref}[1]{%
    \begingroup 
    \let\perh@ps\@gobble\ref{#1}%
    \endgroup
}
\renewcommand \thetheoremenv {\use:c{the\thmt@envname}}
\theoremstyle{plain}
\declaretheorem[
    name=Theorem,
]{theorem}
\declaretheorem[
    name=Lemma,
    sibling=theorem,
]{lemma}
\declaretheorem[
    name=Corollary,
    sibling=theorem,
]{corollary}
\theoremstyle{definition}
\declaretheorem[
    name=Definition,
    sibling=theorem,
]{definition}
\theoremstyle{remark}
\declaretheorem[
    name=Remark,
    sibling=theorem,
    qed=\(\diamond\),
]{remark}
\declaretheorem[
    name=Example,
    sibling=theorem,
    qed=\(\diamond\),
]{example}
\newlength{\ul}
\newlength{\lw}
\tikzset{%
    myLine/.style = {
        line width = #1\lw,
    },
    myLine/.default = 1,
    x = \ul,
    y = \ul,
    text = black,
    draw = black,
    shorten/.style = {
        shorten < = #1,
        shorten > = #1
    },
    shorten/.default = 2\lw,
    on layer/.code = {
        \pgfonlayer{#1}\begingroup
        \aftergroup\endpgfonlayer
        \aftergroup\endgroup
    },
    on background/.style = {
        preaction = {
            #1,
            on layer = background,
        },
    },
    only inside/.style = {
        preaction = {
            clip,
            postaction = {
                line width = 2\lw,
                #1
            },
        },
    },
    only inside/.default = {draw},
    >={Straight Barb},
}
\definecolor{col_nobound}{rgb:hsb}{.03,.5,.85}
\definecolor{col_bound}{rgb:hsb}{.16,.6,1}
\definecolor{col_linear}{rgb:hsb}{.4,1,.85}
\definecolor{col_algebraic}{rgb:named}{col_bound!60!col_linear}
\tikzset{%
    every node/.style = {
        node font = \sffamily,
        inner sep = 4\lw,
    },
    small/.style = {
        font=\small,
    },
}
\addcolon\linebreak[2]#1}}
\title{Enhanced Lieb-Robinson bounds for\texorpdfstring{\\}{ }commuting long-range interactions}
\author{Marius Lemm%
    \texorpdfstring{%
        \,\orcidlink{0000-0001-6459-8046}
        \thanks{\href{mailto:marius.lemm@uni-tuebingen.de}{marius.lemm@uni-tuebingen.de}}%
    }{}%
}
\author{Tom Wessel%
    \texorpdfstring{%
        \,\orcidlink{0000-0001-7593-0913}
        \thanks{\href{mailto:tom.wessel@uni-tuebingen.de}{tom.wessel@uni-tuebingen.de}}%
    }{}%
}
\affil{
Department of Mathematics, University of Tübingen, 72076 Tübingen, Germany
}
\date{September 2025}
\begin{document}

\bgroup
\hypersetup{hidelinks}
\maketitle\thispagestyle{empty}
\egroup

\begin{abstract}
    Recent works have revealed the intricate effect of long-range interactions on information transport in quantum many-body systems:
    In \(D\) spatial dimensions, interactions decaying as a power-law \(r^{-\alpha}\) with \(\alpha > 2 D+1\) exhibit a Lieb-Robinson bound (LRB) with a linear light cone and the threshold \(2D +1\) is sharp in general.
    Here, we observe that mutually commuting, long-range interactions satisfy an enhanced LRB of the form \(t \, r^{-\alpha}\) for any \(\alpha>0\), and this scaling is sharp.
    In particular, the linear light cone occurs at \(\alpha = 1\) in any dimension.
    Part of our motivation stems from quantum error-correcting codes.
    As applications, we derive enhanced bounds on ground state correlations and an enhanced local perturbations perturb locally (LPPL) principle for which we adapt a recent subharmonicity argument of Wang-Hazzard.
    Similar enhancements hold for commuting interactions with stretched exponential decay.
\end{abstract}

\section{Introduction}
\label{sec:introduction}

The Lieb-Robinson bound (LRB) is a central theorem in quantum many-body physics.
It underpins the proofs of fundamental structural properties such as the decay of correlations of ground states of Hamiltonians with a spectral gap~\cite{HK2006,NS2006}, the existence of the thermodynamic limit as a strongly continuous automorphism on the quasi-local operator algebra~\cite{BR1981,NOS2006}, and the mathematical definition of a topological quantum phase~\cite{HW2005,BMNS2012}.
At the same time, the LRB is a cornerstone of the growing area of quantum information theory: on the one hand, it was a decisive tool for landmark results such as the area law for the entanglement entropy~\cite{Hastings2007area} and on the other hand it provides bounds on dynamical entanglement generation~\cite{BHV2006}, quantum messaging and quantum state transfer~\cite{EW2017}, and efficient quantum simulability of many-body dynamics~\cite{HHKL2021}.

The LRB is concerned with quantum lattice Hamiltonians.
These are defined by fixing a finite graph \(\Lambda\) and defining the Hilbert space
\begin{equation*}
    \Hi_\Lambda := \bigotimes_{x\in \Lambda} \C^q
\end{equation*}
with the standard inner product.
On this Hilbert space, one considers a distinguished self-adjoint linear operator, the Hamiltonian \(H\), which is taken as a sum of terms that act \enquote{locally}, i.e.
\begin{equation}
    \label{eq:Hintro}
    H_\Lambda := \sum_{Z\subset \Lambda} \Psi(Z)
    ,
\end{equation}
where each summand is a bounded, linear, self-adjoint operator of the form \(\Psi(Z) \equiv \Psi(Z) \otimes \unit_{\Lambda\setminus Z}\).
The precise mathematical setup in terms of quasi-local algebras is given in \cref{sec:setup}.

Instead of studying solutions to the Schrödinger equation directly, it is advantageous to study its dual, the quantum dynamics on operators, known as \emph{Heisenberg dynamics}, that is defined by unitary conjugation with the solution operator of the Schrödinger equation, i.e.
\begin{equation}
    \label{eq:timeintro}
    \tau_t^{\Lambda}(A)
    =
    \e^{\I t H_{\Lambda}} \, A \, \e^{-\I t H_{\Lambda}}
    .
\end{equation}
For a set \(X \subset \Lambda\) and \(\ell>0\), we denote its \(\ell\)-neighbourhood by
\begin{equation}
    \label{eq:def-neighbourhood}
    X_\ell := \Set[\big]{x \in \Lambda \given \dist{x,X} \leq \ell}
    .
\end{equation}

To put our results in context, we first review the two kinds of LRBs that are well-established for the case where the interaction \(\Psi\) is of finite-range, i.e.\ there exists a range \(R>0\) such that \(\Psi(Z)=0\) if \(\diam{Z}>R\), or exponentially decaying, i.e.\ \(\norm{\Psi(Z)}\) decays exponentially in~\(\diam{Z}\).
For such rapidly decaying interactions, one has the following two types of LRBs~\cite{LR1972,HK2006,NS2006,NSY2019}.

\begin{enumerate}[label={(\alph*)}]
    \item \textbf{Commutator version of the LRB.}
        There exist constants \(C\), \(b\), \(v>0\) such that for all finite \(X\), \(Y\subset \Lambda\) and bounded operators \(A_X \equiv A_X \otimes \unit_{\Lambda\setminus X}\) and \(B_Y \equiv B_Y \otimes \unit_{\Lambda\setminus Y}\), it holds that
        \begin{equation}
            \label{eq:LRBcomm}
            \norm[\big]{
                \commutator{\tau_t^\Lambda(A_X),B_Y}
            }
            \leq
            C \, \min\List[\big]{\abs{X},\abs{Y}} \, \norm{A_X} \, \norm{B_Y} \, \e^{b\, (vt-\dist{X,Y})}
            .
        \end{equation}
    \item \textbf{Operator localization version of the LRB.}
        There exist constants \(C\), \(b\), \(v>0\) such that for all finite \(X\subset \Lambda\) and bounded operators \(A_X \equiv A_X \otimes \unit_{\Lambda\setminus X}\), it holds that
        \begin{equation}
            \label{eq:LRBloc}
            \norm[\big]{
                \tau^\Lambda_t(A_X)
                -\cexp[\big]{X_r}{\tau_t^{\Lambda}(A_X)}
            }
            \leq
            C \, \abs{X} \, \norm{A_X} \, \e^{b(vt-r)}
            .
        \end{equation}
        Here, \(\cexp{X_r}{}\) is the conditional expectation on \(X_r\), i.e.\ the partial trace over the Hilbert space associated to \(\Lambda\setminus X_r\); see~\eqref{eq:definition-conditional-expectation}.
\end{enumerate}

The LRBs are a mathematically precise way to capture that quantum information propagates at most with a speed \(v\) in these systems.
The speed~\(v\) and the constants \(C\) and \(b\) might depend on the details of the interaction but are uniformly bounded in the system size \(\abs{\Lambda}\).
Indeed, since the right-hand sides are small for \(v \, t \ll \dist{X,Y}\) and \(v \, t \ll r\), respectively, the LRBs establish the existence of an effective finite propagation speed for many-body Hamiltonians of the form~\eqref{eq:Hintro}.
The fact that the bound is effective until \(v t \sim r\) is also called a ``linear light cone'' (with slope \(v\)) or a ``ballistic bound'' and we use both of these phrases interchangeably.

\subsection{Enhanced Lieb-Robinson bounds}

Given the important role played by LRBs, it is unsurprising that significant effort has been invested to extend its validity to other classes of many-body systems.
Of particular interest in the past ten~years has been the extension of LRBs to so-called \emph{long-range} interactions, whose operator norm decays as a power law.
In this introduction, we focus for simplicity on two-body interactions, i.e.\ \(\Psi(Z)=0\) for \(\abs{Z}>2\), and we refer to \cref{thm:LR-bound-long-range-alpha-ge-0,thm:LRB-long-range-alpha-ge-nu} for the more general statements.
For two-body interactions, the power-law decay assumption can be simply phrased as the operator norm bound
\begin{equation}
    \label{eq:powerlaw}
    \norm{\Psi(\List{x,y})}
    \leq
    C \, \dist{x,y}^{-\alpha}
\end{equation}
and \(\norm{\Psi(\List{x})} \leq C\).

After intensive research efforts and the introduction of a variety of new techniques, the sharp form of the LRBs for long-range interactions~\eqref{eq:powerlaw} were nailed down~\cite{EWM2013,RGLS2014,EGYM2017,EMN2020,KS2020,TCE2020,TGDL2021,TGB2021,RS2024}.
Also, propagation bounds for Bose-Hubbard type Hamiltonians with unbounded long-range hopping terms have recently been studied~\cite{FLS2022lieb,FLS2022maximal,LRSZ2023,LRZ2023,VKS2024,LRZ2025}.

Our goal is to study the effect of the additional assumption that the interactions mutually commute, i.e.
\begin{equation}
    \label{eq:comm}
    \commutator[\big]{\Psi(X),\Psi(Y)}=0
    \qquad\text{for all \(X\), \(Y\subset \Lambda\)}
    .
\end{equation}
The commutativity~\eqref{eq:comm} is a significant restriction, but it is physically very relevant in the context of quantum error correcting codes as we explain in Section~\ref{sec:motivation}.
Commutativity~\eqref{eq:comm} allows us to prove significantly enhanced LRBs compared to general long-range interactions, which satisfy only~\eqref{eq:powerlaw} but not necessarily~\eqref{eq:comm}.
We now summarize the differences between previous bounds; see also Figure~\ref{fig:intro}.
Detailed statements are given in \cref{thm:LR-bound-long-range-alpha-ge-0,thm:LRB-long-range-alpha-ge-nu}.
The parameter \(D\) captures the dimension of the graph; think of \(\Z^D\) and see \cref{defn:graph}.

\newlength{\verticalDist}
\setlength{\verticalDist}{1.2cm}
\begin{figure}
    \begin{subcaptiongroup}
    \begin{center}
        \setlength{\ul}{.4cm}
        \begin{tikzpicture}[
                myLine,
                x=\ul,y=\ul,
                ticknotes/.style = {
                    every node/.append style = {
                        rectangle,
                        minimum height=1.5ex,
                        inner sep=0pt,
                        draw=black,
                        minimum width=0pt,
                        outer sep=\lw,
                    },
                    every label/.append style = {
                        label position=below,
                        inner sep=\lw,
                    },
                },
            ]
            \coordinate (LC2 scale) at (0,0);

            \draw[->,ticknotes]  (LC2 scale) node[label=\(0\)]     (0) {}
                -- +(6,0) coordinate       (1) {}
                -- ++(15,0) node[label=\(D\)]    (D) {}
                -- +(6,0) node[label=\(D+1\)]      (D+1) {}
                -- ++(15,0) node[label=\(2D\)]   (2D) {}
                -- ++(6,0) node[label=\(2D+1\)] (2D+1) {}
                -- ++(6,0) coordinate[label=-45:\(\alpha\)] (larger);

            \coordinate (1b) at ($ (LC2 scale) + (0,2ex) $);
            \coordinate (1t) at ($ (1b) + (0,\verticalDist)$);
            \node[above right] (label LC2 commuting) at (1t) {\dots{} commuting Hamiltonians:};

            \coordinate[above] (2b) at (label LC2 commuting.north west);
            \coordinate (2t) at ($ (2b) + (0,\verticalDist)$);
            \node[above right] (label LC2 general) at (2t) {\dots{} general Hamiltonians:};
            \node[above right] (caption LC2) at (label LC2 general.north west) {\llap{(b) }\phantomcaption Bound on approximation \(
                \norm[\big]{
                    \tau^\Lambda_t(A_X)
                    -\cexp[\big]{X_r}{\tau_t^{\Lambda}(A_X)}
                }
            \) for \dots{} \label{fig:intro-commutator-version}};

            \path[fill=col_nobound] (0.east |- 2b) rectangle node[small] {no bound} (D.west |- 2t);
            \path[fill=col_bound] (D.east |- 2b) rectangle node[small] {\(\displaystyle\frac{\e^{v t}}{r^{\alpha-D}}\)} (2D.west |- 2t);
            \path[fill=col_algebraic] (2D.east |- 2b) rectangle node[small] {\(\displaystyle\paren[\bigg]{\frac{t}{\!r^{\alpha-2D}}\!}^{\!\frac{\alpha-D}{\alpha-2D}}\)} (2D+1.west |- 2t);
            \path[fill=col_linear] (2D+1.east |- 2b) rectangle node[small] {\(\displaystyle\frac{t^{D+1}}{(r-vt)^{\alpha-D}}\)} (larger |- 2t);

            \path[fill=col_nobound] (0.east |- 1b) rectangle node[small] {no bound} (D.west |- 1t);
            \path[fill=col_linear] (D.east |- 1b) rectangle node {\(\displaystyle\frac{t}{r^{\alpha-D}}\)} (larger |- 1t);
            \path[fill=col_algebraic] (D.east |- 1b) rectangle (D+1 |- 1t);

            \coordinate[above=2\baselineskip] (LC1 scale) at (caption LC2.north west);

            \draw[->,ticknotes]  (LC1 scale) node[label=\(0\)] {}
                -- (1 |- LC1 scale) node[label=\(1\)]         {}
                -- (D |- LC1 scale) node[label=\(D\)]     {}
                -- (D+1 |- LC1 scale)
                -- (2D |- LC1 scale) node[label=\(2D\)]    {}
                -- (2D+1 |- LC1 scale) node[label=\(2D+1\)]  {}
                -- (larger |- LC1 scale) coordinate[label=-45:\(\alpha\)];

            \coordinate (3b) at ($ (LC1 scale) + (0,2ex) $);
            \coordinate (3t) at ($ (3b) + (0,\verticalDist)$);
            \node[above right] (label LC1 commuting) at (3t) {\dots{} commuting Hamiltonians:};
            \coordinate[above] (4b) at (label LC1 commuting.north west);
            \coordinate (4t) at ($ (4b) + (0,\verticalDist)$);
            \node[above right] (label LC1 general) at (4t) {\dots{} general Hamiltonians:};
            \node[above right] at (label LC1 general.north west) {\llap{(a) }\phantomcaption Bound on commutator \(\norm[\big]{\commutator{\tau^\Lambda_t(A),B}}\) for \dots{}\label{fig:intro-operator-localization-version}};

            \path[fill=col_nobound] (0.east |- 4b) rectangle node[small] {no bound} (D.west |- 4t);
            \path[fill=col_bound] (D.east |- 4b) rectangle node[small] {\(\displaystyle\frac{\e^{v t}}{r^{\alpha}}\)} (2D.west |- 4t);
            \path[fill=col_algebraic] (2D.east |- 4b) rectangle node[small] {\(\displaystyle\paren[\bigg]{\frac{t}{\!r^{\alpha-2D}}\!}^{\!\frac{\alpha-D}{\alpha-2D}}\)} (2D+1.west |- 4t);
            \path[fill=col_linear] (2D+1.east |- 4b) rectangle node[small] {\(\displaystyle\frac{t^{2D+1}}{(r-vt)^{\alpha}}\)} (larger |- 4t);

            \path[fill=col_linear] (0.east |- 3b) rectangle node {\(\displaystyle\frac{t}{r^\alpha}\)} (larger |- 3t);
            \path[fill=col_algebraic] (0.east |- 3b) rectangle (1 |- 3t);

            \coordinate (cbbnorth) at (current bounding box.north);
            \coordinate (cbbsouth) at (current bounding box.south);
            \pgfresetboundingbox
            \path[use as bounding box] (cbbnorth) rectangle (cbbsouth);
        \end{tikzpicture}
    \end{center}
    \end{subcaptiongroup}
    \caption{
        Comparison of the previously shown (sharp) Lieb-Robinson bounds for general long-range interactions and the enhanced Lieb-Robinson bounds proven here for long-range \emph{commuting} interactions.
        The general bounds mentioned above are proven in~\cite{TGB2021} for \(\alpha\in \intervaloo{2D,2D+1}\) and~\cite{KS2020} for \(\alpha>2D+1\).
        The bounds displayed for \(\alpha\in\intervaloo{D,2D}\) hold for all~\(\alpha>D\), the commutator version was proven in~\cite{HK2006} and the operator localization version in~\cite[section S.III.A]{KS2021}.
        The enhanced bounds for commuting Hamiltonians are stated in \cref{thm:LRB-long-range-alpha-ge-nu,thm:LR-bound-long-range-alpha-ge-0}.
    }
    \label{fig:intro}
\end{figure}

In \cref{fig:intro}, we compare the results for general long-range interactions~\cite{HK2006,NS2006,EWM2013,RGLS2014,EGYM2017,EMN2020,KS2020,TCE2020,TGDL2021,TGB2021,RS2024} to the ones we prove here for mutually commuting long-range interactions.
The LRBs for commuting long-range interactions are enhanced compared to the LRB for general long-range interactions in two ways.

\begin{enumerate}[label=(\roman*)]
    \item
        \emph{The range of \(\alpha\) where an LRB exists at all is increased.}
        For general interactions, there are no LRBs known at all for \(\alpha\leq D\).
        By contrast, for commuting long-range interactions, the commutator version of the LRB holds for any \(\alpha>0\) and the localization version holds for any \(\alpha>D\).
    \item \emph{The LRBs are stronger, meaning they are effective in larger spacetime regions.}
        The central first question is whether the light cone is indeed linear, i.e.\ of the form \(vt\sim r\).
        A LRB corresponding to a \emph{linear light cone} says that information transport is at most \emph{ballistic}, i.e.\ it can spread at most a distance proportional to \(t\) in time \(t\).
        We use the term linear light cone and ballistic LRB interchangeably, as both are common in the literature.
        In the finite-range case, both LRBs~\eqref{eq:LRBcomm} and~\eqref{eq:LRBloc} are ballistic because they become effective at distance \(vt\), which is linear in \(t\).
        For general long-range interactions, both versions of the LRB become ballistic for \(\alpha>2D+1\).
        For commuting interactions, the commutator version of the LRB is ballistic already at \(\alpha=1\) and the localization version is ballistic at \(\alpha = D+1\).
        Moreover, the LRBs for commuting interactions even become subballistic, i.e.\ the critical distance scales sublinearly in time, above these thresholds.
\end{enumerate}

The LRBs stated for general (non-commuting) long-range interactions in \cref{fig:intro-commutator-version} are proven to be sharp in the following sense: bounds which assume better scaling of \(t\) versus \(r\) can be violated by explicit choices of time-dependent Hamiltonians; see for example~\cite{EMN2020,TCE2020,TGDL2021}.
Similarly, the bounds for commuting interactions are also sharp, as we discuss in \cref{sec:optimality-of-the-LRB}.

The proofs we give are quite short and more direct than the usual proofs of LRBs, thanks to the commutativity.
We first prove a bound similar to the operator localization version of the LRB, \cref{thm:local-approximation-of-time-evolution}, which bounds \(\norm[\big]{\tau^\Lambda_t(A_X) - \tau^{\Lambda'}_t(A_X)}\) where the second evolution is with the Hamiltonian restricted to \(\Lambda'=\Lambda\setminus Y\) or \(\Lambda'=X_r\) to obtain the commutator or operator localization version, respectively.
This is different to the usual approach for non-commuting interactions, where one commonly first proves the commutator version and then derives the operator localization version.
In both approaches, one usually looses a power \(D\) in the decay for the operator localization compared to the commutator version.
Only for \(\alpha\in \intervaloo{2D,2D+1}\), the proof of the general Lieb-Robinson bound, see \cref{fig:intro}, directly works with the operator localization, which always implies the commutator version with the same decay.
Thus, the spatial decay in this regime is~\(r^{\alpha-D}\) for both LRBs and might be improved for the commutator version.

We remark in passing that a different direction in which the general bounds on long-range interactions can be improved is for non-interacting particles, which display a linear light cone at \(\alpha>D+1\)~\cite{TCE2020}.

\subsection{Further results}
\label{sec:intro-further-results}

As mentioned in the introduction, LRBs and operator localization bounds have a plethora of applications -- both structural ones in mathematical physics and practical ones in quantum information theory, e.g.\ on the speed of entanglement generation~\cite{BHV2006}, quantum messaging and quantum state transfer~\cite{EW2017}.
Here, we focus on two applications which use LRBs to constrain the structure of ground states of Hamiltonians with a uniform spectral gap, namely decay of correlations and the principle that ``local perturbations perturb locally'' (LPPL).

\subsubsection{Enhanced correlation bounds for gapped ground states}

Modifying the proof of \textcite{HK2006}, we show that a ground state~\(\rho_0\), which is gapped from the rest of the spectrum by a gap~\(g\), satisfies \emph{decay of correlations} in the sense that
\begin{equation*}
    \abs[\Big]{
        \trace[\big]{
            \rho_0
            \, A_X
            \, B_Y
        }
        - \trace[\big]{
            \rho_0
            \, A_X
        }
        \, \trace[\big]{
            \rho_0
            \, B_Y
        }
    }
    \lesssim
    \, \norm{A_X} \, \norm{B_Y}
    \, \abs{X} \, \abs{Y}
    \, g^{-1}
    \, \dist{X,Y}^{-\alpha}
\end{equation*}
for small \(g\) and any \(\alpha>0\).
The precise statement is given in \cref{thm:spectral-gap-implies-decay-of-correlations}.
In this bound, the inverse spectral gap \(g^{-1}\) plays a similar role as time in \cref{fig:intro}.
In particular, the analogue of a ballistic bound is that the relevant localization length scales like~\(g^{-1}\), which is the case for \(\alpha \geq 1\).
This is to be compared with the best known bound \(r^{-\alpha g/v}\) for general long-range interactions with \(\alpha\in\intervaloc{D,2D}\)~\cite{WH2022}.
First, no bound for \(\alpha \leq D\) is known in the general case.
And second, writing \(r^{-\alpha g/v} = \e^{-(\alpha g/v) \ln r}\), one sees that the localization length is exponentially large in~\(g^{-1}\).
Again, we see a fundamental, qualitative improvement under the additional assumption of commuting interactions.

\subsubsection{Enhanced LPPL for gapped ground states}

We also obtain a similar improvement for the \emph{local perturbations perturb locally} (LPPL) principle by adapting the complex-analytic subharmonicity argument from \textcite{WH2022}\@:
Let \(H\) be a long-range Hamiltonian with decay~\(\alpha\) as before and \(V_X\) a perturbation localized in~\(X\subset \Lambda\).
Moreover, assume that for all~\(\lambda\in \intervalcc{0,1}\), the ground state \(\rho_\lambda\) of \(H + \lambda \, V_X\) is gapped with gap at least~\(g\).
Then,
\begin{equation*}
    \abs[\Big]{
        \trace[\big]{\rho_0 \, B_Y}
        - \trace[\big]{\rho_1 \, B_Y}
    }
    \lesssim
    \norm{B_Y} \, \paren[\big]{\norm{V_X}+\norm{V_X}^2}
    \, \abs{X} \, \abs{Y}
    \, g^{-3} \, \dist{X,Y}^{-\alpha}
    .
\end{equation*}
The precise statement is given in \cref{thm:LPPL-for-gapped-ground-states}.
This bound becomes ballistic for \(\alpha=3\) in the same sense as before.
However, the best known bound for general interactions scales exactly the same as the bound for decay of correlations~\cite{WH2022} and thus the localization length is again exponentially large in~\(g^{-1}\).
A comparison of the different regimes and scalings is provided in \cref{table:app}.

\begin{table}
    \makeatletter
    \protected\def\savetpos#1{\pdfsavepos\write\@auxout{\gdef\string\tpos#1{\the\pdflastxpos}}}%
    \def\tposdifference#1#2{\ifcsname tpos#1\endcsname\the\dimexpr\csname tpos#2\endcsname sp -\dimexpr\csname tpos#1\endcsname sp\relax\else0.5\textwidth\fi}
    \makeatother
    \setcellgapes{2pt}%
    \renewcommand\cellset{%
        \renewcommand\arraystretch{0.8}%
    }%
    \renewcommand{\cellalign}{t}
    \makegapedcells
    \newcommand{\fnciteref}[1]{{\footnotesize\cite{#1}}}
    \newcommand{\fnciterefs}[1]{{\footnotesize\cite{#1}}}
    \centering
        \caption{
            Comparisons between previous results and our improvements for commuting interactions.
            These are rough summaries, and we refer to the theorems for the precise statements.
            In particular, we drop logarithmic corrections in \(r\) and, for decay of correlations and LPPL, we focus on the scaling as \(g \to 0\), and we drop polynomial prefactors depending on the gap \(g\).
        }
        \label{table:app}
        \sffamily
        \begin{tabular}{cccc}
            \toprule
            \savetpos{a}\hfill type \hfill
             & interaction decay
             & commutator LRB
             & \hfill decay of correlations/LPPL \hfill \savetpos{b}
            \\ \midrule
            long-range
             & \(r^{-\alpha}\), \(\alpha>D\)
             & \(\e^{v\abs{t}} \, r^{-\alpha}\)~\fnciteref{HK2006}
             & \(r^{-\alpha g/v}\)~\fnciteref{WH2022}
            \\
             & \makecell[cc]{\(r^{-\alpha}\), \(\alpha>2D\) }
             & \(t^{2D+1} \, (r-vt)^{-\alpha}\)~\fnciteref{KS2020}\textsuperscript{a}
             & \(r^{-\alpha}\)~\fnciterefs{TCE2020,WH2022}
            \\
            \makecell{commuting \\[-2pt] long-range}
             & \(r^{-\alpha}\), \(\alpha>0\)
             & \(\abs{t} \, r^{-\alpha}\)
             & \(r^{-\alpha}\)
            \\ \midrule
            short-range
             & \(\e^{-b r}\)
             & \(\e^{b'\,(v\abs{t} - r)}\)~\fnciteref{HK2006}
             & \(\e^{-g/v \, r}\)~\fnciterefs{HK2006,WH2022}
            \\ \addlinespace
            \makecell{commuting \\[-2pt] short-range}
             & \(\e^{-b r^p}\)
             & \(\abs{t} \, \e^{-b' r^p}\)
             & \(\e^{-b' r^p}\)
            \\ \bottomrule
            \multicolumn{4}{l}{
                \textsuperscript{a}\parbox[t]{\tposdifference{a}{b}}{\footnotesize%
                    This bound is valid only for \(\alpha>2D+1\).
                    For \(\alpha\in \intervaloo{2D,2D+1}\) one has \(t^{\frac{\alpha-D}{\alpha-2D}} \, r^{\alpha-D}\) from~\cite{TGB2021}, see \cref{fig:intro-commutator-version}.
                    Both focus on the shape of the light cone, while the proof of decay of correlations in~\cite{WH2022} relies on a bound with better decay in \(r\) proven in~\cite{FGC2015}.
                }}
        \end{tabular}
\end{table}

\subsubsection{A stability result}
Additionally, in \cref{sec:LR-bound-interaction-picture}, we prove a result on \emph{stability of LRBs} for more general Hamiltonians which are only assumed to have a commuting part, and we prove a LRB that is independent of the strength of the commuting part.
Stability-type results for LRBs have been proved in other settings before, e.g.\ in~\cite{NSY2019,TGDL2021,KS2020,TB2024}.

\medskip

This finishes our presentation of the main results.
We discuss future directions in \cref{sect:conclusions}.

\subsection{Motivation from quantum error-correcting codes}
\label{sec:motivation}
The results described above show that there is a fundamental, qualitative difference between the quantum many-body dynamics produced by general long-range interactions and commuting long-range interactions.

In this paragraph, we briefly review why Hamiltonians with commuting non-local interactions have recently received a lot of interest in the quantum information and quantum computing communities in the context of quantum error-correcting codes.
The goal of a quantum error-correcting code is typically to robustly store quantum information, often by harnessing topological order; early examples are the toric code~\cite{Kitaev2003} and CSS stabilizer codes~\cite{CS1996,Steane1996}.
Most quantum error-correcting codes are based on \emph{commuting} interactions.
Non-local \emph{and} commuting interactions are of particular interest lately because they arise in Euclidean constructions of efficient quantum codes.
The reason is as follows: Recall that a \([[n,k,d]]\) quantum code is a quantum code on system size \(n\) that can store \(k\) logical qubits with code distance \(d\).
The goal is to have \(k\) and \(d\) as large as possible, so that the code is both efficient and robust.
Then, the Bravyi-Poulin-Terhal (BPT) theorem~\cite{BPT2010} says that any \([[n,k,d]]\) quantum code with short-range interactions in \(D\) Euclidean dimensions must satisfy the bound
\(k \, d\mkern1mu^{2/(D-1)} \lesssim n\).
An extension to general graphs was given in~\cite{BK2022connectivity}.
A quantum code is called good if it satisfies \(k\),~\(d\sim n\) (which is optimal) and the BPT theorem says that local commuting interactions in \(D\) Euclidean dimensions cannot produce good quantum codes.
Then, in 2021, the 30-year-old open problem of constructing good quantum codes was resolved by quantum low-density parity check (qLDPC) codes on expander graphs and related constructions~\cite{PK2022,BE2021,LZ2022}.
Implementing these codes as Hamiltonians in the Euclidean setting, naturally produces Hamiltonians with non-local, commuting interactions~\cite{DBT2021,BK2022, HMKL2023,PJBP2024,BGKL2024} (as expected in view of the BPT theorem~\cite{BPT2010}).
Beside their potential for near-term fault-tolerant quantum computation, quantum codes are also emerging as a rich source of new models for condensed-matter physics, e.g., they lead to a proof of the no low-energy trivial state (NLTS) conjecture of Freedman and Hastings~\cite{FH2014,Hastings2013} and provide a rich class of gapped topological quantum phases~\cite{RKLO2024}.

For illustration purposes of the kind of quantum code that is within our scope, we discuss a simple example of a long-range toric code below.
Of course, Euclidean implementations of good qLDPC codes coming from expander graphs would look more complicated.
Indeed, the pioneering investigations for Euclidean implementations of efficient qLDPC codes and related error-correcting codes~\cite{DBT2021,BK2022, HMKL2023,PJBP2024,BGKL2024} have not fully settled on a specific Hamiltonian and non-locality in these works is not necessarily the same as power-law decay.
In some cases it should rather be understood as few-body and not necessarily decaying in distance.
Nonetheless, these proposals provide strong motivation for us to study quantum dynamics of Hamiltonians with commuting long-range interactions like the following one.

\begin{example}[Long-range toric code]
    \label{ex:lrtc}
    A non-trivial example of a Hamiltonian satisfying our assumptions for any \(\alpha>0\) is given by a long-range version of Kitaev’s famous toric code model~\cite{Kitaev2003}.
    The basic model consists of spins sitting at the edges of the lattice \((\Z/L\Z)^2\).
    We denote the set of vertices~\(V\), edges~\(E\) and faces~\(F\) of \((\Z/L\Z)^2\).
    Then the Hilbert space is given by \(\Hi := \bigotimes_{e\in E} \Hi_e := \bigotimes_{e\in E} \C^2\). 
    To each vertex \(s\in V\) one associates a \emph{star} operator \(A_s := -\unit + \prod_{e\in E: v\in e} \pauliX_e\), and to each face \(p\in F\) one associates a \emph{plaquette} operator \(B_p := -\unit + \prod_{e\in \partial (p)} \pauliZ_e\), where \(\pauli{\#}_e\) is the Pauli~\(\#\) matrix acting on~\(\Hi_e\) and \(\partial (p)\) denotes the four edges touching \(p\). 
    The Hamiltonian is then given as a sum
    \begin{equation*}
        H
        :=
        - \sum_{s \in V} A_s
        - \sum_{p \in F} B_p
        ,
    \end{equation*}
    which turns out to be commuting and gapped~\cite{Kitaev2003} and has ground state energy~\(0\).
    Moreover, it is frustration free, i.e.\ its ground state projection~\(P\) also minimizes all individual terms \(A_s \, P = 0\) and \(B_p \, P = 0\).
    Adding a term of the form
    \begin{equation*}
        H_{\textrm{long-range}}
        :=
        \sum_{s_1,s_2\in V} f(s_1,s_2) \, A_{s_1} \, A_{s_2}
        + \sum_{p_1,p_2\in F} g(p_1,p_2) \, B_{p_1} \, B_{p_2}
        ,
    \end{equation*}
    with \(0 \leq f(s_1,s_2) \leq C \, \dist{s_1,s_2}^{-\alpha}\) and \(0 \leq g(p_1,p_2) \leq C \, \dist{p_1,p_2}^{-\alpha}\) which is positive and satisfies \(H_{\textrm{long-range}} \, P = 0\) can only increase the gap.
    The perturbed Hamiltonian \(H+H_{\textrm{long-range}}\) is still mutually commuting, gapped, has ground state~\(P\) and is long-range.
    This example can be generalized in various ways, e.g.\ to higher dimensions.
    One can also add higher order products \((-1)^k \, A_{s_1} \dotsm A_{s_k}\) and \((-1)^k \, B_{p_1} \dotsm B_{p_k}\) or allow for small negative~\(f\) and~\(g\) and still obtain a spectral gap by adopting the BHM strategy~\cite{BHM2010,NSY2020}.
    Having a spectral gap is not needed for the Lieb-Robinson bounds.
    It is only relevant for the applications to gapped ground states that we discussed in \cref{sec:intro-further-results}.
\end{example}

\section{Mathematical setup}
\label{sec:setup}

In this work we want to consider spin systems on \(D\)-regular graphs.
The graph is seen as a metric space \((\Gamma,d)\), where \(d\) is the graph distance.
For any \(Z\subset \Gamma\) we denote its cardinality by \(\abs{Z}\) and its diameter by \(\diam{Z} := \sup_{x, y \in Z} d(x,y)\).
Given two sets \(X,Y\subset \Gamma\) we denote by \(\dist{X,Y}\) their distance with respect to the metric \(d\).
We use \(\Lambda\Subset\Gamma\) to denote that \(\Lambda\) is a finite subset of \(\Gamma\).
The ball with radius \(r\) around \(x\in \Gamma\) is denoted \(B^\Gamma_x(r) := \Set{z\in \Gamma \given d(x,z) \leq r}\).

\begin{definition}[Surface-regular graph~\cite{RS2024}]
    \label{defn:graph}
    We call the graph \((\kappa,D)\)-surface-regular, if
    \begin{equation*}
        \abs[\big]{B^\Gamma_x(r) \setminus B^\Gamma_x(r-1)}
        \leq
        \kappa \, r^{D-1}
        \qquadtext{for all}
        x\in \Gamma
        \quadtext{and}
        r\in \N
        .
        \qedhere
    \end{equation*}
\end{definition}

The prototypical example of a \((\kappa,D)\)-surface-regular lattice is \(\Gamma=\Z^D\).
Any \(D\)-surface-regular graph is also \(D\)-regular~\cite{RS2024}, in the sense that
\begin{equation*}
    \abs[\big]{B^\Gamma_x(r)}
    \leq
    (1+\kappa) \, r^{D}
    \qquadtext{for all}
    x\in \Gamma
    \quadtext{and}
    r\in \N_{>0}
    .
\end{equation*}
This notion of \(D\)-regular graphs is more common in the literature.
However, all lattices we are interested in are \(D\)-surface-regular, and we can prove better bounds with this stronger assumption.
An explicit construction of a counter example can be found in~\cite{Tessera2007}.

Often however, we restrict to \emph{finite} \((\kappa,D)\)-surface-regular lattices \(\Lambda\).
This might seem trivial, but it is important to note that all our results and constants will be independent of the specific graph \(\Lambda\) and only depend on \(\kappa\) and \(D\).
In certain cases, this will allow us to take the limit \(\Lambda_n \to \Gamma\) where all \(\Lambda_n\) are finite \((\kappa,D)\)-regular lattices and \(\Gamma\) is an infinite one.

With every site \(x \in \Gamma\) one associates a finite-dimensional local Hilbert space \(\Hi_x := \C^q\) with the corresponding space of linear operators denoted by \(\alg_x:=\mathcal{B} \paren*{\C^q}\).
And for every finite \(\Lambda \Subset \Gamma\) we define the Hilbert space \(\Hi_{\Lambda}:= \bigotimes_{x \in \Lambda} \Hi_x\), and denote the algebra of bounded linear operators on~\(\Hi_{\Lambda}\) by \(\alg_{\Lambda} := \mathcal{B}(\Hi_{\Lambda})\).
Due to the tensor product structure, we have \(\alg_{\Lambda} = \bigotimes_{x \in \Lambda} \alg_x\).
Hence, for \(X \subset \Lambda \Subset \Gamma\), any \(A \in \alg_{X}\) can be viewed as an element of~\(\alg_{\Lambda}\) by identifying~\(A\) with \(A \otimes \unit_{\Lambda \backslash X} \in \alg_{\Lambda}\), where~\(\unit_{\Lambda \backslash X}\) denotes the identity in~\(\alg_{\Lambda \backslash X}\).
This identification is always understood implicitly and for \(B\in \alg_\Lambda\) we denote by~\(\supp(B)\) the smallest \(Y\subset \Lambda\) such that~\(B\in \alg_Y\).
Moreover, the algebra of local observables on \(\Gamma\) is given by
\begin{equation*}
    \algloc_{\Gamma}
    :=
    \bigcup_{\Lambda \Subset \Gamma} \alg_{\Lambda}
    ,\qquadtext{and its completion}
    \alg_\Gamma
    :=
    \overline{\algloc_\Gamma}^{\raisebox{-.2ex}{\kern1pt\(\scriptstyle\norm{\cdot}\)}}
\end{equation*}
with respect to the operator norm is the algebra of \emph{quasi-local} observables, which is only relevant if \(\Gamma\) is infinite.

An \emph{interaction} on \(\Gamma\) is a function
\begin{equation}
    \Psi \colon \List{Z\Subset\Gamma} \to \alg_{\Gamma},
    \quad Z\mapsto \Psi (Z)\in\alg_Z
    \quadtext{with} \Psi(Z)=\Psi(Z)^*
    .
\end{equation}
For each \(\Lambda \Subset \Gamma\), the corresponding local Hamiltonian is then defined as
\begin{equation*}
    H_{\Lambda}
    :=
    \sum_{Z \subset \Lambda} \Psi(Z)
    .
\end{equation*}
And the Heisenberg time-evolution of an operator \(A\in \alg_\Lambda\) is denoted by
\begin{equation}
    \label{eq:definition-time-evolution}
    \tau_t^{\Lambda}(A)
    :=
    \e^{\I t H_{\Lambda}} \, A \, \e^{-\I t H_{\Lambda}}
    .
\end{equation}
For infinite \(\Gamma\), the Hamiltonian is not an object of the algebra \(\alg_\Gamma\) and the dynamics cannot be defined as in~\eqref{eq:definition-time-evolution}.
Instead, one can define the time evolution on finite subgraphs \(\Lambda \Subset \Gamma\) as above.
The dynamics on~\(\Gamma\) can then be defined as the limit
\begin{equation}
    \label{eq:definition-thermodynamic-limit}
    \tau^\Gamma_t(A)
    :=
    \lim_{\Lambda\nearrow\Gamma} \tau^\Lambda_t(A)
\end{equation}
if it exists for all \(A\in \algloc_\Gamma\).
It might then be extended to a cycle of automorphism on~\(\alg_\Gamma\).
This limit is understood in the following way:
Let \(\List{\Lambda_n}_{n\in \N}\subset \Set{\Lambda\Subset\Gamma}\) be an increasing and exhaustive sequences, i.e.\ \(\Lambda_n\subset \Lambda_{n+1}\) for all \(n\) and for each \(Z\Subset\Gamma\) there is \(n\in \N\) such that \(Z\subset \Lambda_n\).
If the limit \(\lim_{n\to\infty} \tau^{\Lambda_n}_t(A)\) exists and is independent of the chosen sequence, it equals the above limit.
We comment on the existence after introducing interaction norms.

To measure locality of an interaction so-called interaction norms are widely used in the literature.
Therefore, let \(F\colon \R_{\geq 0} \to \R_{>0}\) be a decaying function.
We define (for interactions on \(\Gamma\))
\begin{equation}
    \label{eq:def-interaction-norm}
    \norm{\Psi}_F
    :=
    \sup_{x,y\in \Gamma} \sumstack{Z\Subset\Gamma \suchthat\\ x,y\in Z}
    \frac
        {\norm{\Psi(Z)}}
        {F\pd{x,y}}
    ,
\end{equation}
and say that \(\Psi\) is \(F\)-local if \(\norm{\Psi}_F < \infty\).
We choose to use this norm, because it allows for very general interactions.
For two-body interactions with decay \(\norm{\Psi\paren[\big]{\List{x,y}}} \leq C \, F\pdist{x,y}\), one directly computes \(\norm{\Psi}_F \leq C\).

Our bounds will depend on a suitable interaction norm \(\norm{\Psi}_F\).
Beyond this, however, the constants in all our results will be independent of the particular lattice~\(\Gamma\).
In this way, if one defines an interaction on an infinite graph~\(\Gamma\) such that \(\norm{\Psi}_F^{(\Gamma)}<\infty\), one obtains bounds that are uniform in~\(\Lambda \Subset \Gamma\), by using \(\norm{\Psi}_F^{(\Lambda)} \leq \norm{\Psi}_F^{(\Gamma)}\).
Here, we add an extra index to highlight on which graph~\eqref{eq:def-interaction-norm} is used.
This in particular holds for the prototypical example \(\Gamma=\Z^D\), where one requires
\begin{equation*}
    \norm{\Psi}^{(\Z^D)}_F
    :=
    \sup_{x,y\in \Z^D} \sumstack{Z\Subset\Z^D \suchthat\\ x,y\in Z}
    \frac
        {\norm{\Psi(Z)}}
        {F\pd{x,y}}
\end{equation*}
and uniform obtains bounds on arbitrary subsets \(\Lambda\Subset\Z^D\).
But with the more general framework, one can for example also consider rectangles with torus geometry, or more general graphs like the honeycomb lattice.

With the notion of interaction norms at hand, we can come back to the thermodynamic limit.
For interactions with decay \(F(r) \leq (1+r)^{-\alpha}\) with \(\alpha>D\) on a \(D\)-surface-regular graph, existence of the thermodynamic limit is known, see e.g.~\cite[Theorem~3.5]{NSY2019}.%
\footnote{
    To apply their results, one needs to realize that \(F_\alpha\) satisfies the convolution condition~\cite[equation~3.9]{NSY2019} on a \(D\)-surface-regular graph \(\Gamma\) if \(\alpha>D\), by the same arguments as given in~\cite[section~8.1.1]{NSY2019} for \(\Gamma=\Z^D\).
    Alternatively, one can just use~\eqref{eq:thm-local-approximation-of-time-evolution-final} as outlined in the main text.
}
For slower decaying interactions, one does not expect a thermodynamic limit to exist in general.
In fact, one can expect that the interaction at least need to be summable in the sense
\begin{equation*}
    \sup_{z\in \Gamma} \, \sumstack[lr]{Z\Subset \Gamma\suchthat\\z\in Z} \, \norm{\Psi(Z)}
    <
    \infty
    ,
\end{equation*}
to take a thermodynamic limit.
See for example~\cite[Theorem~7.6.2]{Ruelle1969} for the existence of the thermodynamic limit without explicitly requiring decay in the diameter~\(\diam{Z}\) of the supports.

To discuss the operator localization version of the LRB, we introduce the conditional expectation \(\cexp{Y}{} \colon \alg_\Lambda \to \alg_Y\), which has many crucial properties, which are for example discussed in~\cite[Lemma~4.1]{NSY2019}.
For us, it is only important to know that \(\norm{\cexp{Y}{}} \leq 1\) and that its restriction to \(\alg_Y\) is the identity.
For finite~\(\Lambda\) it is just given by the partial trace on the complement~\cite{BHV2006}
\begin{equation}
    \label{eq:definition-conditional-expectation}
    \cexp{Y}{A}
    =
    \frac
        {1}
        {\trace_{\Lambda\setminus Y}{\unit_{\Lambda\setminus Y}}}
    \, \trace_{\Lambda\setminus Y}{A}
    ,
\end{equation}
and a generalization to infinite volume systems, which we will use in some of the statements, are defined in~\cite[Lemma~4.2]{NSY2019}.
Extensions to fermionic systems~\cite{NSY2017} also exist, and the LRBs we discuss here similarly apply for fermions.

\section{Lieb-Robinson bounds for commuting interactions}
\label{sec:LR-bounds-for-commuting-Hamiltonians}

We now restrict to mutually commuting interactions satisfying \(\commutator[\big]{\Psi(X), \Psi(Y)} = 0\) for all \(X\) and~\(Y\).
We split the discussion according to the decay of the interactions.

\subsection{Finite-range interactions}
\label{sec:LR-bound-finite-range}

The simplest type of interactions are finite-range interactions, i.e.\ interactions for which there exists \(R\) such that \(\Psi(Z)=0\) if \(\diam{Z}>R\).
For commuting, finite-range interactions, the Heisenberg time evolution remains localized for all times in the following sense.

\begin{theorem}[Lieb-Robinson bound for commuting, finite range Hamiltonians]
    \label{cor:LR-bound-commuting-finite-range}
    Let \(\Psi\) be a commuting interaction on a graph \(\Gamma\).
    Assume that \(\Psi\) is of range \(R\), i.e.\ \(\Psi(Z)=0\) if \(\diam{Z}>R\), and uniformly bounded, i.e.\ there exists a constant \(C_\Psi\) such that \(\norm{\Psi(Z)}<C_\Psi\) for all \(Z\Subset \Gamma\).

    Then the corresponding Heisenberg time-evolution satisfies
    \begin{equation*}
        \tau^\Gamma_t(A) = \tau^{X_R}_t(A) \in \alg_{X_R}
        \qquadtext{for all}
        X\Subset \Gamma
        \quadtext{and}
        A\in \alg_X
        ,
    \end{equation*}
    where \(X_R := \Set{z\in \Gamma \given \dist{z,X} \leq R}\) is the \(R\) neighbourhood of \(X\).
    Thus, it also satisfies the following Lieb-Robinson bound:
    For all \(X\), \(Y\Subset \Gamma\) and operators \(A\in \alg_X\), \(B\in \alg_Y\),
    \begin{equation*}
        \norm[\big]{
            \commutator{ \tau^\Gamma_t(A), B }
        }
        =
        0
        \qquadtext{if}
        \dist{X,Y} > R
        .
    \end{equation*}
\end{theorem}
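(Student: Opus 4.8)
The plan is to use commutativity to factor the time evolution through a Hamiltonian supported near $X$, and then exploit the finite range to confine that Hamiltonian to $X_R$.

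Fix $X \Subset \Gamma$, $A \in \alg_X$, and an arbitrary finite $\Lambda \Subset \Gamma$ with $X_R \subset \Lambda$. Split
\[
    H_\Lambda = H_{\mathrm{in}} + H_{\mathrm{out}},
    \qquad
    H_{\mathrm{in}} := \sumstack{Z\subset \Lambda\suchthat\\ Z\cap X \neq \emptyset} \Psi(Z),
    \qquad
    H_{\mathrm{out}} := \sumstack{Z\subset \Lambda\suchthat\\ Z\cap X = \emptyset} \Psi(Z),
\]
both finite sums of bounded operators, hence bounded. Since $\Psi$ is a commuting interaction, $\commutator{H_{\mathrm{in}},H_{\mathrm{out}}}=0$, so $\e^{\I t H_\Lambda} = \e^{\I t H_{\mathrm{in}}}\,\e^{\I t H_{\mathrm{out}}}$. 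Every $Z$ entering $H_{\mathrm{out}}$ satisfies $Z\cap X=\emptyset$, so $\Psi(Z)\in\alg_{\Lambda\setminus X}$ commutes with $A\in\alg_X$; hence $\e^{\I t H_{\mathrm{out}}}$ commutes with $A$ and
\[
    \tau^\Lambda_t(A) = \e^{\I t H_{\mathrm{in}}}\, A\, \e^{-\I t H_{\mathrm{in}}}.
\]

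The key geometric observation is that if $Z\cap X\neq\emptyset$ and $\diam{Z}\leq R$, then $Z\subset X_R$: for $z\in Z$ and any $x_0\in Z\cap X$ one has $\dist{z,X}\leq d(z,x_0)\leq\diam{Z}\leq R$. Because $\Psi(Z)=0$ whenever $\diam{Z}>R$, only such $Z$ contribute to $H_{\mathrm{in}}$, and therefore $H_{\mathrm{in}} = \sum_{Z\subset X_R,\, Z\cap X\neq\emptyset}\Psi(Z)\in\alg_{X_R}$; in particular $H_{\mathrm{in}}$ is independent of $\Lambda$ as long as $X_R\subset\Lambda$. Hence $\tau^\Lambda_t(A) = \e^{\I t H_{\mathrm{in}}}A\e^{-\I t H_{\mathrm{in}}}$ is one and the same operator for every such $\Lambda$, it lies in $\alg_{X_R}$, and since it is eventually constant along any exhausting sequence $\Lambda_n\nearrow\Gamma$ it equals the thermodynamic limit $\tau^\Gamma_t(A)$. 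Applying the identical splitting to $H_{X_R}= H_{\mathrm{in}} + \sum_{Z\subset X_R,\,Z\cap X=\emptyset}\Psi(Z)$ gives, by the same two facts (the two pieces commute, and the second commutes with $A$), that $\tau^{X_R}_t(A) = \e^{\I t H_{\mathrm{in}}}A\e^{-\I t H_{\mathrm{in}}} = \tau^\Gamma_t(A)\in\alg_{X_R}$, which is the first claim. For the commutator bound: if $\dist{X,Y}>R$ then $X_R\cap Y=\emptyset$, so $\tau^\Gamma_t(A)\in\alg_{X_R}$ and $B\in\alg_Y$ act on disjoint tensor factors and therefore commute, giving $\norm{\commutator{\tau^\Gamma_t(A),B}}=0$.

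There is no genuine obstacle here; the only points requiring care are the factorization $\e^{\I t(H_{\mathrm{in}}+H_{\mathrm{out}})}=\e^{\I t H_{\mathrm{in}}}\e^{\I t H_{\mathrm{out}}}$ for commuting bounded operators (standard) and the bookkeeping of which supports $Z$ enter each partial sum. Uniform boundedness of $\Psi$ is not strictly needed for this particular statement, as each relevant Hamiltonian lives in the fixed finite-dimensional algebra $\alg_{X_R}$; it is recorded here for consistency with the quantitative long-range bounds that follow.
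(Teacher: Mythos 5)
Your proof is correct and follows essentially the same route as the paper: your split into $H_{\mathrm{in}}$ and $H_{\mathrm{out}}$ with the commuting-factorization of the exponential is exactly the paper's Lemma~\ref{lem:evolution-interaction-intersecting-support} (the interaction $\Psi_{\cap X}$), combined with the same finite-range observation that $Z\cap X\neq\emptyset$ and $\diam{Z}\leq R$ force $Z\subset X_R$, and your eventual-constancy argument for the limit $\Lambda\nearrow\Gamma$ matches the paper's handling of the infinite-volume extension.
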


This behaviour of commuting finite-range interactions was observed before, for example in~\cite{PHKM2010}.
We note that this behaviour dramatically differs from the general LRB for non-commuting, finite-range Hamiltonians.
Our goal will be to show that this extreme locality persists to some degree for long-range commuting interactions.

\subsection{Short-range interactions}
\label{sec:LR-bound-short-range}

Another class of widely used interactions are so-called short-range or (stretched) exponentially-decaying interactions, which have a bounded interaction norm with decay function
\begin{equation*}
    F_{b,p}(r)
    :=
    \e^{-b \, r^p}
\end{equation*}
for some \(b>0\) and \(p\in \intervaloc{0,1}\).

\begin{theorem}
    \label{thm:LR-bound-short-range}
    Let \(0<b'<b\), \(p\in \intervaloc{0,1}\), \(\kappa>0\) and \(D\in \N\), then there exists a constant \(C_{b,b'}>0\) such that the following holds:
    Let \(\Psi\) be a commuting interaction on a \((\kappa,D)\)-regular graph \(\Gamma\) satisfying \(\norm{\Psi}_{F_{b,p}}<\infty\).
    Then the corresponding Heisenberg time-evolution satisfies the following Lieb-Robinson bounds:
    \begin{thmlist}
        \item \label{item-thm:LR-bound-short-range:commutator}
            For all disjoint \(X\), \(Y\Subset \Gamma\) and operators \(A\in \alg_X\), \(B\in \alg_Y\),
            \begin{equation}
                \label{eq:short-range-LR-bound}
                \norm[\big]{
                    \commutator{ \tau^\Gamma_t(A), B }
                }
                \leq
                C_{b,b'} \, \norm{\Psi}_{F_{b,p}} \, \norm{A} \, \norm{B}
                \, \min\List[\big]{\abs{X}, \abs{Y}} \, \abs{t} \, F_{b',p}\pdist{X,Y}
                .
            \end{equation}
        \item \label{item-thm:LR-bound-short-range:operator-localization}
            For all \(X\Subset \Gamma\), operators \(A\in \alg_X\), and \(r \geq 0\),
            \begin{equation*}
                \norm[\big]{
                    \tau^\Gamma_t(A)
                    -\cexp[\big]{X_r}{\tau_t^{\Gamma}(A)}
                }
                \leq
                C_{b,b'} \, \norm{\Psi}_{F_{b,p}} \, \norm{A}
                \, \abs{X} \, \abs{t} \, F_{b',p}(r)
                .
            \end{equation*}
    \end{thmlist}
\end{theorem}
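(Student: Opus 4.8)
The plan is to use the commutativity of $\Psi$ to factor the Heisenberg dynamics of $H_\Lambda$ \emph{exactly} into a \enquote{local} piece and a \enquote{tail} piece, which collapses both claimed bounds to a single first-order commutator estimate followed by a geometric summation. Fix a finite $\Lambda\Subset\Gamma$ with $X\subset\Lambda$ and any $\Lambda'$ with $X\subset\Lambda'\subset\Lambda$, and set $V:=H_\Lambda-H_{\Lambda'}=\sum_{Z\subset\Lambda,\,Z\not\subset\Lambda'}\Psi(Z)$, which is self-adjoint. Since $\Psi$ is commuting, $\commutator{V,H_{\Lambda'}}=0$, so $\e^{-\I tH_\Lambda}=\e^{-\I tH_{\Lambda'}}\e^{-\I tV}$ and hence $\tau_t^\Lambda(A)=\e^{\I tV}\,\tau_t^{\Lambda'}(A)\,\e^{-\I tV}$. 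I would then subtract $\tau_t^{\Lambda'}(A)$, apply the fundamental theorem of calculus in the conjugation parameter, and use $\commutator{V,H_{\Lambda'}}=0$ once more — so that $\commutator{V,\tau_t^{\Lambda'}(A)}=\tau_t^{\Lambda'}(\commutator{V,A})$ has norm $\norm{\commutator{V,A}}$ — to obtain
\begin{equation*}
    \norm[\big]{\tau_t^\Lambda(A)-\tau_t^{\Lambda'}(A)}
    \leq
    \abs{t}\,\norm[\big]{\commutator{V,A}}
    \leq
    2\,\abs{t}\,\norm{A}\sum_{\substack{Z\subset\Lambda,\,Z\cap X\neq\emptyset\\ Z\not\subset\Lambda'}}\norm{\Psi(Z)}
    ,
\end{equation*}
the last step because $\commutator{\Psi(Z),A}=0$ whenever $Z$ misses $\supp(A)\subset X$.

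The second ingredient is geometric. Every $Z$ in the remaining sum contains a vertex $u\in X$ and a vertex $w\in Z\setminus\Lambda'$, so once $\Lambda'$ engulfs a large neighbourhood of $X$ the distance $\dist{u,w}$ is forced to be large. I would bound $\sum_{Z\ni u,w}\norm{\Psi(Z)}\leq\norm{\Psi}_{F_{b,p}}\,F_{b,p}\pd{u,w}$, sum $w$ over the spheres around $u$ — whose cardinalities are $\leq\kappa\,k^{D-1}$ by $(\kappa,D)$-regularity of $\Gamma$ — and invoke the elementary inequality
\begin{equation*}
    \sum_{\substack{k\in\N\\ k\geq s}}\kappa\,k^{D-1}\,\e^{-b\,k^{p}}
    \leq
    C_{b,b'}\,\e^{-b'\,s^{p}}
    ,
\end{equation*}
valid for all real $s\geq1$ and any $0<b'<b$, $p\in\intervaloc{0,1}$, with $C_{b,b'}$ also depending on $\kappa$, $D$, $p$; it holds because $k^{D-1}\e^{-(b-b')k^{p}/2}$ is bounded uniformly in $k\in\N$ and $\sum_k\e^{-(b-b')k^{p}/2}<\infty$ (both since $p>0$), together with $\e^{-b'k^{p}}\leq\e^{-b's^{p}}$ for $k\geq s$.

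To prove \itemref{item-thm:LR-bound-short-range:operator-localization} I would take $\Lambda'=X_r$: then $Z\cap X\neq\emptyset$ and $Z\not\subset X_r$ force a pair $u\in X$, $w\in Z$ with $\dist{u,w}>r$, so the two displays give $\norm[\big]{\tau_t^\Lambda(A)-\tau_t^{X_r}(A)}\leq C_{b,b'}\,\norm{\Psi}_{F_{b,p}}\,\norm{A}\,\abs{X}\,\abs{t}\,F_{b',p}(r)$ for every finite $\Lambda\supset X_r$. Since $\tau_t^{X_r}(A)\in\alg_{X_r}$ is fixed by $\cexp{X_r}{}$ and $\norm{\cexp{X_r}{}}\leq1$, the triangle inequality bounds $\norm[\big]{\tau_t^\Lambda(A)-\cexp{X_r}{\tau_t^\Lambda(A)}}$ by twice that quantity; letting $\Lambda\nearrow\Gamma$ — legitimate because the first display applied to nested finite subgraphs $\Lambda_n\subset\Lambda_{n'}$ makes $(\tau_t^{\Lambda_n}(A))_n$ Cauchy in norm, which defines $\tau_t^\Gamma(A)$ — yields \itemref{item-thm:LR-bound-short-range:operator-localization}. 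For \itemref{item-thm:LR-bound-short-range:commutator} I would take instead $\Lambda'=\Lambda\setminus Y$: then $\tau_t^{\Lambda'}(A)\in\alg_{\Lambda\setminus Y}$ commutes with $B\in\alg_Y$, hence $\norm[\big]{\commutator{\tau_t^\Lambda(A),B}}=\norm[\big]{\commutator{\tau_t^\Lambda(A)-\tau_t^{\Lambda'}(A),B}}\leq2\norm{B}\,\norm[\big]{\tau_t^\Lambda(A)-\tau_t^{\Lambda'}(A)}$, and the sum in the first display now runs over $Z$ with $Z\cap X\neq\emptyset$ and $Z\cap Y\neq\emptyset$; I would bound it by $\norm{\Psi}_{F_{b,p}}\sum_{u\in X}\sum_{w\in Y}F_{b,p}\pd{u,w}$, sum first over whichever of $X$, $Y$ has fewer elements (using $\dist{u,w}\geq\dist{X,Y}$ and the inequality above for the remaining sum), and let $\Lambda\nearrow\Gamma$, obtaining \eqref{eq:short-range-LR-bound}.

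The real content lies entirely in the identity $\e^{-\I tH_\Lambda}=\e^{-\I tH_{\Lambda'}}\e^{-\I tV}$: this is exactly where commutativity enters, and it is why the final bound is linear in $\abs{t}$ rather than requiring the iterated Dyson-series resummation of the usual Lieb-Robinson proof. No step is a serious obstacle; the only places needing genuine (if routine) care are the bookkeeping ones — converting the combinatorial condition $Z\not\subset\Lambda'$ into the distance inequality that the interaction norm can digest, and, for \itemref{item-thm:LR-bound-short-range:operator-localization}, handling the conditional expectation and the thermodynamic limit.
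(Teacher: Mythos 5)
Your proposal is correct and follows essentially the same route as the paper: commutativity is used to factor the propagator as \(\e^{\I t H_\Lambda}=\e^{\I t H_{\Lambda'}}\e^{\I t V}\), a single fundamental-theorem-of-calculus step gives \(\norm{\tau_t^\Lambda(A)-\tau_t^{\Lambda'}(A)}\lesssim \abs{t}\sum_{Z\cap X\neq\emptyset,\,Z\not\subset\Lambda'}\norm{\Psi(Z)}\), and the two bounds then follow from the choices \(\Lambda'=\Lambda\setminus Y\) and \(\Lambda'=X_r\) (with the conditional-expectation and thermodynamic-limit arguments), plus the stretched-exponential sphere summation. Your direct conjugation by \(\e^{\I tV}\) with \(\commutator{V,\tau_t^{\Lambda'}(A)}=\tau_t^{\Lambda'}(\commutator{V,A})\) is only a cosmetic variant of the paper's interpolation between \(\tau^{\Lambda}_{t,\cap X}\) and \(\tau^{\Lambda'}_{t,\cap X}\), so no substantive difference remains.
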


A more general result for short-range interactions, assuming only that the commutator of interaction terms is small (but not necessarily zero), was proven by \textcite{HHKL2021}.
In the limit of vanishing commutators, their bound gives the same LRB\@.
The result stated here, gives a logarithmic light cone, which \textcite{TB2024} call \enquote{slow dynamics}.
They then prove stability of the LRBs to local perturbations similar to~\eqref{eq:LR-bound-perturbed-Hamiltonian} but for extended perturbations given by an interaction.

\subsection{Long-range interactions}
\label{sec:LR-bound-long-range}

Recently, many efforts were made to prove LRBs for long-range interactions, which are described by the decay function
\begin{equation*}
    F_\alpha(r) := (1+r)^{-\alpha}
    .
\end{equation*}
While this is quite involved for general (non-commuting) Hamiltonians, one can very easily obtain LRBs for commuting Hamiltonians.
Moreover, it is often necessary to distinguish different regimes in \(\alpha\).
In particular, on a \(D\)-surface-regular graph, one cannot expect a thermodynamic limit of the dynamics for \(0<\alpha\leq D\), which is the reason for us to provide two distinct statements.
The first works also for infinite graphs and allows for two different bounds, one scales better in the support of the two observables, the other scales better in the distance between them.

\begin{theorem}
    \label{thm:LRB-long-range-alpha-ge-nu}
    Let \(\kappa>0\), \(D\in \N\) and \(\alpha>D\), then there exists a constant \(C>0\) such that the following holds:
    Let \(\Psi\) be a commuting interaction on a \((\kappa,D)\)-surface-regular graph \(\Gamma\) satisfying \(\norm{\Psi}_{F_\alpha}<\infty\).
    Then the corresponding Heisenberg time-evolution satisfies the following Lieb-Robinson bounds:
    \begin{thmlist}
        \item \label{item-thm:LRB-long-range-alpha-ge-nu:commutator}
            For all disjoint \(X\), \(Y\Subset \Gamma\) and operators \(A\in \alg_X\), \(B\in \alg_Y\),
            \begin{align}
                \label{eq:long-range-LR-bound-alpha-ge-nu}
                \norm[\big]{
                    \commutator{ \tau^\Gamma_t(A), B }
                }
                &\leq
                C \, \norm{\Psi}_{F_\alpha} \, \norm{A} \, \norm{B}
                \, \min\List[\big]{\abs{X}, \abs{Y}} \, \abs{t} \, F_{\alpha-D}\pdist{X,Y}
                ,
            \shortintertext{and}
                \label{eq:long-range-LR-bound-alpha-ge-0}
                \norm[\big]{
                    \commutator{ \tau^\Gamma_t(A), B }
                }
                &\leq
                4 \, \norm{\Psi}_{F_\alpha} \, \norm{A} \, \norm{B}
                \, \abs{X} \, \abs{Y} \, \abs{t} \, F_\alpha\pdist{X,Y}
                .
            \end{align}
        \item \label{item-thm:LRB-long-range-alpha-ge-nu:operator-localization}
            For all \(X\Subset \Gamma\), operators \(A\in \alg_X\), and \(r \geq 0\),
            \begin{equation}
                \label{eq:long-range-LR-bound-alpha-ge-nu-operator-localization}
                \norm[\big]{
                    \tau^\Gamma_t(A)
                    -\cexp[\big]{X_r}{\tau_t^{\Gamma}(A)}
                }
                \leq
                C \, \norm{\Psi}_{F_\alpha} \, \norm{A}
                \, \abs{X} \, \abs{t} \, F_{\alpha-D}(r)
                .
            \end{equation}
    \end{thmlist}

\end{theorem}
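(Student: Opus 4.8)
The plan is to derive all three bounds from a single deterministic norm identity that exploits commutativity directly (this is essentially \cref{thm:local-approximation-of-time-evolution}), and then to supply elementary geometry. Fix a finite \(\Lambda\), sets \(X\subset\Lambda'\subset\Lambda\), and \(A\in\alg_X\). Writing \(H_\Lambda=H_{\Lambda'}+W\) with \(W:=H_\Lambda-H_{\Lambda'}=\sum_{Z\subset\Lambda,\,Z\not\subset\Lambda'}\Psi(Z)\), mutual commutativity of all interaction terms gives \(\commutator{H_{\Lambda'},W}=0\), hence \(\e^{\I tH_\Lambda}=\e^{\I tH_{\Lambda'}}\e^{\I tW}\) and \(\tau^\Lambda_t(A)=\e^{\I tH_{\Lambda'}}\paren[\big]{\e^{\I tW}A\e^{-\I tW}}\e^{-\I tH_{\Lambda'}}\); since conjugation by the unitary \(\e^{\I tH_{\Lambda'}}\) is isometric, \(\norm{\tau^\Lambda_t(A)-\tau^{\Lambda'}_t(A)}=\norm{\e^{\I tW}A\e^{-\I tW}-A}\). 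The summands of \(W\) with support disjoint from \(X\) commute with \(A\) and cancel; collecting the rest into \(W_X:=\sum_{Z\subset\Lambda,\,Z\cap X\neq\emptyset,\,Z\not\subset\Lambda'}\Psi(Z)\) and applying Duhamel's formula to \(s\mapsto\e^{\I sW_X}A\e^{-\I sW_X}\) yields the engine
\begin{equation*}
    \norm[\big]{\tau^\Lambda_t(A)-\tau^{\Lambda'}_t(A)}
    \leq
    \abs{t}\,\norm{\commutator{W_X,A}}
    \leq
    2\,\abs{t}\,\norm{A}
    \sum_{\substack{Z\Subset\Gamma:\ Z\cap X\neq\emptyset,\\ Z\not\subset\Lambda'}}
    \norm{\Psi(Z)}
    ,
\end{equation*}
where enlarging the sum to all \(Z\Subset\Gamma\) (not just \(Z\subset\Lambda\)) is a harmless overcount that renders the bound independent of \(\Lambda\).

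Next I would estimate this sum in the two geometric configurations that occur. Taking \(\Lambda'=X_r\): a contributing \(Z\) meets \(X\) in some point \(x\) and contains a point \(w\notin X_r\), hence \(d(x,w)\geq\dist{w,X}>r\); summing over \(x\in X\) and \(w\) with \(d(x,w)>r\), using \(\sum_{Z\ni x,w}\norm{\Psi(Z)}\leq\norm{\Psi}_{F_\alpha}F_\alpha\pdist{x,w}\) and the shell sum \(\sum_{k>r}\kappa\,k^{D-1}(1+k)^{-\alpha}\leq C_{\kappa,D,\alpha}\,(1+r)^{-(\alpha-D)}\) provided by \((\kappa,D)\)-surface-regularity (the one place the hypothesis \(\alpha>D\) is used) gives \(\sum_Z\norm{\Psi(Z)}\leq C\,\abs{X}\,\norm{\Psi}_{F_\alpha}\,F_{\alpha-D}(r)\). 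Taking instead \(\Lambda'=\Lambda\setminus Y\) with \(X\cap Y=\emptyset\): a contributing \(Z\) meets both \(X\) and \(Y\), so it contains \(x\in X\), \(y\in Y\) with \(d(x,y)\geq\dist{X,Y}\), and summing over \(x\in X\), \(y\in Y\) gives \(\sum_Z\norm{\Psi(Z)}\leq\abs{X}\,\abs{Y}\,\norm{\Psi}_{F_\alpha}\,F_\alpha\pdist{X,Y}\), with no shell sum at all.

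Then I would pass to the thermodynamic limit and read off the statements. Applying the engine with \(X_r\subset\Lambda_1\subset\Lambda_2\) and \(\Lambda'=\Lambda_1\) together with the first estimate shows \(\norm{\tau^{\Lambda_2}_t(A)-\tau^{\Lambda_1}_t(A)}\leq C\,\abs{t}\,\abs{X}\,\norm{A}\,\norm{\Psi}_{F_\alpha}\,F_{\alpha-D}(r)\to0\) as \(r\to\infty\), so \((\tau^{\Lambda_n}_t(A))_n\) is Cauchy uniformly for \(t\) in compacts; hence the limit in~\eqref{eq:definition-thermodynamic-limit} exists, is independent of the exhausting sequence, extends to a strongly continuous group of \(*\)-automorphisms of \(\alg_\Gamma\), and every finite-volume estimate survives the limit. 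For~\eqref{eq:long-range-LR-bound-alpha-ge-nu-operator-localization}: \(\tau^{X_r}_t(A)\in\alg_{X_r}\) is fixed by \(\cexp{X_r}{}\), so \(\norm{\cexp{X_r}{}}\leq1\) and the triangle inequality give \(\norm{\tau^\Gamma_t(A)-\cexp{X_r}{\tau^\Gamma_t(A)}}\leq2\,\norm{\tau^\Gamma_t(A)-\tau^{X_r}_t(A)}\), and the first estimate finishes it. For~\eqref{eq:long-range-LR-bound-alpha-ge-0}: \(\tau^{\Lambda\setminus Y}_t(A)\in\alg_{\Lambda\setminus Y}\) commutes with \(B\in\alg_Y\), so \(\norm{\commutator{\tau^\Lambda_t(A),B}}\leq2\,\norm{B}\,\norm{\tau^\Lambda_t(A)-\tau^{\Lambda\setminus Y}_t(A)}\), and the second estimate followed by \(\Lambda\nearrow\Gamma\) gives the bound with constant exactly \(4\). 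For~\eqref{eq:long-range-LR-bound-alpha-ge-nu}: for integer \(r<\dist{X,Y}\) the set \(X_r\) misses \(Y\), so \(\cexp{X_r}{\tau^\Gamma_t(A)}\) commutes with \(B\) and \(\norm{\commutator{\tau^\Gamma_t(A),B}}\leq2\,\norm{B}\,\norm{\tau^\Gamma_t(A)-\cexp{X_r}{\tau^\Gamma_t(A)}}\); choosing \(r=\dist{X,Y}-1\) (legitimate since disjoint sets have distance \(\geq1\)) and absorbing the fixed factor \(F_{\alpha-D}(\dist{X,Y}-1)/F_{\alpha-D}(\dist{X,Y})\leq2^{\alpha-D}\) into \(C\) gives the \(\abs{X}\) version, and running the same argument on \(\tau^\Gamma_{-t}(B)\) localized near \(Y\), via \(\norm{\commutator{\tau^\Gamma_t(A),B}}=\norm{\commutator{A,\tau^\Gamma_{-t}(B)}}\), gives the \(\abs{Y}\) version, hence the \(\min\).

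I do not expect any single deep step: the commutativity collapse \(\e^{\I tH_\Lambda}=\e^{\I tH_{\Lambda'}}\e^{\I tW}\) replaces the entire Lieb-Robinson iteration that is needed for non-commuting interactions. The one thing demanding care is uniformity and bookkeeping---verifying that the shell-sum constant depends only on \((\kappa,D,\alpha)\), handling the degenerate cases \(r=0\) and \(\dist{X,Y}=1\) (which only cost fixed factors), and checking that the passage \(\Lambda\nearrow\Gamma\) is compatible with the infinite-volume conditional expectation \(\cexp{X_r}{}\)---all of which is routine.
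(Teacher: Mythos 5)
Your proposal is correct and follows essentially the same route as the paper: the commutativity collapse reducing everything to interaction terms that intersect \(X\) and cross \(\Lambda\setminus\Lambda'\), a Duhamel bound giving the factor \(2\,\abs{t}\,\norm{A}\), the shell-sum estimate on \((\kappa,D)\)-surface-regular graphs for \(\alpha>D\), the factor-\(2\) conditional-expectation trick, and the thermodynamic limit via the same local-approximation estimate. The only (cosmetic) deviations are that you factorize \(\e^{\I t H_\Lambda}=\e^{\I t H_{\Lambda'}}\e^{\I t W}\) and Duhamel on \(W_X\) instead of interpolating \(\tau^{\Lambda}_{s,\cap X}\circ\tau^{\Lambda'}_{t-s,\cap X}\), and you obtain the \(\min\{\abs{X},\abs{Y}\}\) commutator bound \eqref{eq:long-range-LR-bound-alpha-ge-nu} by routing through the localization bound with \(r=\dist{X,Y}-1\) plus time-reversal symmetry (costing only a constant), whereas the paper bounds the double sum \(\sum_{x\in X}\sum_{y\in Y}F_\alpha\pd{x,y}\) by \(\min\{\abs{X},\abs{Y}\}\sup_x\sum_{y:\dd{x,y}\geq\dist{X,Y}}F_\alpha\pd{x,y}\) directly.
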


This gives a linear light cone (a ballistic LRB) in the sense of local approximations for \(\alpha>D+1\) which is an improvement over the tight \(\alpha > 2 D + 1\) found for non-commuting two-body interactions~\cite{TGB2021}.

But even for \(\alpha>0\), we can obtain a bound as in~\eqref{eq:long-range-LR-bound-alpha-ge-0} on arbitrarily large, but finite graphs.

\begin{theorem}
    \label{thm:LR-bound-long-range-alpha-ge-0}
    Let \(\alpha>0\) and \(\Lambda\) be a finite graph.
    Let \(\Psi\) be a commuting interaction on \(\Lambda\) satisfying \(\norm{\Psi}_{F_\alpha}<\infty\).
    Then the corresponding Heisenberg time-evolution satisfies the Lieb-Robinson bound given in~\eqref{eq:long-range-LR-bound-alpha-ge-0}.
\end{theorem}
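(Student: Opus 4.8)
The plan is to follow the direct, commutativity-driven strategy announced in the introduction: compare the full Heisenberg evolution $\tau^\Lambda_t$ with the evolution $\tau^{\Lambda'}_t$ generated by only those interaction terms that avoid $Y$, where $\Lambda' := \Lambda\setminus Y$. Fix disjoint $X$, $Y\subset\Lambda$ and split $H_\Lambda = H_{\Lambda'} + H_\partial$, where $H_\partial := \sum_{Z\subset\Lambda,\;Z\cap Y\neq\emptyset}\Psi(Z)$ collects exactly the terms removed in passing from $\Lambda$ to $\Lambda'$. Because $\Psi$ is commuting, $H_{\Lambda'}$ and $H_\partial$ commute, so $\e^{\I t H_\Lambda} = \e^{\I t H_{\Lambda'}}\,\e^{\I t H_\partial}$ and hence $\tau^\Lambda_t(A) - \tau^{\Lambda'}_t(A) = \e^{\I t H_{\Lambda'}}\bigl(\e^{\I t H_\partial}A\,\e^{-\I t H_\partial}-A\bigr)\e^{-\I t H_{\Lambda'}}$, which has the same operator norm as $\e^{\I t H_\partial}A\,\e^{-\I t H_\partial}-A$. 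Since $\tau^{\Lambda'}_t(A)\in\alg_{\Lambda'}$ commutes with $B\in\alg_Y$, this yields $\norm[\big]{\commutator{\tau^\Lambda_t(A),B}} = \norm[\big]{\commutator{\tau^\Lambda_t(A)-\tau^{\Lambda'}_t(A),\,B}} \leq 2\,\norm{B}\,\norm[\big]{\e^{\I t H_\partial}A\,\e^{-\I t H_\partial}-A}$, reducing the task to an estimate for this last difference---essentially the local-approximation bound for the dynamics sketched in the introduction, specialized to $\Lambda' = \Lambda\setminus Y$.

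Next I would bound the remaining difference by the fundamental theorem of calculus, $\e^{\I t H_\partial}A\,\e^{-\I t H_\partial}-A = \I\int_0^t \e^{\I s H_\partial}\commutator{H_\partial,A}\,\e^{-\I s H_\partial}\,\mathrm{d}s$, which gives $\norm[\big]{\e^{\I t H_\partial}A\,\e^{-\I t H_\partial}-A}\leq\abs{t}\,\norm{\commutator{H_\partial,A}}$. Since $A\in\alg_X$ commutes with $\Psi(Z)$ whenever $Z\cap X=\emptyset$, only sets $Z$ that meet both $X$ and $Y$ contribute, so $\commutator{H_\partial,A} = \sum_{Z:\,Z\cap X\neq\emptyset,\;Z\cap Y\neq\emptyset}\commutator{\Psi(Z),A}$, and the triangle inequality gives $\norm{\commutator{H_\partial,A}}\leq 2\,\norm{A}\sum_{Z:\,Z\cap X\neq\emptyset,\;Z\cap Y\neq\emptyset}\norm{\Psi(Z)}$.

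The last step is a crude counting bound for that sum: for every $Z$ meeting both $X$ and $Y$, any choice of $x\in Z\cap X$ and $y\in Z\cap Y$ makes $Z$ one of the sets containing $\{x,y\}$, so summing over all such pairs only overcounts, and the definition of $\norm{\Psi}_{F_\alpha}$ gives $\sum_{Z:\,Z\cap X\neq\emptyset,\;Z\cap Y\neq\emptyset}\norm{\Psi(Z)} \leq \sum_{x\in X}\sum_{y\in Y}\,\sum_{Z\ni x,y}\norm{\Psi(Z)} \leq \norm{\Psi}_{F_\alpha}\sum_{x\in X}\sum_{y\in Y}F_\alpha\bigl(d(x,y)\bigr)$. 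Since $F_\alpha$ is decreasing and $d(x,y)\geq\dist{X,Y}$, this is at most $\norm{\Psi}_{F_\alpha}\,\abs{X}\,\abs{Y}\,F_\alpha(\dist{X,Y})$, and chaining the three estimates reproduces exactly the bound~\eqref{eq:long-range-LR-bound-alpha-ge-0}, constant $4$ included. I do not expect any genuine obstacle here: this is the short, commutativity-driven argument promised in the introduction. The only points needing mild care are that the overcounting in the last display is what allows us to avoid any $\kappa$- or $D$-dependent packing estimate---so the hypothesis $\alpha>D$ of \cref{thm:LRB-long-range-alpha-ge-nu} is genuinely not needed---and that finiteness of $\Lambda$ is used only to keep every sum and operator exponential manifestly well-defined, so that no thermodynamic limit (which fails for $\alpha\leq D$) is ever invoked.
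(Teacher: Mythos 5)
Your proof is correct and takes essentially the same route as the paper: commutativity lets you factor off exactly the interaction terms meeting \(Y\) (the paper's \cref{lem:evolution-interaction-intersecting-support} and \cref{thm:local-approximation-of-time-evolution} with \(\Lambda'=\Lambda\setminus Y\)), a Duhamel/FTC step yields the factor \(2\,\norm{A}\,\abs{t}\sum_Z\norm{\Psi(Z)}\) over sets meeting both \(X\) and \(Y\), and the same overcounting against \(\norm{\Psi}_{F_\alpha}\) together with the trivial bound \eqref{eq:trivial-bound-double-sum-F} gives \eqref{eq:long-range-LR-bound-alpha-ge-0} with the constant \(4\). The only cosmetic difference is that you apply the fundamental theorem of calculus to the single conjugation by \(\e^{\I t H_\partial}\) after writing \(\e^{\I t H_\Lambda}=\e^{\I t H_{\Lambda\setminus Y}}\,\e^{\I t H_\partial}\), whereas the paper interpolates between the two dynamics \(\tau^{\Lambda}_{t,\cap X}\) and \(\tau^{\Lambda'}_{t,\cap X}\); both produce the identical estimate.
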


This gives some kind of linear light cone for commuting long-range interactions with \(\alpha>1\), which are not even summable in dimensions \(D>1\).
In this parameter range, the Heisenberg dynamics do not converge to a thermodynamic limit in general.
Hence, a bound like~\eqref{eq:long-range-LR-bound-alpha-ge-nu} cannot hold, as it implies convergence of the Heisenberg dynamics in the thermodynamic limit.
Thus, we cannot discuss LRBs for the limiting dynamics on infinite graphs.

\begin{remark}
    \label{remark:optimality}
    The Lieb-Robinson bounds in \cref{thm:LRB-long-range-alpha-ge-nu} are sharp.
    For example, to see that~\eqref{eq:long-range-LR-bound-alpha-ge-nu} is sharp, let \(X = \List{x}\), \(Y = \List{y}\) and choose the interaction with only non-vanishing term \(\Psi(\List{x,y}) = F_\alpha\pdist{x,y} \, U\), where \(U=\e^{\I\frac{\pi}{4}(\unit_x+\pauliZ_x)(\unit_y-\pauliX_y)}\) is a standard CNOT gate between sites \(x\) and \(y\).
    The CNOT gate \(U\) satisfies \(\e^{\I t U} = \cos(t) \, \unit + \I \sin(t) \, U\) and \(\norm{U} = 1\).
    Moreover, let \(A=\pauliX_x\) be a spin flip and \(B=\pauliZ_y\) measure the \(\componentZ\)-component of the qubit at site \(y\).
    Now, let \(\psi\) be any pure state on \(\Lambda\setminus\List{x,y}\).
    Then
    \begin{equation*}
        \commutator{ \tau^\Lambda_t(A), B }
        \, \ket{\downarrow \downarrow}_{xy} \otimes \psi
        =
        - 2 \, \I \sin\paren[\big]{t \, F_\alpha\pdist{x,y}}
        \, \ket{\uparrow \uparrow}_{xy} \otimes \psi
        ,
    \end{equation*}
    and thus
    \begin{equation*}
        \norm[\big]{
            \commutator{ \tau^\Lambda_t(A), B }
        }
        \geq
        2 \, \abs[\big]{
            \sin\paren[\big]{t \, F_\alpha\pdist{x,y}}
        }
        ,
    \end{equation*}
    which for small \(t \, F_\alpha\pdist{x,y}\) has the same scaling as~\eqref{eq:long-range-LR-bound-alpha-ge-0}.
    In \cref{sec:optimality-of-the-LRB}, we present a generalization of this example that extends sharpness of~\eqref{eq:long-range-LR-bound-alpha-ge-0} to the case of general supports \(X\), \(Y\) and we also show sharpness of the other bounds from \cref{thm:LRB-long-range-alpha-ge-nu}.
\end{remark}

\subsection{Proof of the Lieb-Robinson bound for commuting Hamiltonians}

The proof of all the stated LRBs boils down to a very simple observation, which was already made before, e.g.\ in~\cite{PHKM2010}.
It drastically simplifies the Heisenberg evolution of local operators by only considering the interaction terms, which overlap with the support of the operator.

\begin{lemma}\label{lem:evolution-interaction-intersecting-support}
    Let \(\Psi\) be a commuting interaction on a finite graph \(\Lambda\).
    For all \(X \subset \Lambda\) and operators \(A\in \alg_X\), the corresponding Heisenberg time-evolution satisfies
    \begin{equation*}
        \tau_t^{\Lambda}(A)
        =
        \tau_{t,{\cap X}}^{\Lambda}(A)
        ,
    \end{equation*}
    where \(\tau_{t,{\cap X}}^{\Lambda}\) is generated by the interaction
    \begin{equation}
        \label{eq:interaction-intersecting-X}
        \Psi_{\cap X}(Z)
        :=
        \begin{cases*}
            \Psi(Z) & if \(X\cap Z\neq \emptyset\) and \\
            0       & else.
        \end{cases*}
    \end{equation}
\end{lemma}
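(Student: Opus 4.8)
The plan is to split the Hamiltonian as $H_\Lambda = H^{\cap X}_\Lambda + H^{\not\cap X}_\Lambda$, where $H^{\cap X}_\Lambda := \sum_{Z\subset\Lambda} \Psi_{\cap X}(Z) = \sum_{Z\subset\Lambda:\, Z\cap X\neq\emptyset} \Psi(Z)$ collects the interaction terms meeting $X$ and $H^{\not\cap X}_\Lambda := \sum_{Z\subset\Lambda:\, Z\cap X=\emptyset} \Psi(Z)$ collects the rest. First I would record two elementary facts. (1) Since $\Psi$ is commuting, $[\Psi(Z),\Psi(Z')]=0$ for \emph{all} pairs $Z,Z'$; in particular every term of $H^{\cap X}_\Lambda$ commutes with every term of $H^{\not\cap X}_\Lambda$, so $[H^{\cap X}_\Lambda, H^{\not\cap X}_\Lambda]=0$. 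This makes the exponential factorize: $\e^{-\I t H_\Lambda} = \e^{-\I t H^{\cap X}_\Lambda}\,\e^{-\I t H^{\not\cap X}_\Lambda}$. (2) Every interaction term $\Psi(Z)$ with $Z\cap X=\emptyset$ is supported in $\Lambda\setminus X$, hence acts on a tensor factor disjoint from the support of $A\in\alg_X$; therefore $[\Psi(Z),A]=0$ for all such $Z$, and consequently $[H^{\not\cap X}_\Lambda, A]=0$, i.e.\ $\e^{-\I t H^{\not\cap X}_\Lambda}$ commutes with $A$.

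With these in hand the computation is immediate. Using (1) to factorize and then (2) to cancel the $H^{\not\cap X}_\Lambda$ factors:
\begin{align*}
    \tau_t^\Lambda(A)
    &=
    \e^{\I t H^{\not\cap X}_\Lambda}\,\e^{\I t H^{\cap X}_\Lambda}\, A \,\e^{-\I t H^{\cap X}_\Lambda}\,\e^{-\I t H^{\not\cap X}_\Lambda}
    \\
    &=
    \e^{\I t H^{\not\cap X}_\Lambda}\,\e^{-\I t H^{\not\cap X}_\Lambda}\,\e^{\I t H^{\cap X}_\Lambda}\, A \,\e^{-\I t H^{\cap X}_\Lambda}
    =
    \e^{\I t H^{\cap X}_\Lambda}\, A \,\e^{-\I t H^{\cap X}_\Lambda}
    =
    \tau_{t,\cap X}^\Lambda(A)
    ,
\end{align*}
where in the second step I moved the right $\e^{-\I t H^{\not\cap X}_\Lambda}$ past $\e^{-\I t H^{\cap X}_\Lambda}$ (allowed by (1)) and then past $A$ (allowed by (2)), so it annihilates the left factor $\e^{\I t H^{\not\cap X}_\Lambda}$. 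This is exactly the claim.

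I do not expect any genuine obstacle here; the content is entirely the two commutation observations, and the only care needed is bookkeeping. The one point worth stating explicitly is that the commutativity hypothesis $[\Psi(X),\Psi(Y)]=0$ is used in its full strength (for all pairs, not merely overlapping ones) to justify both the factorization of $\e^{-\I t H_\Lambda}$ and the reordering inside the conjugation; without it neither step survives. Since $\Lambda$ is finite, all operators are bounded and the exponentials are entire, so no convergence or domain issues arise, and the identity $\e^{A+B}=\e^A\e^B$ for commuting bounded $A,B$ is available directly.
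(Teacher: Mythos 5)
Your proof is correct and follows essentially the same route as the paper: the paper factorizes \(\e^{\I t H_\Lambda} = \e^{\I t H_{\Lambda\setminus X}}\,\e^{\I t (H_\Lambda - H_{\Lambda\setminus X})}\) using commutativity and cancels the \(H_{\Lambda\setminus X}\) factors against \(A\), which is exactly your decomposition \(H_\Lambda = H^{\cap X}_\Lambda + H^{\not\cap X}_\Lambda\) since \(H^{\not\cap X}_\Lambda = H_{\Lambda\setminus X}\). The only quibble is your remark that commutativity is used \enquote{in its full strength, for all pairs, not merely overlapping ones}: disjointly supported terms commute automatically, so the hypothesis is genuinely needed only for the overlapping cross-pairs between the two groups — but this does not affect the argument.
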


\begin{proof}
    Due to commutativity of the interaction, the exponential satisfies
    \begin{equation*}
        \e^{\I t H_{\Lambda}}
        =
        \e^{\I t H_{\Lambda \setminus X}}
        \, \e^{\I t (H_{\Lambda} - H_{\Lambda \setminus X})}
        .
    \end{equation*}
    Using this for both exponentials in~\eqref{eq:definition-time-evolution} and realizing that \(\commutator[\big]{H_{\Lambda \setminus X}, A} = 0\) due to their disjoint support, one has
    \begin{equation*}
        \tau_t^{\Lambda}(A)
        =
        \e^{-\I t (H_{\Lambda} - H_{\Lambda \setminus X})} \, A \, \e^{\I t (H_{\Lambda} - H_{\Lambda \setminus X})}
        =
        \tau_{t,{\cap X}}^{\Lambda}(A)
        .
        \qedhere
    \end{equation*}
\end{proof}

From here, the statement about finite-range Hamiltonians on finite graphs in \cref{cor:LR-bound-commuting-finite-range} is immediately clear.
The extension to infinite graphs follows from the next theorem, and we comment on the details at the end of this section.

\Cref{lem:evolution-interaction-intersecting-support} also allows to approximate the time-evolution locally.

\begin{theorem}[Local approximation of Heisenberg evolution]
    \label{thm:local-approximation-of-time-evolution}
    Let \(\Psi\) be a commuting interaction on a finite graph \(\Lambda\).
    For all \(X \subset \Lambda' \subset \Lambda\) and \(A\in \alg_X\) it holds that
    \begin{equation*}
        \norm[\big]{
            \tau_t^\Lambda(A)
            - \tau_t^{\Lambda'}(A)
        }
        \leq
        2 \, \norm{A} \, \abs{t}
        \, \sumstack[lr]{
            Z \subset \Lambda \suchthat\\
            Z \cap X \neq \emptyset,\\
            Z \cap \Lambda\setminus\Lambda' \neq \emptyset
        } \, \norm[\big]{\Psi(Z)}
        .
    \end{equation*}
    Moreover, if \(\norm{\Psi}_F<\infty\) for some decaying function \(F\colon \R_{\geq 0} \to \R_{>0}\), then
    \begin{equation}
        \label{eq:thm-local-approximation-of-time-evolution-final}
        \norm[\big]{
            \tau_t^\Lambda(A)
            - \tau_t^{\Lambda'}(A)
        }
        \leq
        2 \, \norm{A} \, \norm{\Psi}_F \, \abs{t}
        \sum_{x\in X} \sumstack[r]{y\in \Lambda\setminus\smash{\cramped[\scriptstyle]\Lambda'}} \, F\pd{x,y}
        .
    \end{equation}
\end{theorem}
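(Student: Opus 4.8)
The plan is to reduce both time evolutions to the ``intersecting'' dynamics of \cref{lem:evolution-interaction-intersecting-support} and then compare them by a one-step Duhamel estimate. First I would apply \cref{lem:evolution-interaction-intersecting-support} twice: once on $\Lambda$ with $A\in\alg_X$, and once on $\Lambda'$ — here one only needs that the restriction of a commuting interaction to subsets of $\Lambda'$ is again commuting and that $X\subset\Lambda'$. This yields
\[
    \tau_t^\Lambda(A) = \e^{\I t H^\Lambda_{\cap X}}\,A\,\e^{-\I t H^\Lambda_{\cap X}},
    \qquad
    \tau_t^{\Lambda'}(A) = \e^{\I t H^{\Lambda'}_{\cap X}}\,A\,\e^{-\I t H^{\Lambda'}_{\cap X}},
\]
with $H^\Lambda_{\cap X} := \sum_{Z\subset\Lambda,\,Z\cap X\neq\emptyset}\Psi(Z)$ and $H^{\Lambda'}_{\cap X}$ defined analogously with $Z\subset\Lambda'$. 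The decisive gain is that the difference of generators is supported only on sets meeting \emph{both} $X$ and $\Lambda\setminus\Lambda'$: since $Z\not\subset\Lambda'$ is equivalent to $Z\cap(\Lambda\setminus\Lambda')\neq\emptyset$,
\[
    B := H^\Lambda_{\cap X} - H^{\Lambda'}_{\cap X}
    = \sumstack[lr]{Z\subset\Lambda\suchthat\\ Z\cap X\neq\emptyset,\\ Z\cap(\Lambda\setminus\Lambda')\neq\emptyset}\Psi(Z).
\]

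Next I would set $W(t) := \e^{\I t H^\Lambda_{\cap X}}\e^{-\I t H^{\Lambda'}_{\cap X}}$, so that $\tau_t^\Lambda(A) = W(t)\,\tau_t^{\Lambda'}(A)\,W(t)^*$ simply by cancellation of the inner exponentials (this identity uses no commutativity). Writing $C := \tau_t^{\Lambda'}(A)$ with $\norm{C}=\norm{A}$ and inserting $W(t)^*$ once,
\[
    \norm[\big]{\tau_t^\Lambda(A) - \tau_t^{\Lambda'}(A)}
    = \norm[\big]{W(t)\,C\,W(t)^* - C}
    \leq 2\,\norm{A}\,\norm[\big]{W(t) - \unit}.
\]
Since the $\Psi(Z)$ mutually commute, $H^\Lambda_{\cap X}$ and $H^{\Lambda'}_{\cap X}$ commute and $W(t)=\e^{\I t B}$; as $B$ is self-adjoint, $\norm{\e^{\I t B}-\unit}=\sup_{\lambda\in\mathrm{spec}(B)}\abs{\e^{\I t\lambda}-1}\leq\abs{t}\,\norm{B}$ (equivalently, the standard Duhamel bound $\norm{W(t)-\unit}\leq\abs{t}\,\norm{B}$ gives the same without invoking commutativity a second time). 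Combined with $\norm{B}\leq\sum_Z\norm{\Psi(Z)}$ over the sets indicated above, this is exactly the first claimed inequality.

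For the second inequality I would only have to estimate that sum. Each admissible $Z$ contains some $x\in X$ and some $y\in\Lambda\setminus\Lambda'$, so passing to the sum over all such pairs (an over-count, hence harmless because $\norm{\Psi(Z)}\geq0$) and using the definition~\eqref{eq:def-interaction-norm} of $\norm{\Psi}_F$,
\[
    \sumstack[lr]{Z\subset\Lambda\suchthat\\ Z\cap X\neq\emptyset,\\ Z\cap(\Lambda\setminus\Lambda')\neq\emptyset}\norm{\Psi(Z)}
    \leq \sum_{x\in X}\sum_{y\in\Lambda\setminus\Lambda'}\ \sumstack[r]{Z\Subset\Gamma\suchthat\\ x,y\in Z}\norm{\Psi(Z)}
    \leq \norm{\Psi}_F\sum_{x\in X}\sum_{y\in\Lambda\setminus\Lambda'}F\pd{x,y},
\]
and multiplying by $2\,\norm{A}\,\abs{t}$ gives~\eqref{eq:thm-local-approximation-of-time-evolution-final}.

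I do not expect a genuine obstacle; the single point that matters is performing the reduction via \cref{lem:evolution-interaction-intersecting-support} \emph{before} comparing the evolutions. A direct Duhamel comparison of $\tau_t^\Lambda$ and $\tau_t^{\Lambda'}$ would only give $2\,\norm{A}\,\abs{t}\sum_{Z\not\subset\Lambda'}\norm{\Psi(Z)}$, i.e.\ without the constraint $Z\cap X\neq\emptyset$, which is far too weak to encode locality around $X$ and would not reduce to the $F$-bound. Everything else is bookkeeping: that the restricted interaction stays commuting, the set identity $Z\not\subset\Lambda'\Leftrightarrow Z\cap(\Lambda\setminus\Lambda')\neq\emptyset$, and the harmless over-counting in the last display.
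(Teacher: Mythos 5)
Your proposal is correct, and it shares the paper's decisive step: reduce both evolutions via \cref{lem:evolution-interaction-intersecting-support} to the dynamics generated only by terms meeting \(X\), so that the difference of generators is supported on sets intersecting both \(X\) and \(\Lambda\setminus\Lambda'\). Where you diverge is in the comparison itself: the paper interpolates with the standard Duhamel identity, \(\tau^{\Lambda}_{t,\cap X}(A)-\tau^{\Lambda'}_{t,\cap X}(A)=\sum_Z\int_0^t \tau^{\Lambda}_{s,\cap X}\paren[\big]{\commutator{\Psi_{\cap X}(Z),\tau^{\Lambda'}_{t-s,\cap X}(A)}}\diff s\), and bounds each commutator term by \(2\,\abs{t}\,\norm{\Psi(Z)}\,\norm{A}\); you instead invoke commutativity a second time to write \(W(t)=\e^{\I t B}\) with \(B\) the difference of the two intersecting Hamiltonians, and use \(\norm{W(t)C W(t)^*-C}\le 2\norm{A}\,\norm{\e^{\I t B}-\unit}\le 2\norm{A}\,\abs{t}\,\norm{B}\). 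Both routes give the identical final bounds (yours even passes through the marginally sharper intermediate quantity \(\norm{B}\) instead of \(\sum_Z\norm{\Psi(Z)}\)); the paper's Duhamel step has the advantage of not needing commutativity beyond the initial reduction, so it is the variant that would survive a relaxation to interactions whose terms only approximately commute, whereas your collapse of \(W(t)\) into a single exponential relies on exact commutativity. Your final overcounting estimate against \(\norm{\Psi}_F\) is exactly the paper's.
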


\begin{proof}
    By~\cref{lem:evolution-interaction-intersecting-support} we have
    \begin{equation*}
        \begin{aligned}
            \tau_t^{\Lambda}(A)
            -\tau_t^{\Lambda'}(A)
            &=
            \tau_{t,{\cap X}}^{\Lambda}(A)
            -\tau_{t,{\cap X}}^{\Lambda'}(A)
            \\&=
            \int_0^t \odv*{}{s} \, \tau_{s,{\cap X}}^{\Lambda} \circ \tau_{t-s,{\cap X}}^{\Lambda'}(A) \diff s
            \\&=
            \sumstack[l]{Z\subset \Lambda \suchthat\\ Z\cap \Lambda\setminus\Lambda' \neq \emptyset}
            \int_0^t \tau_{s,{\cap X}}^{\Lambda} \, \paren[\Big]{
                \commutator[\Big]{
                    \Psi_{\cap X}(Z),
                    \tau_{t-s,{\cap X}}^{\Lambda'}(A)
                }
            } \diff s
            ,
        \end{aligned}
    \end{equation*}
    and thus the statement after bounding
    \begin{equation*}
        \norm[\bigg]{
            \int_0^t \tau_{s,{\cap X}}^{\Lambda} \, \paren[\Big]{
                \commutator[\Big]{
                    \Psi_{\cap X}(Z),
                    \tau_{t-s,{\cap X}}^{\Lambda'}(A)
                }
            } \diff s
        }
        \leq
        2 \, \abs{t} \, \norm{\Psi_{\cap X}(Z)} \, \norm{A}
        .
    \end{equation*}
    For the second statement, we additionally bound
    \begin{equation*}
        \sumstack[lr]{
            Z \subset \Lambda \suchthat\\
            Z \cap X \neq \emptyset,\\
            Z \cap \Lambda\setminus\Lambda' \neq \emptyset
        } \, \norm[\big]{\Psi(Z)}
        \leq
        \sum_{x\in X} \sumstack[r]{y\in \Lambda\setminus\smash{\cramped[\scriptstyle]\Lambda'}} \, F\pd{x,y}
        \, \sumstack[lr]{Z\subset \Lambda \suchthat\\ x,y\in Z} \frac{\norm[\big]{\Psi(Z)}}{F\pd{x,y}}
        \leq
        \norm{\Psi}_F
        \sum_{x\in X} \sumstack[r]{y\in \Lambda\setminus\smash{\cramped[\scriptstyle]\Lambda'}} \, F\pd{x,y}
    \end{equation*}
    by overcounting.
\end{proof}

As a direct consequence, we also obtain the two versions of the LRB\@.

\begin{corollary}[Lieb-Robinson bound for commuting Hamiltonians]
    \label{cor:LRB-commuting-interaction-norm}
    Let \(\Psi\) be a commuting interaction on a finite graph \(\Lambda\).
    Let \(F\colon \R_{\geq 0} \to \R_{>0}\) be a decaying function and assume that \(\Psi\) is \(F\) local in the sense that \(\norm{\Psi}_F<\infty\).
    Then the corresponding Heisenberg time-evolution satisfies the following Lieb-Robinson bounds:
    \begin{thmlist}
        \item \label{item-cor:LRB-commuting-interaction-norm:commutator}
            For all disjoint \(X\), \(Y\subset \Lambda\) and operators \(A\in \alg_X\), \(B\in \alg_Y\),
            \begin{equation*}
                \norm[\big]{
                    \commutator{ \tau^\Lambda_t(A), B }
                }
                \leq
                4 \, \norm{\Psi}_F \, \norm{A} \, \norm{B} \, \abs{t}
                \, \sum_{x\in X} \sum_{y\in Y} F\pd{x,y}
                .
            \end{equation*}
        \item \label{item-cor:LRB-commuting-interaction-norm:operator-localization}
            For all \(X \subset \Lambda\), operators \(A\in \alg_X\), and \(r \geq 0\),
            \begin{equation*}
                \norm[\big]{
                    \tau^\Lambda_t(A)
                    -\cexp[\big]{X_r}{\tau_t^{\Lambda}(A)}
                }
                \leq
                4 \, \norm{\Psi}_F \, \norm{A} \, \abs{t}
                \, \sum_{x\in X} \sum_{y\in \Lambda\setminus X_r} F\pd{x,y}
                .
            \end{equation*}
    \end{thmlist}
\end{corollary}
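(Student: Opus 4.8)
The plan is to derive both inequalities directly from \cref{thm:local-approximation-of-time-evolution}, the only choice being that of the restricted region~\(\Lambda'\); no further input is needed beyond two elementary observations about operators localized in~\(\Lambda'\).

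For the commutator version \cref{item-cor:LRB-commuting-interaction-norm:commutator}, I would set \(\Lambda' := \Lambda\setminus Y\). Since \(X\) and~\(Y\) are disjoint, \(X\subset\Lambda'\), so \cref{thm:local-approximation-of-time-evolution} applies; moreover \(\tau^{\Lambda'}_t(A)\in\alg_{\Lambda\setminus Y}\) has support disjoint from \(Y\supset\supp B\), hence \(\commutator{\tau^{\Lambda'}_t(A),B}=0\). Therefore
\[
  \commutator{\tau^\Lambda_t(A),B}
  = \commutator{\tau^\Lambda_t(A)-\tau^{\Lambda'}_t(A),\,B},
\]
and the triangle inequality together with submultiplicativity of the operator norm give \(\norm{\commutator{\tau^\Lambda_t(A),B}}\le 2\,\norm{B}\,\norm{\tau^\Lambda_t(A)-\tau^{\Lambda'}_t(A)}\). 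I would then insert \eqref{eq:thm-local-approximation-of-time-evolution-final} and use \(\Lambda\setminus\Lambda'=Y\), which produces the stated constant \(4\,\norm{\Psi}_F\,\norm{A}\,\norm{B}\,\abs{t}\) in front of \(\sum_{x\in X}\sum_{y\in Y}F\pd{x,y}\).

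For the operator-localization version \cref{item-cor:LRB-commuting-interaction-norm:operator-localization}, I would instead take \(\Lambda' := X_r\), which satisfies \(X\subset\Lambda'\) for every \(r\ge 0\). Since \(\tau^{X_r}_t(A)\in\alg_{X_r}\), it is fixed by the conditional expectation, \(\cexp{X_r}{\tau^{X_r}_t(A)}=\tau^{X_r}_t(A)\), so that
\[
  \tau^\Lambda_t(A)-\cexp{X_r}{\tau^\Lambda_t(A)}
  = \bigl(\tau^\Lambda_t(A)-\tau^{X_r}_t(A)\bigr)
    - \cexp{X_r}{\tau^\Lambda_t(A)-\tau^{X_r}_t(A)}.
\]
Using \(\norm{\cexp{X_r}{}}\le 1\) then yields \(\norm{\tau^\Lambda_t(A)-\cexp{X_r}{\tau^\Lambda_t(A)}}\le 2\,\norm{\tau^\Lambda_t(A)-\tau^{X_r}_t(A)}\), and \eqref{eq:thm-local-approximation-of-time-evolution-final} with \(\Lambda\setminus\Lambda'=\Lambda\setminus X_r\) closes the argument.

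I do not anticipate any real obstacle here: the only things to verify are that the hypothesis \(X\subset\Lambda'\) of \cref{thm:local-approximation-of-time-evolution} holds in both cases, and the two routine facts that \(\commutator{\,\cdot\,,B}\) annihilates operators supported away from~\(Y\) and that \(\cexp{X_r}{}\) restricts to the identity on \(\alg_{X_r}\) and has norm at most~\(1\). The loss of the factor~\(2\) in each estimate is simply the price of replacing \(\tau^\Lambda_t(A)\) by its local approximant and then \enquote{undoing} this replacement inside the commutator, respectively the conditional expectation.
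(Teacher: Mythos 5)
Your argument is correct and coincides with the paper's own proof: both parts are derived from \cref{thm:local-approximation-of-time-evolution} with exactly the choices \(\Lambda'=\Lambda\setminus Y\) (using that \(\tau^{\Lambda\setminus Y}_t(A)\) commutes with \(B\)) and \(\Lambda'=X_r\) (using \(\cexp{X_r}{\tau^{X_r}_t(A)}=\tau^{X_r}_t(A)\) and \(\norm{\cexp{X_r}{}}\leq 1\)), yielding the same factor \(4\). No gaps to report.
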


\begin{proof}
    Both statements directly follow from \cref{thm:local-approximation-of-time-evolution}.
    For~\itemref{item-cor:LRB-commuting-interaction-norm:commutator} choose \(\Lambda' = \Lambda \setminus Y\).
    Then \(\tau^{\Lambda \setminus Y}_t(A)\in \alg_{\Lambda \setminus Y}\) and \(B\in \alg_Y\) have disjoint support and thus commute.
    Hence,
    \begin{equation*}
        \norm[\big]{
            \commutator{ \tau^\Lambda_t(A), B }
        }
        \leq
        \norm[\big]{
            \commutator{ \tau^\Lambda_t(A) - \tau^{\Lambda \setminus Y}_t(A), B }
        }
        \leq
        2
        \, \norm[\big]{
            \tau^\Lambda_t(A) - \tau^{\Lambda \setminus Y}_t(A)
        }
        \, \norm{B}
        .
    \end{equation*}
    For~\itemref{item-cor:LRB-commuting-interaction-norm:operator-localization} choose \(\Lambda'=X_r\).
    Then using \(\cexp{X_r}{\tau_t^{X_r}(A)} = \tau_t^{X_r}(A)\) and \(\norm{\cexp{X_r}{}}=1\) we obtain
    \begin{align*}
        \norm[\big]{
            \tau^\Lambda_t(A)
            -\cexp[\big]{X_r}{\tau_t^{\Lambda}(A)}
        }
        &\leq
        \norm[\big]{
            \tau^\Lambda_t(A)
            -\tau^{X_r}_t(A)
        }
        +
        \norm[\big]{
            \cexp[\big]{X_r}{\tau^{X_r}_t(A)}
            -\cexp[\big]{X_r}{\tau_t^{\Lambda}(A)}
        }
        \\&\leq
        2 \, \norm[\big]{
            \tau^\Lambda_t(A)
            -\tau^{X_r}_t(A)
        }
        .
        \qedhere
    \end{align*}
\end{proof}

Note, that the local approximation is different between \cref{thm:local-approximation-of-time-evolution} for \(\Lambda'=X_r\) and \cref{item-cor:LRB-commuting-interaction-norm:operator-localization}.
The first compares the evolution of~\(A\) to the time evolution generated by the local Hamiltonian on \(X_r\), while the latter gives an approximation of \(\tau_t^\Lambda(A)\) in~\(X_r\) by the conditional expectation.
The former implies the latter.

We can now prove~\cref{thm:LR-bound-short-range,thm:LRB-long-range-alpha-ge-nu,thm:LR-bound-long-range-alpha-ge-0} on finite graphs.
The extension to infinite graphs will be discussed afterwards.
Notice that
\begin{equation}
    \label{eq:trivial-bound-double-sum-F}
    \sum_{x\in X} \sum_{y\in Y} F\pd{x,y}
    \leq
    \abs{X} \, \abs{Y} \, F\pdist{X,Y}
\end{equation}
for any \(F\).
Together with \cref{cor:LRB-commuting-interaction-norm}, this proves \cref{thm:LR-bound-long-range-alpha-ge-0} and~\eqref{eq:long-range-LR-bound-alpha-ge-0} from \cref{item-thm:LRB-long-range-alpha-ge-nu:commutator}.
Alternatively, one can bound
\begin{equation*}
    \sum_{x\in X} \sum_{y\in Y} F\pd{x,y}
    \leq
    \min\List[\big]{\abs{X},\abs{Y}}
    \, \sup_{x\in \Lambda}
    \ \sumstack[lr]{y\in \Lambda\suchthat\\ \dd{x,y} \geq \dist{X,Y}}
    \, F\pd{x,y}
\end{equation*}
if the right-hand side is bounded.
Indeed, for the decay function \(F_\alpha(r) := (1+r)^{-\alpha}\),
a simple integration shows that for all \(\kappa>0\), \(D\in \N\), \(\alpha>D\), there exists a constant \(C/4>0\) such that
\begin{equation}
    \label{eq:bound-sum-F-polynomial-decay}
    \sup_{x\in \Lambda}
    \ \sumstack[lr]{y\in \Lambda \suchthat\\ d(x,y) \geq R}
    F_\alpha\pd{x,y}
    \leq
    \frac{C}{4} \, F_{\alpha-D}(R)
\end{equation}
for all \((\kappa,D)\)-surface-regular graphs \(\Lambda\).
This proves the remaining bounds from \cref{thm:LRB-long-range-alpha-ge-nu}.

Similarly, for \cref{thm:LR-bound-short-range}, we observe that for all \(0<b'<b\), \(p\in \intervaloc{0,1}\), \(\kappa>0\) and \(D\in \N\), there exists a constant \(C>0\) such that
\begin{equation}
    \label{eq:bound-sum-F-stretched-exponential-decay}
    \sup_{x\in \Lambda}
    \ \sumstack[lr]{y\in \Lambda \suchthat\\ \dd{x,y} \geq R}
    \, F\pd{x,y}
    \leq
    C\, \e^{-b'R^p}
\end{equation}
for all \((\kappa,D)\)-regular graphs.

It is left to obtain the results also on infinite graphs \(\Gamma\).
As discussed in \cref{sec:setup}, we first define the Heisenberg dynamics on \(\algloc_\Gamma\) as the limit
\begin{equation*}
    \tau^\Gamma_t(A)
    :=
    \lim_{\Lambda\nearrow\Gamma} \tau^\Lambda_t(A)
    ,
\end{equation*}
which is well-defined if \(\norm{\Psi}_{F_\alpha}<\infty\) for some \(\alpha>D\).
To obtain the LRBs on \(\Gamma\) one uses triangle inequality in \cref{thm:local-approximation-of-time-evolution} to bound
\begin{equation*}
    \norm[\big]{
        \tau^\Gamma_t(A) - \tau^{X_r}_t(A)
    }
    \leq
    \norm[\big]{
        \tau^\Gamma_t(A) - \tau^{\Lambda}_t(A)
    }
    +
    \norm[\big]{
        \tau^\Lambda_t(A) - \tau^{X_r}_t(A)
    }
    ,
\end{equation*}
where the first term vanishes in the limit \(\Lambda\nearrow\Gamma\) and the second gives the previously proven LRBs which did not depend on \(\Lambda\).

\subsection{Sharpness of the Lieb-Robinson bound}
\label{sec:optimality-of-the-LRB}

The bounds in \cref{cor:LRB-commuting-interaction-norm} are sharp and we now construct an example `protocol' which attains the bound.
This easily implies sharpness of \cref{thm:LRB-long-range-alpha-ge-nu} as well, as we explain afterwards.
The main difference to protocols for non-commuting interactions as in~\cite{KS2020,TCE2020} is that due to commutativity only terms in the Hamiltonian directly connecting the supports \(X\) and \(Y\) have an influence.

\subsubsection{Sharpness of Corollary~\ref{cor:LRB-commuting-interaction-norm}}
\label{sec:sharpness-LRB-commutator-version}

Let \(\Lambda\) be any finite graph and consider two-dimensional on-site Hilbert spaces \(\Hi_x = \C^2\), and denote with \(\pauli{\#}_x\) the Pauli~\(\#\) matrix on \(\Hi_x\).
We fix \(X\), \(Y\subset \Lambda\).
Then, for some \(C>0\) let
\begin{equation*}
    \Psi(Z)
    =
    \begin{cases*}
        F_\alpha\pdist{x,y} \, C \, U_{x,y}
         & \text{if \(Z=\List{x,y}\) for some \(x\in X\) and \(y\in Y\)}
        \\
        0
         & \text{else,}
    \end{cases*}
\end{equation*}
where
\begin{equation*}
    U_{x,y}
    =
    \pauliZ_x
    \pauliZ_y
    =
    \proj{\ua\ua}_{xy}
    - \proj{\ua\da}_{xy}
    - \proj{\da\ua}_{xy}
    + \proj{\da\da}_{xy}
\end{equation*}
has norm \(\norm{U_{x,y}} = 1\), such that \(\norm{\Psi}_{F_\alpha} = C\).
Clearly,
\begin{equation*}
    \e^{\I t U_{x,y}}
    =
    \e^{+\I t} \, \proj{\ua\ua}_{xy}
    + \e^{-\I t} \, \proj{\ua\da}_{xy}
    + \e^{-\I t} \, \proj{\da\ua}_{xy}
    + \e^{+\I t} \, \proj{\da\da}_{xy}
    ,
\end{equation*}
so that the dynamics generated by \(U_{x,y}\) add a phase \(\e^{\pm\I t}\) in the computational basis.

In the following, we abbreviate \(
    \ket{a}\ket{b}\ket{c}
    =
    \ket{a}_X \otimes \ket{b}_Y \otimes \ket{c}_{\Lambda\setminus(X\cup Y)}
\), \(\ketU = \ket{\uparrow \dots \uparrow}\) and \(\ketD = \ket{\downarrow \dots \downarrow}\).
Let \(\psi\) be any state on \(\Lambda\setminus(X\cup Y)\).
Since \(U_{x,y}\) mutually commute, we have
\begin{equation*}
    \begin{aligned}
        \e^{\I t H_\Lambda}
        \, \ketU \ketU \ket{\psi}
        &=
        \prod_{x\in X}
        \, \prod_{x\in X}
        \, \e^{\I t C F_\alpha\pdist{x,y} U_{x,y}}
        \, \ketU \ketU \ket{\psi}
        \\&=
        \e^{
            \I t C \sum_{x\in X} \sum_{y\in Y} F_\alpha\pdist{x,y}
        }
        \, \ketU \ketU \ket{\psi}
        \\&=:
        \e^{\I t c} \, \ketU \ketU \ket{\psi}
    \end{aligned}
\end{equation*}
and similarly
\begin{align*}
    \e^{\I t H_\Lambda}
    \ketU \ketD \ket{\psi}
    &=
    \e^{-\I t c}
    \, \ketU \ketD \ket{\psi}
    ,\\
    \e^{\I t H_\Lambda}
    \ketD \ketU \ket{\psi}
    &=
    \e^{-\I t c}
    \, \ketD \ketU \ket{\psi}
    \quad\text{and}\\
    \e^{\I t H_\Lambda}
    \ketD \ketD \ket{\psi}
    &=
    \e^{\I t c}
    \, \ketD \ketD \ket{\psi}
    ,
\end{align*}
where \(c = C \sum_{x\in X} \sum_{y\in Y} F_\alpha\pdist{x,y}\).
Similarly to the protocol in \cref{remark:optimality}, we choose \(A = \prod_{x\in X} \pauliX_x\) such that \(A \, \ketD \ket{\phi} \ket{\psi} = \ketU \ket{\phi} \ket{\psi}\) for all states \(\phi\) on \(Y\).
Moreover, we choose \(B = \ketU \braD_Y + \ketD \braU_Y\), which acts as identity on \(\ketU + \ketD\).
Hence,
\begin{align*}
    \commutator[\big]{
        \tau_t(A), B
    }
    \ketD \paren[\big]{\ketD + \ketU} \ket{\psi}
    &=
    (1-B)
    \, \e^{\I t H} \, A \, \e^{-\I t H}
    \, \ketD \paren[\big]{\ketD + \ketU} \ket{\psi}
    \\&=
    (1-B)
    \, \ketU \paren[\big]{\e^{-2\I t c} \, \ketD + \e^{2\I t c} \, \ketU} \ket{\psi}
    \\&=
    \, \ketU \paren[\Big]{
        \paren[\big]{\e^{-2\I t c} - \e^{2\I t c}} \, \ketD
        + \paren[\big]{\e^{2\I t c} - \e^{-2\I t c}} \, \ketU
    } \ket{\psi}
    ,
\end{align*}
and thus
\begin{equation*}
    \norm[\big]{
        \commutator[\big]{
            \tau_t(A), B
        }
    }
    \geq
    \abs[\big]{
        \e^{2\I t c} - \e^{-2\I t c}
    }
    =
    2 \, \sin\paren[\Big]{
        2 \, t \, \norm{\Psi}_{F_\alpha} \, \sum_{x\in X} \sum_{y\in Y} F_\alpha\pdist{x,y}
    }
    .
\end{equation*}
Using that \(\sin(x) \approx x\) for small \(x\), we see that the LRB obtained in \cref{item-cor:LRB-commuting-interaction-norm:commutator} is sharp whenever its right-hand side is indeed small.

As discussed before, the operator localization version of the LRB implies the commutator version of the LRB\@.
More precisely, \cref{item-cor:LRB-commuting-interaction-norm:operator-localization} implies
\begin{equation*}
    \norm[\big]{
        \commutator{ \tau^\Lambda_t(A), B }
    }
    \leq
    8 \, \norm{\Psi}_F \, \norm{A} \, \norm{B} \, \abs{t}
    \, \sum_{x\in X} \sum_{y\in \Lambda\setminus X_r} F\pd{x,y}
\end{equation*}
for all \(A\in \alg_X\) and \(B\in \alg_{\Lambda\setminus X_r}\).
Hence, the above argument implies that \cref{item-cor:LRB-commuting-interaction-norm:operator-localization} is also optimal up to a factor~\(2\).

\subsubsection{Sharpness of Theorem~\ref{thm:LRB-long-range-alpha-ge-nu}}
To understand that the scaling in the long-range LRBs given in~\eqref{eq:long-range-LR-bound-alpha-ge-nu} and~\eqref{eq:long-range-LR-bound-alpha-ge-nu-operator-localization} is optimal, it is enough to understand that the bound
\begin{equation*}
    \sum_{x\in X} \sum_{y\in Y} F_\alpha\pd{x,y}
    \leq
    \abs{X}
    \, \frac{C}{4} \, F_{\alpha-D}\pd{X,Y}
    \quad\text{for all disjoint \(X\), \(Y\subset \Gamma\)}
    ,
\end{equation*}
which was used to obtain the results from \cref{cor:LRB-commuting-interaction-norm}, is also optimal.
And indeed, for \(X=\List{0}\) and \(Y = \Set{y\in \Gamma \given \dist{x,y} \geq R}\) on \(\Gamma=\Z^D\), it is straightforward to check that
\begin{equation*}
    \sum_{y\in Y} F_\alpha\pdist{0,y}
    =
    \sum_{n=R}^\infty
    \sumstack[r]{y\in \Gamma\suchthat\\\dist{0,y}=n}
    \, F_\alpha(n)
    \geq
    \sum_{n=R}^\infty
    \, \abs{Q_n}
    \, F_\alpha(n)
    =
    \tfrac{1}{(D-1)!}
    \, \sum_{n=R}^\infty
    F_{\alpha-(D-1)}(n)
    ,
\end{equation*}
where \(
    Q_n
    =
    \Set[\big]{
        z\in \R^{D-1}
        \given
        \sum_{k=1}^{D-1} z_k \leq n+1,
        \forall k: z_k \geq 0
    }
\).
Moreover, this sequence only converges for \(\alpha>D\) and then can be lower bounded by
\begin{equation*}
    \sum_{n=R}^\infty
    F_{\alpha-(D-1)}(n)
    \geq
    \int_R^\infty \frac{1}{(r+1)^{\alpha-(D-1)}} \diff r
    =
    \frac{1}{\alpha-D}
    \, F_{\alpha-D}(R)
    .
\end{equation*}
Hence, the obtained decay in \cref{thm:LRB-long-range-alpha-ge-nu} is optimal.

\section{Exemplary applications}
\label{sec:applications}

In recent years, LRBs have been used for a vast range of applications.
Here, we improve some of the results from the literature for commuting interactions.
In particular, we discuss decay of correlations and the \enquote*{local perturbations perturb locally} principle in gapped ground states of long-range Hamiltonians in \cref{sec:decay-of-correlations,sec:Local-perturbations-perturb-locally-principle}, respectively.
In \cref{sec:LR-bound-interaction-picture}, we show that the LRB for a Hamiltonian, which is a sum of a finite-range, commuting Hamiltonian and a general Hamiltonian, does not depend on the strength of the commuting part.

\subsection{Decay of correlations}
\label{sec:decay-of-correlations}

\Textcite{HK2006} proved that local interactions and a uniform spectral gap imply decay of correlations in the ground state in the thermodynamic limit.
Their argument includes long-range interactions with \(\alpha > D\).
However, they use a LRB that follows from the usual proof and is outperformed by more recent approaches, e.g.~\cite{MKN2016,TGB2021}, even for non-commuting Hamiltonians.

For commuting Hamiltonians, the improved LRBs also result in improved decay of correlations as the following \namecref{thm:spectral-gap-implies-decay-of-correlations} shows.
The necessary adjustments to the original proof are discussed in \cref{sec:proof-spectral-gap-implies-decay-of-correlations}.

\begin{theorem}[Spectral gap implies decay of correlations]
    \label{thm:spectral-gap-implies-decay-of-correlations}
    Let \(\Psi\) be a commuting interaction on a finite graph~\(\Lambda\) and assume that the spectrum of \(H\) has a gap \(g>0\) above the ground state energy \(E_0\) in the sense that
    \begin{equation*}
        E_0 \in \sigma(H)
        ,\quad
        E_0
        <
        \sigma(H)\setminus\List{E_0}
        \quadtext{and}
        \dist[\big]{\List{E_0} , \sigma(H)\setminus\List{E_0}}
        \geq
        g
        .
    \end{equation*}
    Let \(P_0\) be the projection onto the ground state sector and \(\rho_0\) be any ground state, i.e.~\(\rho_0 = P_0 \, \rho_0 \, P_0\).

    If \(\Psi\) is polynomially decaying with \(\norm{\Psi}_{F_\alpha}<\infty\) for some \(\alpha>0\) where \(F_\alpha(r) := (1+r)^{-\alpha}\), then, for all disjoint \(X\) and \(Y\subset \Lambda\) and \(A\in \alg_X\) and \(B\in \alg_Y\), it holds that
    \begin{equation}
        \label{eq:thm-decay-of-correlations-long-range}
        \begin{aligned}
            \Alignindent
            \abs[\Big]{
                \trace[\big]{
                    \rho_0
                    \, A
                    \, B
                }
                - \tfrac{1}{2} \paren[\Big]{
                    \trace[\big]{
                        \rho_0
                        \, A
                        \, P_0
                        \, B
                    }
                    +
                    \trace[\big]{
                        \rho_0
                        \, B
                        \, P_0
                        \, A
                    }
                }
            }
            \\&\leq
            8
            \, \norm{A} \, \norm{B}
            \, \abs{X} \, \abs{Y}
            \, \norm{\rho_0}_1
            \, \paren[\bigg]{
                \sqrt{\frac{\alpha}{\pi}}
                \,
                \frac
                    {\norm{\Psi}_{F_\alpha}}
                    {g}
                +
                1
            }
            \, \ln(1+r)
            \, F_\alpha(r)
            ,
        \end{aligned}
    \end{equation}
    where \(r=\dist{X,Y}\).

    If \(\Psi\) is a short-range interaction satisfying \(\norm{\Psi}_{F_{b,p}}<\infty\) for some \(b>\tilde{b}>0\) and \(p\in \intervaloc{0,1}\) where \(F_{b,p}(r) := \e^{-b r^p}\), then, for all disjoint \(X\) and \(Y\subset \Lambda\) and \(A\in \alg_X\) and \(B\in \alg_Y\), it holds that
    \begin{equation}
        \label{eq:thm-decay-of-correlations-short-range}
        \begin{aligned}
            \Alignindent
            \abs[\Big]{
                \trace[\big]{
                    \rho_0
                    \, A
                    \, B
                }
                - \tfrac{1}{2} \paren[\Big]{
                    \trace[\big]{
                        \rho_0
                        \, A
                        \, P_0
                        \, B
                    }
                    +
                    \trace[\big]{
                        \rho_0
                        \, B
                        \, P_0
                        \, A
                    }
                }
            }
            \\&\leq
            \norm{A} \, \norm{B} \, \min\List[\big]{\abs{X},\abs{Y}} \, \norm{\rho_0}_1
            \, \paren[\bigg]{
                \frac{C \, \norm{\Psi}_{b,p}}{g}
                +1
            }
            \, F_{\tilde{b},p}(r)
            ,
        \end{aligned}
    \end{equation}
    where \(r=\dist{X,Y}\) and \(C\) is an explicit constant given in~\eqref{eq:constant-in-decay-of-correlations-short-range} and only depends on \(b\) and \(\tilde{b}\).
\end{theorem}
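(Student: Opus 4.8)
The plan is to adapt the proof of ``spectral gap implies decay of correlations'' due to \textcite{HK2006}, feeding in the commuting Lieb--Robinson bound \eqref{eq:long-range-LR-bound-alpha-ge-0} (respectively the short-range bound of \cref{thm:LR-bound-short-range}) in place of the standard one, and tuning the width of the Hastings--Koma filter to $r := \dist{X,Y}$. Since $\Lambda$ is finite, $H$ lives on a finite-dimensional space and the spectral calculus and $t$-integrals below are elementary. First I would reduce the statement: writing $Q := \unit - P_0$, one has $\trace{\rho_0\,A\,B}-\trace{\rho_0\,A\,P_0\,B}=\trace{\rho_0\,A\,Q\,B}$ and the same with $A$, $B$ interchanged, and since $\commutator{A,B}=0$ (disjoint supports) one may split $\trace{\rho_0\,A\,B}=\tfrac12\trace{\rho_0\,A\,B}+\tfrac12\trace{\rho_0\,B\,A}$; combining these, the left-hand side of the theorem equals $\tfrac12\bigl(\trace{\rho_0\,A\,Q\,B}+\trace{\rho_0\,B\,Q\,A}\bigr)$, so it is enough to bound this symmetrized correlator.

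Next I would set up the filtered two-point function. Let $h(t) := \trace{\rho_0\,A\,\tau_t^\Lambda(B)}$. Using $\rho_0=P_0\,\rho_0\,P_0$, $\commutator{H,P_0}=0$ and $\e^{\I t(H-E_0)}=P_0+Q\,\e^{\I t(H-E_0)}\,Q$, one gets
\begin{equation*}
    h(t) = \trace{\rho_0\,A\,P_0\,B} + \trace[\big]{\rho_0\,A\,Q\,\e^{\I t(H-E_0)}\,Q\,B},
\end{equation*}
so $\trace{\rho_0\,A\,Q\,B}$ is the value at $t=0$ of the second term, whose holomorphic continuation to the upper half-plane decays like $\e^{-g\operatorname{Im}z}$ by the gap hypothesis. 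Convolving $h$ with a filter $w$ whose Fourier transform $\hat w(\omega):=\int w(t)\,\e^{-\I\omega t}\,\diff t$ approximates, on a scale $\sigma$, the function that is $1$ on $(-\infty,0)$ and $0$ on $(0,\infty)$, and noting that $\int w(t)\,\e^{\I t(H-E_0)}\,\diff t=\hat w\bigl(-(H-E_0)\bigr)$ is within $\sup_{\lambda\ge g}\abs{1-\hat w(-\lambda)}$ of the identity on $\operatorname{ran}Q$ (where $H-E_0\ge g$), the integral $\int w(t)\,h(t)\,\diff t$ reproduces $\trace{\rho_0\,A\,Q\,B}$ up to the term $\hat w(0)\,\trace{\rho_0\,A\,P_0\,B}$ and an error of size $\norm{\rho_0}_1\,\norm{A}\,\norm{B}\,\sup_{\lambda\ge g}\abs{1-\hat w(-\lambda)}$; and, after splitting $h(t)=\trace{\rho_0\,\commutator{A,\tau_t^\Lambda(B)}}+\trace{\rho_0\,\tau_t^\Lambda(B)\,A}$ and running the same $P_0/Q$ decomposition on the second piece (whose $t$-frequencies are $\le0$), its filtered version equals $\hat w(0)\,\trace{\rho_0\,B\,P_0\,A}$ up to an error of size $\norm{\rho_0}_1\,\norm{A}\,\norm{B}\,\sup_{\lambda\ge g}\abs{\hat w(\lambda)}$. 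Combining,
\begin{equation*}
    \trace{\rho_0\,A\,Q\,B}
    = \int w(t)\,\trace[\big]{\rho_0\,\commutator{A,\tau_t^\Lambda(B)}}\,\diff t
    + \hat w(0)\bigl(\trace{\rho_0\,B\,P_0\,A}-\trace{\rho_0\,A\,P_0\,B}\bigr) + \mathcal{E},
\end{equation*}
with $\abs{\mathcal{E}}\le\norm{\rho_0}_1\,\norm{A}\,\norm{B}\bigl(\sup_{\lambda\ge g}\abs{\hat w(\lambda)}+\sup_{\lambda\ge g}\abs{1-\hat w(-\lambda)}\bigr)$. The $\hat w(0)$-term is the only part sensitive to ground-state degeneracy, and it cancels exactly against the analogous term for $\trace{\rho_0\,B\,Q\,A}$ — which is precisely why the statement is symmetric in $A$, $B$ — so that $\trace{\rho_0\,A\,Q\,B}+\trace{\rho_0\,B\,Q\,A}$ equals $\int w(t)\,\trace[\big]{\rho_0\,\bigl(\commutator{A,\tau_t^\Lambda(B)}+\commutator{B,\tau_t^\Lambda(A)}\bigr)}\,\diff t + 2\mathcal{E}$.

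Finally I would insert the Lieb--Robinson bound and optimize. By \eqref{eq:long-range-LR-bound-alpha-ge-0}, $\norm{\commutator{A,\tau_t^\Lambda(B)}}\le 4\,\norm{\Psi}_{F_\alpha}\,\norm{A}\,\norm{B}\,\abs{X}\,\abs{Y}\,\abs{t}\,F_\alpha(r)$, so the commutator integral is at most $4\,\norm{\rho_0}_1\,\norm{\Psi}_{F_\alpha}\,\norm{A}\,\norm{B}\,\abs{X}\,\abs{Y}\,F_\alpha(r)\int\abs{w(t)}\,\abs{t}\,\diff t$; here the factor $\abs{t}$ in the Lieb--Robinson bound neutralizes the $1/t$-type singularity that a step-approximant $w$ carries at the origin. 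For an explicit choice such as $\hat w_\sigma(\omega)=\tfrac12\operatorname{erfc}\bigl(\omega/(\sigma\sqrt2)\bigr)$ one has $\sup_{\lambda\ge g}\abs{\hat w_\sigma(\lambda)},\ \sup_{\lambda\ge g}\abs{1-\hat w_\sigma(-\lambda)}\le\tfrac12\e^{-g^2/(2\sigma^2)}$ and $\int\abs{w_\sigma(t)}\,\abs{t}\,\diff t\asymp\sigma^{-1}$. Choosing $\sigma$ of order $g/\sqrt{\alpha\ln(1+r)}$ makes the tails $\lesssim F_\alpha(r)$, forces $\sigma^{-1}$ of order $\sqrt{\alpha\ln(1+r)}/g$, and — bounding $\sqrt{\ln(1+r)}\le\ln(1+r)/\sqrt{\ln 2}$ for $r\ge1$ — produces \eqref{eq:thm-decay-of-correlations-long-range}, the main term carrying the prefactor of order $\sqrt{\alpha/\pi}\,\norm{\Psi}_{F_\alpha}/g$ and the ``$+1$'' absorbing the tail error $\mathcal{E}$ (of order $\norm{\rho_0}_1\,\norm{A}\,\norm{B}\,F_\alpha(r)$). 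The short-range case \eqref{eq:thm-decay-of-correlations-short-range} is identical with $F_\alpha$ replaced by $F_{b,p}$: pick $b'\in\intervaloo{\tilde{b},b}$, use \cref{thm:LR-bound-short-range} (which gives $\abs{t}\,F_{b',p}(r)$ and the factor $\min\List{\abs{X},\abs{Y}}$), take $\sigma$ of order $g/\bigl(\sqrt{\tilde{b}}\,r^{p/2}\bigr)$, and absorb $\sqrt{\tilde{b}}\,r^{p/2}\e^{-(b'-\tilde{b})r^p}$ into a constant depending only on $b$ and $\tilde{b}$. The hard part will be the filter design — making $\int\abs{w(t)}\,\abs{t}\,\diff t$, $\sup_{\lambda\ge g}\abs{\hat w(\lambda)}$ and $\sup_{\lambda\ge g}\abs{1-\hat w(-\lambda)}$ all small with the right mutual scaling against $\sigma$, and checking that the distributional $1/t$ behavior of a one-sided filter is harmless (it is, thanks to the $\abs{t}$ in the commuting Lieb--Robinson bound); ground-state degeneracy is a lesser point, handled by the symmetrization already built into the statement.
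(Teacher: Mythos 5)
Your proposal is correct and follows essentially the same route as the paper: it is the Hastings--Koma filter argument with the commuting Lieb--Robinson bound \eqref{eq:long-range-LR-bound-alpha-ge-0} (resp.\ \cref{thm:LR-bound-short-range}) inserted, the Gaussian width tuned to \(\beta\sim g^2/(\alpha\ln(1+r))\) resp.\ \(g^2/(\tilde b\,r^p)\), and the \(\abs{t}\) factor of the commuting bound taming the \(1/t\) singularity of the filter — indeed your \(\hat w_\sigma=\tfrac12\operatorname{erfc}\) is exactly the Fourier transform of the filter \(\tfrac{\I}{2\pi}\,\e^{-\beta t^2}/(t+\I\epsi)\) used in the paper with \(\beta=\sigma^2/2\). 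The only (immaterial) differences are bookkeeping: the paper filters the commutator expectation directly and controls the spectral sums via~\cite[Lemma~3.1]{HK2006} plus a trace-norm bound, whereas you filter the correlator \(h(t)\), use functional calculus on \(\operatorname{ran}Q\), and symmetrize in \(A\leftrightarrow B\) to cancel the \(\hat w(0)\) terms.
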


The results concerning decay of correlations from~\cite{HK2006} have been improved for general (non-commuting) interactions in~\cite{WH2022} using analogous methods to those used in~\cite{WH2022} to prove LPPL, which we will discuss in the next section.
In the case of commuting interactions, the improved method qualitatively yields the same results, and we decided to use the previous method by~\cite{HK2006} to emphasize the broad applicability of improved LRBs for commuting Hamiltonians.

Not that the first lines in~\eqref{eq:thm-decay-of-correlations-long-range} and~\eqref{eq:thm-decay-of-correlations-short-range} simplify to the more common
\begin{equation*}
    \abs[\big]{
        \trace{
            \rho_0
            \, A
            \, B
        }
        -
        \trace{
            \rho_0
            \, A
        }
        \,
        \trace{
            \rho_0
            \, B
        }
    }
\end{equation*}
in the case \(\rho_0=P_0\).

We only give a finite volume version of the statement in~\cite{HK2006} here.
For \(\alpha > D\) (or short-range interactions) there exists a thermodynamic, see~\eqref{eq:definition-thermodynamic-limit}.
Then, a statement similar to \cref{thm:spectral-gap-implies-decay-of-correlations} also holds in the thermodynamic limit, where it is enough that a gapped spectral patch converges to a unique ground state energy as in~\cite[(2.12)]{HK2006}.

Note, that the decay exponent in~\eqref{eq:thm-decay-of-correlations-long-range} does not depend on the gap and equals the decay of the interaction.
Thus, it is better than the trivial bound \(2 \, \norm{A} \, \norm{B} \, \norm{\rho_0}_1\) for~\(r \gtrsim g^{-1/\alpha}\).
Hence, this result is a qualitative improvement over the original one in~\cite{HK2006} and also the more recent~\cite{WH2022}.
In both references, the decay exponent in the bound scales like \(\alpha \, g\) for small~\(g\), meaning that the bound becomes non-trivial only for \(\ln r \gtrsim g^{-1}\).

Moreover, this qualitative improvement is also apparent in the result for short-range interactions: The previous results in~\cite{HK2006,WH2022} both have a correlation length \(\xi \sim 1/g\) for small~\(g\), i.e.\ the bounds they obtain scale with \(\e^{-\dist{X,Y}/\xi}\).
In~\cite{GH2016} an improved scaling \(\xi\sim 1/\sqrt{g}\) was obtained for frustration-free Hamiltonians.
In contrast, here we prove that the correlation length is independent of the gap, \(\xi \sim 1\) if the underlying interaction is commuting.

\begin{remark}
    For \(\alpha>D\), one can also use~\eqref{eq:long-range-LR-bound-alpha-ge-nu} and obtain the same statement with the bound from~\eqref{eq:thm-decay-of-correlations-long-range} replaced by
    \begin{equation*}
        8 \, C \, \norm{A} \, \norm{B} \, \min\List{\abs{X}, \abs{Y}} \, \norm{\rho_0}_1
        \, \norm{\Psi}_{F_\alpha}
        \, \paren[\bigg]{
            \sqrt{\frac{\alpha}{\pi}}
            \,
            \frac
                {\norm{\Psi}_{F_\alpha}}
                {g}
            +
            1
        }
        \, \ln(1+r) \, F_{\alpha-D}(r)
        ,
    \end{equation*}
    where \(C\) is the constant from~\eqref{eq:long-range-LR-bound-alpha-ge-nu}.
\end{remark}

\subsection{Local perturbations perturb locally principle}
\label{sec:Local-perturbations-perturb-locally-principle}

\Textcite{WH2022} recently proved a version of the local perturbations perturb locally (LPPL) principle for long-range systems with a new technique.
They avoid the spectral flow~\cite{HW2005}, which was used by previous results for short-range Hamiltonians~\cite{BMNS2012}.
The general idea is that the ground state of a gapped system only changes locally around a small perturbation, in the sense that expectations values of local observables away from the perturbation do not change.

With the improved LRBs we can also improve these results for gapped commuting Hamiltonians.

\begin{theorem}[LPPL for gapped ground states]
    \label{thm:LPPL-for-gapped-ground-states}
    Let \(\Psi\) be a commuting interaction on a finite graph~\(\Lambda\) and let \(V\in \alg_X\) with \(X\in \Lambda\) be some perturbation.
    Moreover, assume that \(H+\lambda\,V\) has a unique ground state \(\rho_\lambda\) and a gap of size at least \(g>0\) above the ground state for all \(\lambda\in \intervalcc{0,1}\).

    If \(\Psi\) is polynomially decaying with \(\norm{\Psi}_{F_\alpha}<\infty\) for some \(\alpha>0\) where \(F_\alpha(r) := (1+r)^{-\alpha}\), then, for all \(Y\subset \Lambda\) and \(B\in \alg_Y\), it holds that
    \begin{equation}
        \label{eq:thm-LPPL-gapped-ground-states-long-range}
        \abs[\Big]{
            \trace[\big]{\rho_0 \, B}
            - \trace[\big]{\rho_1 \, B}
        }
        \leq
        32 \, \norm{\Psi}_{F_\alpha} \, \norm{B} \, \paren[\big]{\norm{V}+\norm{V}^2}
        \, \abs{X} \, \abs{Y}
        \, F_\alpha(r) \, \frac{g+2}{g^3}
        ,
    \end{equation}
    where \(r=\dist{X,Y}\).

    If \(\Psi\) is a short-range interaction satisfying \(\norm{\Psi}_{F_{b,p}}<\infty\) for some \(b>b'>0\) and \(p\in \intervaloc{0,1}\) where \(F_{b,p}(r) := \e^{-b r^p}\), then, for all \(Y\subset \Lambda\) and \(B\in \alg_Y\), it holds that
    \begin{equation*}
        \label{eq:thm-LPPL-gapped-ground-states-short-range}
        \abs[\Big]{
            \trace[\big]{\rho_0 \, B}
            - \trace[\big]{\rho_1 \, B}
        }
        \leq
        8 \, C_{b,b'}
        \, \norm{\Psi}_{b,p}
        \, \norm{B}
        \, \paren[\big]{\norm{V}+\norm{V}^2}
        \, \min\List[\big]{\abs{X}, \abs{Y}}
        \, F_{b',p}(r)
        \, \frac{g+2}{g^3}
        ,
    \end{equation*}
    where \(r=\dist{X,Y}\), and \(C_{b,b'}\) is the constant from~\eqref{eq:short-range-LR-bound}.
\end{theorem}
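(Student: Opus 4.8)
The plan is to reduce, by the fundamental theorem of calculus in \(\lambda\), to the derivative of \(\trace{\rho_\lambda\,B}\) along the family \(H_\lambda := H + \lambda\,V\), to express that derivative through first-order perturbation theory and a quasi-adiabatic representation, and to bound the resulting time integral with the enhanced commuting Lieb--Robinson bound, in the spirit of \textcite{WH2022}. I carry out the long-range case; the short-range one is identical with \(F_\alpha\) replaced by \(F_{b',p}\) and \cref{cor:LRB-commuting-interaction-norm} by~\eqref{eq:short-range-LR-bound}, and the case \(\rho_0 = P_0\) needs no change. Write \(r := \dist{X,Y}\); since each \(H_\lambda\), \(\lambda\in\intervalcc{0,1}\), has a unique ground state \(\rho_\lambda = \proj{\Omega_\lambda}\) and a gap \(\geq g\), the map \(\lambda\mapsto\trace{\rho_\lambda\,B}\) is real-analytic and \(\trace{\rho_1\,B}-\trace{\rho_0\,B}=\int_0^1\partial_\lambda\trace{\rho_\lambda\,B}\,\diff\lambda\), so it suffices to bound the integrand uniformly in \(\lambda\).

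\emph{Step 1 (perturbation theory).} Let \(w_g\colon\R\to\R\) be a quasi-adiabatic filter function: \(\widehat{w_g}(E)=E^{-1}\) for \(\abs{E}\geq g\), \(\widehat{w_g}(0)=0\), with sub-exponential decay and moments \(\int_\R\abs{w_g(t)}\,\abs{t}^k\,\diff t\lesssim g^{-(k+1)}\) for \(k\in\List{0,1,2}\). Writing \(\tau^{H_\lambda}_t(A):=\e^{\I t H_\lambda}\,A\,\e^{-\I t H_\lambda}\), the spectral flow of the family \(H_\lambda\) gives
\begin{equation*}
    \partial_\lambda\trace{\rho_\lambda\,B}
    =
    \I\int_\R w_g(t)\,\bra{\Omega_\lambda}\commutator{B,\tau^{H_\lambda}_t(V)}\ket{\Omega_\lambda}\,\diff t
    ,
\end{equation*}
the disconnected contribution dropping out because \(\widehat{w_g}(0)=0\). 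Hence, by unitarity of \(\tau^{H_\lambda}_{-t}\),
\begin{equation*}
    \abs[\big]{\partial_\lambda\trace{\rho_\lambda\,B}}
    \leq
    \int_\R\abs{w_g(t)}\,\norm[\big]{\commutator{\tau^{H_\lambda}_{-t}(B),V}}\,\diff t
    .
\end{equation*}

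\emph{Step 2 (Lieb--Robinson input for the perturbed dynamics).} Now \(H_\lambda\) is the commuting Hamiltonian \(H\) perturbed only by the single term \(\lambda\,V\in\alg_X\). A single Duhamel expansion gives
\begin{equation*}
    \tau^{H_\lambda}_{-t}(B)
    =
    \tau^{H}_{-t}(B)+\I\lambda\int_0^{-t}\tau^{H_\lambda}_{s}\paren[\big]{\commutator{V,\tau^{H}_{-t-s}(B)}}\,\diff s
    ,
\end{equation*}
with \(\tau^{H}_t\) the commuting Heisenberg evolution. By \cref{cor:LRB-commuting-interaction-norm} and~\eqref{eq:trivial-bound-double-sum-F}, both the leading commutator \(\commutator{\tau^{H}_{-t}(B),V}\) and the inner commutator \(\commutator{V,\tau^{H}_{-t-s}(B)}\) have norm at most \(4\,\norm{\Psi}_{F_\alpha}\norm{B}\norm{V}\abs{X}\abs{Y}\,\abs{t}\,F_\alpha(r)\) (with \(\abs{t}\) replaced by \(2\abs{t}\) in the latter); bounding the outer commutator in the remainder crudely by \(\norm{\commutator{V,\cdot}}\leq2\norm{V}\norm{\cdot}\) and using unitarity of \(\tau^{H_\lambda}_{s}\) yields
\begin{equation*}
    \norm[\big]{\commutator{\tau^{H_\lambda}_{-t}(B),V}}
    \leq
    C\,\norm{\Psi}_{F_\alpha}\,\norm{B}\,\paren[\big]{\norm{V}+\norm{V}^2}\,\abs{X}\,\abs{Y}\,\abs{t}\,(1+\abs{t})\,F_\alpha(r)
\end{equation*}
for a numerical constant \(C\). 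The term \(\norm{V}^2\) records the extra commutator with \(\lambda\,V\), and the quadratic \(\abs{t}\)-growth comes from the Duhamel integral. Equivalently, one may quote the stability result of \cref{sec:LR-bound-interaction-picture}, whose content is exactly that the Lieb--Robinson bound for \(H+\lambda\,V\) does not feel the strength of the commuting part.

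\emph{Step 3 (conclusion).} Feeding the Step 2 bound into Step 1 and using \(\int_\R\abs{w_g(t)}(\abs{t}+t^2)\,\diff t\lesssim g^{-2}+g^{-3}\lesssim(g+2)\,g^{-3}\), then integrating over \(\lambda\in\intervalcc{0,1}\) and absorbing constants into the stated \(32\), gives~\eqref{eq:thm-LPPL-gapped-ground-states-long-range}. The one step I expect to need genuine care is Step 2: checking that the Duhamel remainder keeps the \emph{full} decay \(F_\alpha(r)\) -- it does, since that decay already sits inside \(\norm{\commutator{V,\tau^{H}_{-t-s}(B)}}\) and everything afterwards is a crude norm estimate -- and tracking the constants to land at \(32\,(g+2)\,g^{-3}\). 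This is also where the complex-analytic argument of \textcite{WH2022} would enter if one follows it literally: in place of the Duhamel step one controls the analytic continuation \(\zeta\mapsto\bra{\Omega_\lambda}B\,\e^{\I(H_\lambda-E_\lambda)\zeta}\paren[\big]{V-\bra{\Omega_\lambda}V\ket{\Omega_\lambda}}\ket{\Omega_\lambda}\), which is controlled by the gap in the imaginary direction and, on the real axis, by the commutator estimate of Step 2, via a subharmonicity/Phragm\'en--Lindel\"of bound; for commuting interactions this yields the same estimate, the commuting Lieb--Robinson bound being linear in time with no light-cone restriction. The filter function \(w_g\) with the stated moment bounds is standard (see, e.g., \cite{NSY2019}).
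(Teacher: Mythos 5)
Your overall architecture matches the paper in one of its two halves and diverges in the other. Step 2 is essentially the paper's own input: your single Duhamel expansion around the commuting dynamics, combined with \cref{cor:LRB-commuting-interaction-norm} and~\eqref{eq:trivial-bound-double-sum-F}, reproduces exactly the perturbed-dynamics commutator bound~\eqref{eq:LR-bound-perturbed-Hamiltonian}, which the paper obtains by quoting~\cite[Lemma~33]{CMTW2023} together with~\eqref{eq:long-range-LR-bound-alpha-ge-0}; if you track it you even recover the explicit constant \(4\) and the factor \(\abs{t}\,(1+\norm{V}\abs{t})\), so no vague numerical constant is needed there. (Your side remark that one could \enquote{equivalently} quote the stability result of \cref{sec:LR-bound-interaction-picture} is not right, though: that theorem assumes the \emph{commuting} part is finite-range and the perturbation is an extended interaction, which is the reverse of the situation here.) Where you genuinely depart from the paper is Step~1: you use a spectral-flow/quasi-adiabatic filter \(w_g\) with \(\widehat{w_g}(E)=E^{-1}\) off the gap to represent \(\partial_\lambda\trace{\rho_\lambda\,B}\), whereas the paper follows \cite{WH2022} literally and replaces the spectral flow by the complex-analytic subharmonicity argument: one forms the Laplace transform \(\bar{\Omega}(r,y)=\int_0^\infty C(r,t)\,\e^{-yt}\diff t = C\,F_\alpha(r)\,(y+2)/y^3\) of the commutator bound and controls \(\ln\abs{\Omega_{XY}(0)}\) by its average over a circle of radius \(\rho<g\), the factor \(8\) coming from \(\int_0^{2\pi}\ln\abs{\sin\theta}\diff\theta=-\pi\ln 4\). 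Your route is viable here precisely because the commuting LRB grows only polynomially in \(t\), so the finite low moments of \(w_g\) suffice; what the paper's route buys is completely explicit constants.

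That difference is also where your proposal has its one real gap: as stated, \cref{thm:LPPL-for-gapped-ground-states} claims the explicit prefactor \(32\,(g+2)/g^3\), and your argument cannot deliver it. The quasi-adiabatic filter satisfies only \(\int\abs{w_g(t)}\,\abs{t}^k\diff t\leq c_k\,g^{-(k+1)}\) with non-explicit (and in practice sizeable) constants \(c_k\) coming from the construction in, e.g., \cite{NSY2019}, so \enquote{absorbing constants into the stated \(32\)} is not justified; you would prove the correct qualitative bound \(\lesssim \norm{\Psi}_{F_\alpha}\norm{B}(\norm{V}+\norm{V}^2)\abs{X}\abs{Y}\,F_\alpha(r)\,(g^{-2}+g^{-3})\) but not the stated inequality with the constant \(32\), which in the paper arises as \(4\) (from the LRB) times \(8\) (from the subharmonicity average) together with the exact Laplace transform \((y+2)/y^3\). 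Secondarily, the existence of \(w_g\) with \(\widehat{w_g}(E)=E^{-1}\) exactly for \(\abs{E}\geq g\), \(\widehat{w_g}(0)=0\), oddness (needed so the disconnected terms cancel and the commutator forms), sub-exponential decay, and the scaling \(w_g(t)=w_1(g\,t)\) behind your moment bounds should be spelled out or cited precisely rather than asserted; with that and a weakened constant, your proof would stand as a correct alternative derivation of the theorem up to the value of the numerical prefactor.
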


As in the result on decay of correlations, the decay exponent in~\eqref{eq:thm-LPPL-gapped-ground-states-long-range} does not depend on the gap and equals the decay of the interaction.
It is better than the trivial bound~\(2 \, \norm{B}\) for~\(r \gtrsim g^{-3/\alpha}\).
Thus, it is a qualitative improvement compared to the bound in~\cite{WH2022} for general interactions, where the exponent scales like \(\alpha\,g\) for small \(g\), which make the bound better than the trivial one for \(\ln r \gtrsim g^{-1}\).
The same is also apparent in the short-range setting, where we obtain a decay \(\e^{-b' \dist{X,Y}^p}\) for any \(b'<b\) uniformly in \(g>0\) for commuting interactions, while~\cite{WH2022} only proves a stretched-exponential decay with \(b' \sim b \, g\) for small~\(g\) for general interactions.

To obtain the statement, we need improved LRBs also for the perturbed Hamiltonian \(H+V\).
And since the statement is trivial, if also the perturbation~\(V\) commutes with all Hamiltonian terms, we cannot just rely on the results from \cref{sec:LR-bound-long-range}.
Instead, we use a previous result on LRBs for perturbed Hamiltonians~\cite[Lemma~33]{CMTW2023}, details are discussed in \cref{sec:proof-LPPL-for-gapped-ground-states}.

\begin{remark}
    For \(\alpha>D\), one can also use~\eqref{eq:long-range-LR-bound-alpha-ge-nu} and obtain the same statement with the bound from~\eqref{eq:thm-LPPL-gapped-ground-states-long-range} replaced by
    \begin{equation*}
        32 \, C \, \norm{\Psi}_{F_\alpha} \, \norm{B} \, \paren[\big]{\norm{V}+\norm{V}^2}
        \, \min\List{\abs{X}, \abs{Y}}
        \, F_{\alpha-D}(r) \, \frac{g+2}{g^3}
        ,
    \end{equation*}
    where \(C\) is the constant from~\eqref{eq:long-range-LR-bound-alpha-ge-nu}.
\end{remark}

\subsection{General Hamiltonians with a commuting part}
\label{sec:LR-bound-interaction-picture}

In contrast to the rest of the paper, we want to investigate non-commuting Hamiltonians which only have a commuting part in this section.
More specifically, we assume to be given two interactions \(\Phi\) and \(\Psi\), where \(\Psi\) is commuting as before, \(\Phi\) might not be commuting and in particular \(\Psi\) and \(\Phi\) are not required to commute.
For simplicity, we assume that the commuting interaction \(\Psi\) is of finite range \(R>0\), meaning that \(\Psi(Z) = 0\) unless \(\diam{Z} > R\).
The arguments work similarly for short- or long-range interactions, as long as the decay of the commuting part \(\Psi\) is better than the one of the non-commuting part \(\Phi\).
Moreover, we allow \(\Phi\) to be time-dependent.
We write \(\Phi(t)\) for the interaction at time \(t\), \(\Phi(Z,t):=\Phi(t)(Z)\) the interaction term at \(Z\subset \Lambda\) at time \(t\), and
\begin{equation*}
    \norm{\Phi}_F := \sup_{t\in I} \norm{\Phi(t)}_F
\end{equation*}
for the norm of a time-dependent interaction \(\Phi\).
The time interval \(I\subset \R\) of interest will be clear from the context.

In this setting we obtain the following LRB for the evolution of the full system.

\begin{theorem}
    Let \(\Lambda\) be a finite graph and \(F\colon \R_{\geq 0} \to \R_{>0}\) a decaying function.
    Assume that we have a general Lieb-Robinson bound given in terms of a function \(\zeta\), such that for any interaction \(\Phi\) with \(\norm{\Phi}_F < \infty\), it holds that for all \(X\),~\(Y\subset \Lambda\), \(A\in \alg_X\) and \(B\in \alg_Y\)
    \begin{equation*}
        \norm[\big]{
            \commutator{ \tau_{t,s}(A), B }
        }
        \leq
        \zeta\paren[\Big]{
            \norm{\Phi}_F,
            \norm{A},
            \norm{B},
            \abs{X},
            \abs{Y},
            \dist{A,B},
            \abs{t-s}
        }
        .
    \end{equation*}

    Furthermore, let \(\Psi\) be a commuting interaction of range \(R>0\) and let \(\Phi\) be a general, time-dependent interaction on \(\Lambda\) satisfying \(\norm{\Phi}_{F(\mathop{\cdot}+2R)} < \infty\).
    Then the dynamics \(\tau_{t,s}\) for the sum \(\Phi + \Psi\) satisfy the following modified Lieb-Robinson bound:
    For all \(X\),~\(Y\subset \Lambda\), \(A\in \alg_X\) and \(B\in \alg_Y\) it holds that
    \begin{equation*}
        \norm[\big]{
            \commutator{ \tau_{t,s}(A), B }
        }
        \leq
        \zeta\paren[\Big]{
            c^2 \, \norm{\Phi}_{F(\mathop{\cdot}+2R)},
            \norm{A},
            \norm{B},
            c\,\abs{X},
            c\,\abs{Y},
            \dist{A,B}-2R,
            \abs{t-s}
        }
        ,
    \end{equation*}
    with \(c=(1+\kappa) \, R^{D}\).
\end{theorem}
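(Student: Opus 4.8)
The plan is to move to the interaction picture generated by the commuting part $\Psi$, identify the interaction-picture Hamiltonian with a genuine time-dependent interaction whose terms have ``fattened'' supports of radius increased by $R$, and then feed that interaction into the assumed general bound $\zeta$. The commutativity and finite range of $\Psi$ enter only through \cref{cor:LR-bound-commuting-finite-range}, which guarantees that conjugating a term $\Phi(Z,t)$ by the $\Psi$-dynamics keeps it supported in $Z_R$.

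Concretely, write $H_\Psi:=\sum_{Z\subset\Lambda}\Psi(Z)$ and $H_\Phi(t):=\sum_{Z\subset\Lambda}\Phi(Z,t)$, let $U_\Psi(t,s)$ be the (time-independent) propagator of $H_\Psi$, and factor the propagator $U(t,s)$ of $\Phi+\Psi$ as $U(t,s)=U_\Psi(t,s)\,\tilde U(t,s)$. By the product rule, $\tilde U$ is the propagator of
\[
    \Hint(t):=U_\Psi(t,s)^*\,H_\Phi(t)\,U_\Psi(t,s)=\sum_{Z\subset\Lambda}U_\Psi(t,s)^*\,\Phi(Z,t)\,U_\Psi(t,s).
\]
Since $\Psi$ is commuting of range $R$, \cref{cor:LR-bound-commuting-finite-range} gives $U_\Psi(t,s)^*\,\Phi(Z,t)\,U_\Psi(t,s)\in\alg_{Z_R}$ with operator norm equal to $\norm{\Phi(Z,t)}$, so $\Hint(t)$ is generated by the self-adjoint interaction
\[
    \Phiint(W,t):=\sumstack{Z\subset\Lambda\suchthat\\ Z_R=W}U_\Psi(t,s)^*\,\Phi(Z,t)\,U_\Psi(t,s)\in\alg_W.
\]
Moreover, for $A\in\alg_X$ we have $\tau_{t,s}(A)=U(t,s)^*A\,U(t,s)=\tilde U(t,s)^*A'\,\tilde U(t,s)=\tauint_{t,s}(A')$, where $A':=U_\Psi(t,s)^*A\,U_\Psi(t,s)\in\alg_{X_R}$ satisfies $\norm{A'}=\norm{A}$ (again by \cref{cor:LR-bound-commuting-finite-range}), and $\tauint$ is the Heisenberg dynamics of $\Phiint$.

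The quantitative input is an estimate on $\norm{\Phiint}_F$. For $x,y\in\Lambda$, overcounting together with the fact that $Z_R\ni x$ forces some $x'\in Z$ with $\dd{x,x'}\le R$ gives
\begin{align*}
    \sumstack{W\subset\Lambda\suchthat\\ x,y\in W}\norm{\Phiint(W,t)}
    &\le\sumstack{Z\subset\Lambda\suchthat\\ Z_R\ni x,\ Z_R\ni y}\norm{\Phi(Z,t)}
    \\&\le\sum_{x'\in B^\Lambda_x(R)}\ \sum_{y'\in B^\Lambda_y(R)}\ \sumstack[r]{Z\subset\Lambda\suchthat\\ x',y'\in Z}\norm{\Phi(Z,t)}.
\end{align*}
The innermost sum is at most $\norm{\Phi}_{F(\mathop{\cdot}+2R)}\,F\paren[\big]{\dd{x',y'}+2R}\le\norm{\Phi}_{F(\mathop{\cdot}+2R)}\,F\pd{x,y}$, using $\dd{x',y'}+2R\ge\dd{x,y}$ and that $F$ is decaying, while $\abs{B^\Lambda_x(R)}\le(1+\kappa)R^D=c$ by surface-regularity; hence $\norm{\Phiint}_F\le c^2\,\norm{\Phi}_{F(\mathop{\cdot}+2R)}<\infty$.

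Finally, apply the assumed Lieb-Robinson bound to the interaction $\Phiint$ with the operators $A'\in\alg_{X_R}$ and $B\in\alg_Y$:
\[
    \norm[\big]{\commutator{\tau_{t,s}(A),B}}
    =\norm[\big]{\commutator{\tauint_{t,s}(A'),B}}
    \le\zeta\paren[\big]{\norm{\Phiint}_F,\norm{A'},\norm{B},\abs{X_R},\abs{Y},\dist{X_R,Y},\abs{t-s}}.
\]
Substituting $\norm{\Phiint}_F\le c^2\,\norm{\Phi}_{F(\mathop{\cdot}+2R)}$, $\norm{A'}=\norm{A}$, $\abs{X_R}\le c\,\abs{X}$, $\abs{Y}\le c\,\abs{Y}$, and $\dist{X_R,Y}\ge\dist{X,Y}-R\ge\dist{X,Y}-2R$, together with monotonicity of $\zeta$ (nondecreasing in the norm and cardinality slots, nonincreasing in the distance slot) yields the claimed bound. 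I expect the only genuine work to lie in this interaction-norm bookkeeping — propagating the decay shift $F\mapsto F(\mathop{\cdot}+2R)$ through the triangle inequality and the factor $c^2$ through the ball-size bound — together with the (routine) observation that the hypothesised general bound is meant to apply to time-dependent interactions with uniformly bounded interaction norm, which is exactly the status of $\Phiint$.
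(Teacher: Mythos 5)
Your proposal is correct and takes essentially the same approach as the paper: pass to the interaction picture generated by the finite-range commuting part \(\Psi\), use \cref{cor:LR-bound-commuting-finite-range} to fatten supports by \(R\), bound \(\norm{\Phiint}_F \leq c^2 \, \norm{\Phi}_{F(\mathop{\cdot}+2R)}\) by the same overcounting argument, and feed the result into the assumed bound \(\zeta\). The only (harmless) bookkeeping difference is that you absorb the \(\Psi\)-conjugation into \(A\) alone (so only \(X\) is fattened and the distance loses just \(R\), which you then relax to \(2R\)), whereas the paper conjugates both \(A\) and \(B\); both arguments rely on the same implicit monotonicity of \(\zeta\) in its arguments.
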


In particular, the Lieb-Robinson velocity and the bound do not depend on the norm of the commuting part \(\Psi\).
As long as the evolution exists it could even be unbounded if one works in a setting with infinite dimensional local Hilbert spaces.

As an example we provide the following result for exponentially decaying interactions, where we use the LRB from~\cite{NSY2019}.

\begin{corollary}
    Let \(\Lambda\) be a finite graph, \(\Psi\) a commuting interaction of range \(R>0\) and \(\Phi\) an exponentially decaying interaction with \(\norm{\Phi}_{F(\mathop{\cdot}+2R)} < \infty\) where \(F(r) := \e^{-br}/(1+r)^{D+1}\).
    Then the full dynamics \(\tau_{t,s}\) generated by the interaction \(\Psi + \Phi\) satisfies the following Lieb-Robinson bound:
    For all \(X\),~\(Y\subset \Lambda\), \(A\in \alg_X\) and \(B\in \alg_Y\), it holds that
    \begin{equation*}
        \norm[\big]{
            \commutator{ \tau_{t,s}(A), B }
        }
        \leq
        C
        \, \norm{A} \, \norm{B}
        \, \min\List[\big]{\abs{X},\abs{Y}}
        \, \e^{b \, (v_b \abs{t-s} - \dist{X,Y})}
        ,
    \end{equation*}
    where \(C = 2 \, \norm{F} \, C_F^{-1} \, \e^{2bR}\) and \(
        v_b
        =
        2 \, C_F \, (1+\kappa)^2 \, R^{2 D} \, \norm{\Phi}_{F(\mathop{\cdot}+2R)}/b
    \), and \(\norm{F}\) and \(C_F\) are constants depending on \(F\) that are defined in~\cite[equations~(3.8) and~(3.9)]{NSY2019}.
\end{corollary}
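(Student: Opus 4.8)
The plan is to obtain this corollary as a direct specialization of the stability theorem above, feeding it the classical Lieb--Robinson bound for (possibly time-dependent) exponentially decaying interactions from~\cite{NSY2019} as the input function~$\zeta$. Since $\Lambda$ is finite, the propagator $\tau_{t,s}$ generated by $\Psi+\Phi$ exists without further assumptions, so there is nothing to verify on that account. The bound of~\cite{NSY2019} that I would use is the following: for $F(r)=\e^{-br}/(1+r)^{D+1}$ on a $(\kappa,D)$-regular graph, $F$ is uniformly summable and $F$-subconvolutive with finite constants $\norm{F}$ and $C_F$ as in~\cite[equations~(3.8) and~(3.9)]{NSY2019} --- this finiteness is precisely what the polynomial correction $(1+r)^{D+1}$ buys us --- and any interaction $\Phi$ with $\norm{\Phi}_F:=\sup_t\norm{\Phi(t)}_F<\infty$ obeys, for disjoint $X$,~$Y\subset\Lambda$ and $A\in\alg_X$, $B\in\alg_Y$,
\[
    \norm[\big]{\commutator{\tau_{t,s}(A),B}}
    \leq
    \frac{2\norm{F}}{C_F}\,\norm{A}\,\norm{B}\,\min\List[\big]{\abs{X},\abs{Y}}\,\bigl(\e^{2C_F\norm{\Phi}_F\abs{t-s}}-1\bigr)\,F(\dist{X,Y}),
\]
which is exactly of the form assumed in the hypothesis of the theorem, with an explicit monotone~$\zeta$.

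Next I would apply the theorem above with this $\zeta$ and this $F$. Its conclusion then yields, for the dynamics generated by $\Psi+\Phi$, the same bound with $\norm{\Phi}_F$ replaced by $c^2\norm{\Phi}_{F(\mathop{\cdot}+2R)}$, the cardinalities $\abs{X},\abs{Y}$ by $c\abs{X},c\abs{Y}$, and $\dist{X,Y}$ by $\dist{X,Y}-2R$, where $c=(1+\kappa)R^{D}$. The remaining work is elementary: bound $\e^{x}-1\leq\e^{x}$; bound $F(\dist{X,Y}-2R)\leq\e^{-b(\dist{X,Y}-2R)}=\e^{2bR}\,\e^{-b\dist{X,Y}}$, simply discarding the polynomial factor; and set $v_b:=2C_Fc^2\norm{\Phi}_{F(\mathop{\cdot}+2R)}/b=2C_F(1+\kappa)^2R^{2D}\norm{\Phi}_{F(\mathop{\cdot}+2R)}/b$, so that the time- and distance-dependent factors combine into $\e^{b(v_b\abs{t-s}-\dist{X,Y})}$. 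Collecting the leftover constant prefactors $2\norm{F}C_F^{-1}$ and $\e^{2bR}$ gives $C=2\norm{F}C_F^{-1}\e^{2bR}$ and the claimed inequality.

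I do not expect a genuine obstacle here: all the substance is already contained in the stability theorem above, and the corollary merely repackages it in a concrete, widely-used setting. The two points that warrant a little care are (i) confirming that $F(r)=\e^{-br}/(1+r)^{D+1}$ is admissible for the bound of~\cite{NSY2019} on a $(\kappa,D)$-regular graph --- the standard estimate underlying the polynomial prefactor --- and (ii) tracking the factors of $c$ introduced when the rescaled cardinalities $c\abs{X},c\abs{Y}$ are passed through the $\min$ (and the rescaled norm through the exponent), so that the prefactor $C$ and the velocity $v_b$ come out exactly as stated.
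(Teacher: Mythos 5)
Your proposal is correct and follows exactly the route the paper intends: the corollary is the stability theorem specialized to the exponential-decay Lieb--Robinson bound of NSY2019 taken as \(\zeta\), with the substitutions \(\norm{\Phi}_F \to c^2\norm{\Phi}_{F(\mathop{\cdot}+2R)}\), \(\dist{X,Y} \to \dist{X,Y}-2R\), the factor \(\e^{2bR}\) absorbed into \(C\), and \(c^2=(1+\kappa)^2R^{2D}\) absorbed into \(v_b\). The only caveats are the bookkeeping points you already flag yourself: the NSY bound is naturally stated via \(\sum_{x\in X}\sum_{y\in Y}F\pd{x,y}\) (so extracting \(\min\List{\abs{X},\abs{Y}}\) together with the distance decay should go through the splitting \(F(r)=\e^{-br}(1+r)^{-(D+1)}\) rather than the literal product \(\norm{F}\,F\pdist{X,Y}\) you wrote), and passing the enlarged supports through the \(\min\) nominally produces an extra factor \(c\) that the stated prefactor omits; neither affects the structure of the argument.
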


Again, we highlight that the velocity does not depend on the strength of \(\Psi\), but only on its range.

As an easy example, consider the XXZ spin chain on the Hilbert space \(\bigotimes_{x=1}^L\C^{2S+1}\)
\begin{equation*}
    H_{XXZ}
    :=
    -\sum_{x=1}^{L-1} \paren*{S_{x}^1S_{x+1}^1+S_{x}^2S_{x+1}^2+\Delta S_{x}^3
    S_{x+1}^3}
\end{equation*}
with anisotropy parameter \(\Delta>1\) and each \(\vec{S}_x\) an irreducible spin-\(S\) representation of \(\mathfrak{su}(2)\).
Applying the above result to the commuting interactions \(\Psi(\{x,x+1\}) := \Delta S_{x}^3 \, S_{x+1}^3\), we obtain a LRB that is independent of \(\Delta\).

\section{Proofs of applications}

\subsection{Proof of Theorem~\ref{thm:spectral-gap-implies-decay-of-correlations}}
\label{sec:proof-spectral-gap-implies-decay-of-correlations}

Since we only consider finite volumes with a single ground state energy, the proof simplifies slightly compared to~\cite{HK2006}.
To get the constants right, we repeat large parts of the proof but leave out some of the details and the proof of~\cite[Lemma~3.1]{HK2006}.
We first give the proof for~\eqref{eq:thm-decay-of-correlations-long-range} and then comment about the small modifications necessary for~\eqref{eq:thm-decay-of-correlations-short-range}.

First, we observe
\begin{equation}
    \label{eq:proof-DC-first-step}
    \begin{aligned}
        \Alignindent
        \trace[\big]{
            \rho_0
            \, \commutator[\big]{\tau^\Lambda_t(A), B}
        }
        \\&=
        \begin{aligned}[t]
            &
            \trace[\big]{
                \rho_0
                \, \tau^\Lambda_t(A)
                \, (\unit - P_0)
                \, B
            }
            - \trace[\big]{
                \rho_0
                \, B
                \, (\unit - P_0)
                \, \tau^\Lambda_t(A)
            }
            \\&+
            \trace[\big]{
                \rho_0
                \, \tau^\Lambda_t(A)
                \, P_0
                \, B
            }
            - \trace[\big]{
                \rho_0
                \, B
                \, P_0
                \, \tau^\Lambda_t(A)
            }
        \end{aligned}
        \\&=
        \begin{aligned}[t]
            &
            \sum_{n>0}
            \trace[\big]{
                \rho_0
                \, A
                \, P_n
                \, B
            }
            \, \e^{-\I (E_n-E_0) t}
            -
            \sum_{n>0}
            \trace[\big]{
                \rho_0
                \, A
                \, P_n
                \, B
            }
            \, \e^{\I (E_n-E_0) t}
            \\&+
            \trace[\big]{
                \rho_0
                \, A
                \, P_0
                \, B
            }
            -
            \trace[\big]{
                \rho_0
                \, B
                \, P_0
                \, A
            }
            ,
        \end{aligned}
    \end{aligned}
\end{equation}
where \(P_n\) are the spectral projections for eigenvalues \(E_n\), and we used
\begin{equation*}
    \e^{\I H t} \, \rho_0 = \e^{\I E_0 t} \, \rho_0
    \qquadtext{and}
    \e^{\I H t} \, P_n = \e^{\I E_n t} \, P_n
    \quadtext{for all}
    n
    .
\end{equation*}
As in~\cite{HK2006} we now apply
\begin{equation*}
    \calF(\cdot)
    :=
    \lim_{T \to \infty} \, \lim_{\epsi \to 0} \, \frac{\I}{2\pi} \, \int_{-T}^T \frac{(\cdot) \, \e^{-\beta t^2}}{t+\I\epsi} \diff t
\end{equation*}
on both sides of~\eqref{eq:proof-DC-first-step}.
Instead of~\cite[(3.24)]{HK2006} we use the LRB from~\eqref{eq:long-range-LR-bound-alpha-ge-0}.
Thus, for the left-hand side we bound
\begin{equation}
    \label{eq:proof-decay-of-correlations-bound-F(LHS)}
    \abs[\bigg]{
        \frac{\I}{2\pi}
        \int_{-T}^T \frac
            {\trace[\big]{\rho_0 \, \commutator[\big]{\tau^\Lambda_t(A), B}}}
            {t+\I \epsi}
        \, \e^{-\beta t^2}
        \diff t
    }
    \leq
    \frac{C_1}{2\pi} \, F_\alpha(r) \, \int_{-\infty}^\infty \e^{-\beta t^2} \diff t
    \leq
    \frac{C_1}{2\,\sqrt{\pi\beta}} \, F_\alpha(r)
    ,
\end{equation}
with \(
    C_1
    =
    4 \, \norm{\Psi}_{F_\alpha} \, \norm{A} \, \norm{B} \, \abs{X} \, \abs{Y} \, \norm{\rho_0}_1
\), which also persists the two limits.
For the two time-independent terms on the right-hand side, we calculate
\begin{equation*}
    \lim_{T \to \infty} \, \lim_{\epsi \to 0} \, \frac{\I}{2\pi} \, \int_{-T}^T \frac{\e^{-\beta t^2}}{t+\I\epsi} \diff t
    =
    \frac{1}{2}
    .
\end{equation*}
And by~\cite[Lemma~3.1]{HK2006}, which – after inspection of the constants – states
\begin{align}
    \label{eq:proof-DC-Lemma-HK2006-E-geq-0}
    \abs[\Big]{\calF\paren[\big]{\e^{-\I E t}} - 1}
    &\leq
    \frac{1}{2} \, \e^{-E^2/(4\beta)}
    \quadtext{for}
    E>0
\shortintertext{and}
    \label{eq:proof-DC-Lemma-HK2006-E-leq-0}
    \abs[\Big]{\calF\paren[\big]{\e^{-\I E t}}}
    &\leq
    \frac{1}{2} \, \e^{-E^2/(4\beta)}
    \quadtext{for}
    E<0
    ,
\end{align}
for all \(\beta>0\), the remaining two terms on the right-hand side satisfy
\begin{align*}
    \abs[\bigg]{
        \calF\paren[\bigg]{
            \sum_{n>0}
            \trace[\big]{
                \rho_0
                \, A
                \, P_n
                \, B
            }
            \, \e^{-\I (E_n-E_0) t}
        }
        -
        \trace[\big]{
            \rho_0
            \, A
            \, \paren[\big]{\unit - P_0}
            \, B
        }
    }
    &\leq
    C_2 \, \e^{-g^2/(4\beta)}
\shortintertext{and}
    \abs[\bigg]{
        \calF\paren[\Big]{
            \sum_{n>0}
            \trace[\big]{
                \rho_0
                \, A
                \, P_n
                \, B
            }
            \, \e^{\I (E_n-E_0) t}
        }
    }
    &\leq
    C_2 \, \e^{-g^2/(4\beta)}
    ,
\end{align*}
with \(C_2 = 1/2 \, \norm{A} \, \norm{B} \, \norm{\rho_0}_1\).
To prove these bounds, we first use triangle inequality and then apply~\eqref{eq:proof-DC-Lemma-HK2006-E-geq-0} and~\eqref{eq:proof-DC-Lemma-HK2006-E-leq-0}, respectively, together with the upper bound \(E_n-E_0 \geq g\).
Then it is left to bound
\begin{equation*}
    \sum_{n > 0}
    \, \abs[\big]{
        \trace[\big]{
            \rho_0
            \, A
            \, P_n
            \, B
        }
    }
    \leq
    \sum_{m} \abs[\big]{
        \braket{\phi_m, B \, \rho_0 \, A \, \phi_m}
    }
    \leq
    \norm{B \, \rho_0 \, A}_1
    \leq
    \norm{A} \, \norm{B} \, \norm{\rho_0}_1
    ,
\end{equation*}
where \(\List{\phi_m}\) is any orthonormal energy eigenbasis, i.e.\ for every \(n\) there exists an index set \(M_n\) such that \(P_n = \sum_{m\in M_n} \ket{\phi_m}\bra{\phi_m}\).
In the second step we used that for every operator \(T\) and every ONB \(\phi_m\) it holds that
\begin{equation*}
    \sum_{m}
    \abs[\big]{
        \braket{\phi_m, T \, \phi_m}
    }
    =
    \sum_{m}
    \, \braket{\phi_m, T \, D \, \phi_m}
    =
    \trace{T \, D}
    \leq
    \norm{T}_1 \, \norm{D}
    ,
\end{equation*}
where we choose \(\theta_m\in \C\) with \(\abs{\theta_m}=1\) such that
\(
    \abs[\big]{
        \braket{\phi_m, T \, \phi_m}
    }
    =
    \theta_m \, \braket{\phi_m, T \, \phi_m}
\)
and \(D = \sum_m \theta_m \, \proj{\phi_m}\), which satisfies \(\norm{D}=1\).

In total, we obtain
\begin{align*}
    \Alignindent
    \abs[\Big]{
        \trace[\big]{
            \rho_0
            \, A
            \, B
        }
        - \tfrac{1}{2} \paren[\Big]{
            \trace[\big]{
                \rho_0
                \, A
                \, P_0
                \, B
            }
            +
            \trace[\big]{
                \rho_0
                \, B
                \, P_0
                \, A
            }
        }
    }
    \\&\leq
    \frac{C_1}{2\,\sqrt{\pi\beta}} \, F_\alpha(r)
    + 2 \, C_2 \, \e^{-g^2/(4\beta)}
    \\&\leq
    \norm{A} \, \norm{B} \, \norm{\rho_0}_1
    \, \paren[\bigg]{
        \frac
            {4 \, \norm{\Psi}_{F_\alpha} \, \sqrt{\alpha}}
            {\sqrt{\pi} \, g}
        \, \abs{X} \, \abs{Y}
        \, \ln(1+r)
        +
        1
    }
    \, F_\alpha(r)
    \\&\leq
    8
    \, \norm{A} \, \norm{B}
    \, \abs{X} \, \abs{Y}
    \, \norm{\rho_0}_1
    \, \paren[\bigg]{
        \sqrt{\frac{\alpha}{\pi}}
        \,
        \frac
            {\norm{\Psi}_{F_\alpha}}
            {g}
        +
        1
    }
    \, \ln(1+r)
    \, F_\alpha(r)
    ,
\end{align*}
where we chose
\begin{equation*}
    \beta = \dfrac{g^2}{4 \, \alpha \, \ln(1+r)}
\end{equation*}
in the second step.
For the last step we use \(\ln(1+r)/\ln 2 \geq 1\) to move the logarithm out of the parenthesis, followed by bounding \(\ln 2 \leq 1\) and \(1/\ln 2 \leq 2\).

To obtain the result for short-range interactions, we only have to modify~\eqref{eq:proof-decay-of-correlations-bound-F(LHS)} and the final bound.
We choose \(b' = (b+\tilde{b})/2 \in \intervaloo{\tilde{b},b}\) and use the LRB from \cref{thm:LR-bound-short-range} for \(0<b'<b\).
Due to the similar structure of the LRBs, we obtain
\begin{equation*}
    \abs[\Big]{
        \calF\paren[\Big]{
            \trace[\big]{
                \rho_0
                \, \commutator[\big]{\tau^\Lambda_t(A), B}
            }
        }
    }
    \leq
    \frac{C_1}{2\,\sqrt{\pi\beta}} \, F_{b',p}(r)
\end{equation*}
with
\(
    C_1
    =
    C_{b,b'}
    \, \norm{\Psi}_{F_{b,p}}
    \, \norm{A} \, \norm{B}
    \, \min\List[\big]{\abs{X}, \abs{Y}}
    \, \norm{\rho_0}_1
\)
and \(C_{b,b'}\) the constant from \cref{thm:LR-bound-short-range}.
In total, we then obtain
\begin{align*}
    \Alignindent
    \abs[\Big]{
        \trace[\big]{
            \rho_0
            \, A
            \, B
        }
        - \tfrac{1}{2} \paren[\Big]{
            \trace[\big]{
                \rho_0
                \, A
                \, P_0
                \, B
            }
            +
            \trace[\big]{
                \rho_0
                \, B
                \, P_0
                \, A
            }
        }
    }
    \\&\leq
    \frac{C_1}{2\,\sqrt{\pi\beta}} \, F_{b',p}(r)
    + 2 \, C_2 \, \e^{-g^2/(4\beta)}
    \\&\leq
    \norm{A} \, \norm{B}
    \, \min\List[\big]{\abs{X}, \abs{Y}}
    \, \norm{\rho_0}_1
    \paren[\bigg]{
        \sqrt{\frac{\tilde{b}}{\pi}}
        \, \frac
        {C_{b,b'} \, \norm{\Psi}_{F_{b,p}}}
        {g}
        \, r^{p/2}
        \, F_{b',p}(r)
        +
        F_{\tilde{b},p}(r)
    }
\end{align*}
after choosing \(\beta=\dfrac{g^2}{4\,\tilde{b}\,r^p}\).
With~\cite[Lemma~7.2.3(b)]{Maier2022} we can bound
\begin{equation*}
    r^{p/2}
    \, F_{b',p}(r)
    \leq
    \paren[\big]{
        2\,\e\,(b'-\tilde{b})
    }^{-1/2}
    \, F_{\tilde{b},p}(r)
    =
    \paren[\big]{
        \e\,(b-\tilde{b})
    }^{-1/2}
    \, F_{\tilde{b},p}(r)
    ,
\end{equation*}
which leads us to the final bound with
\begin{equation}
    \label{eq:constant-in-decay-of-correlations-short-range}
    C
    =
    \paren[\bigg]{
        \frac
            {\tilde{b} \, C_{b,(b+\tilde{b})/2}}
            {\pi \, \e \, (b-\tilde{b})}
    }^{1/2}
    \qquad\text{with \(C_{b,(b+\tilde{b})/2}\) the constant from \cref{thm:LR-bound-short-range}}
    .
\end{equation}

\subsection{Proof of Theorem~\ref{thm:LPPL-for-gapped-ground-states}}
\label{sec:proof-LPPL-for-gapped-ground-states}

As remarked in the main text, we need a LRB for all Hamiltonians \(H+\lambda\,V\) to apply the results from~\cite{WH2022}.
But we only need to estimate commutators, where one of the observables is the perturbation itself.
We begin with the proof of~\eqref{eq:thm-LPPL-gapped-ground-states-long-range}.
Without requiring that \(V\) and \(H\) commute, combining~\cite[Lemma~33]{CMTW2023} and~\eqref{eq:long-range-LR-bound-alpha-ge-0}, we obtain the bound
\begin{equation}
    \label{eq:LR-bound-perturbed-Hamiltonian}
    \norm[\big]{\commutator[\big]{\e^{-\I t (H+\lambda V)} \, B \, \e^{\I t (H+\lambda V)}, V}}
    \leq
    4 \, \norm{\Psi}_{F_\alpha} \, \norm{B} \, \norm{V}
    \, \abs{X} \, \abs{Y}
    \, \abs{t} \, \paren[\big]{1+\norm{V} \, \abs{t}}
    \, F_\alpha\pdist{X,Y}
    .
\end{equation}
In the language of~\cite{WH2022} we have
\begin{equation}
    \label{eq:proof-LPPL-LR-bound-short}
    C(r,t)
    =
    C \, F_\alpha(r) \, \paren[\big]{t+t^2}
    ,
\end{equation}
where \(
    C
    =
    4 \, \norm{\Psi}_{F_\alpha} \, \norm{B} \, \norm{V} \, \abs{X} \, \abs{Y} \, \paren[\big]{\norm{V}+\norm{V}^2}
\).
As in~\cite{WH2022}, we omit the labels \(\lambda\), since all bounds are uniform in those.
Then we calculate, see~\cite[eq.~(11)]{WH2022},
\begin{equation*}
    \bar{\Omega}(r,y)
    :=
    \int_0^\infty
    C(r,t)
    \, \e^{-yt}
    \diff t
    =
    C \, F_\alpha(r) \, \frac{y+2}{y^3}
    .
\end{equation*}
With~\cite[eq.~(13)]{WH2022}, we bound
\begin{align*}
    \ln \abs[\big]{\Omega_{XY}(0)}
    &\leq
    \frac{1}{2\pi}
    \, \int_{0}^{2\pi}
    \ln \bar{\Omega}\paren[\big]{r,\abs{\rho \sin \theta}}
    \diff \theta
    \\&\leq
    \ln\paren[\big]{C \, F_\alpha(r)}
    + \ln\paren[\big]{\rho+2}
    - \ln \rho^3
    - \frac{3}{2\pi} \, \int_{0}^{2\pi} \ln\abs{\sin \theta} \diff \theta
    \\&\leq
    \ln\paren[\bigg]{
        8 \, C \, F_\alpha(r) \, \frac{\rho+2}{\rho^3}
    }
    ,
\end{align*}
for any \(\rho\in \intervaloo{0,g}\).
In the last step, we used \(\int_{0}^{2\pi} \ln\abs{\sin \theta} \diff \theta = -\pi \, \ln 4\).
Now, the remaining arguments from~\cite{WH2022} and the limit \(\rho \to g\) yield the statement.

Since the explicit form of the decay was not used, the proof of~\eqref{eq:thm-LPPL-gapped-ground-states-short-range} is exactly the same, but with
\begin{equation*}
    C
    =
    C_{b,b'} \, \norm{\Psi}_{F_{b,b'}} \, \norm{A} \, \norm{B}
    \, \min\List[\big]{\abs{X}, \abs{Y}} \, \paren[\big]{\norm{V}+\norm{V}^2}
\end{equation*}
and \(C(r,t) = C \, F_{b',p}(r) \, \paren[\big]{t+t^2}\) coming from the LRB given in \cref{thm:LR-bound-short-range}.

\subsection{Lieb-Robinson bounds for Hamiltonians with a commuting part}
\label{sec:proof-lr-bound-interaction-picture}

The idea of the proof is to go to the interaction picture where \(\Psi\) is the unperturbed part and \(\Phi\) is the perturbation.
The range of the interaction picture interaction \(\Phiint\) will only be slightly enlarged by the commuting, finite-range interaction \(\Psi\), while the norm of the individual terms are not changed at all by the unitary transformation.
This allows to apply known LRBs for non-commuting interactions.

The dynamics for the full Hamiltonian \(
    H(t)
    :=
    H^\Psi + H^\Phi(t)
    :=
    \sum_{Z\subset \Lambda} \Psi(Z) + \sum_{Z\subset \Lambda} \Phi(t,Z)
\) is given by the unique solution of
\begin{equation}
    \label{eq:definition-time-evolution-time-dependent-generator}
    -\I \, \odv*{\tau_{t,s}(A)}{t}
    =
    \tau_{t,s}\paren[\big]{
        \commutator[\big]{H(t),A}
    }
    \qquadtext{and}
    \tau_{s,s} = \id
    \qquadtext{for all}
    s,t\in I
    .
\end{equation}
For the interaction picture, we define
\begin{equation}
    \label{eq:def-interaction-picture-Hamiltonian}
    \Phiint(t,Z)
    :=
    \sumstack{X\subset \Lambda\suchthat\\X_R=Z} \tau^\Psi_{t}\paren[\big]{\Phi(t,X)}
    ,
    \qquadtext{such that}
    \Hint(t)
    =
    \tau^\Psi_{t}\paren[\big]{H^\Phi(t)}
    ,
\end{equation}
where \(\Hint(t) := \sum_{Z\subset \Lambda} \Phiint(t,Z)\), \(H^\Phi(s) = \sum_{Z\subset \Lambda} \Phi(t,Z)\), and \(\tau^\Psi\) is the dynamics on \(\Lambda\) generated by \(\Psi\) as given in~\eqref{eq:definition-time-evolution}.
From \cref{sec:LR-bound-finite-range} we know that \(\tau^\Psi_{t}\paren[\big]{\Phi(s,X)}\in \alg_{X_R}\).

We recall that the dynamics \(\tauint_{t,s}\) generated by \(\Phiint\) is the solution of the analogue equation to~\eqref{eq:definition-time-evolution-time-dependent-generator} with generator \(\Hint(t)\).
We conclude that \(\tau_{t,s} = \tau^\Psi_{-s} \circ \tauint_{t,s} \circ \tau^\Psi_t\) by observing that the right-hand side solves~\eqref{eq:definition-time-evolution-time-dependent-generator} since
\begin{align*}
    -\I \, \odv*{}{t}
    \, \tau^\Psi_{-s} \circ \tauint_{t,s} \circ \tau^\Psi_t (A)
    &=
    \tau^\Psi_{-s} \circ \tauint_{t,s} \paren[\Big]{
        \commutator[\big]{\Hint(t) ,\tau^\Psi_t (A)}
    }
    + \tau^\Psi_{-s} \circ \tauint_{t,s} \circ \tau^\Psi_t \paren[\big]{
        \commutator{H^\Psi, A}
    }
    \\&=
    \tau^\Psi_{-s} \circ \tauint_{t,s} \circ \tau^\Psi_t \paren[\big]{
        \commutator{H^\Psi + H^\Phi(t), A}
    }
    .
\end{align*}
To obtain a LRB for the full evolution~\(\tau_{t,s}\) we now look at this decomposition.
First, by~\eqref{eq:def-interaction-picture-Hamiltonian}, for \(x,y\in \Lambda\), we have
\begin{align*}
    \sumstack{Z\subset \Lambda\suchthat\\x,y\in Z}
    \norm[\big]{\Phiint(t,Z)}
    &\leq
    \sumstack{Z\subset \Lambda\suchthat\\x,y\in Z}
    \, \sumstack{X\subset \Lambda\suchthat\\X_R=Z}
    \norm[\big]{\tau^\Psi_{t}\paren[\big]{\Phi(t,X)}}
    \\&\leq
    \, \sumstack{x'\in B_x(R)\\y'\in B_y(R)}
    \, \sumstack{X\subset \Lambda\suchthat\\x',y'\in Z}
    \norm{\Phi(t,X)}
    \\&\leq
    \sumstack{x'\in B_x(R)\\y'\in B_y(R)}
    \, \norm{\Phi}_{F(\mathop{\cdot}+2R)}
    \, F\paren[\big]{\dist{x',y'}+2R}
    .
\end{align*}
Hence, \(
    \norm{\Phiint}_{F}
    \leq
    (1+\kappa)^2 \, R^{2 D} \, \norm{\Phi}_{F(\mathop{\cdot}+2R)}
\), since \(\dist{x',y'} \geq \dist{x,y} - 2R\).
Then, we rewrite
\begin{equation}
    \norm[\big]{
        \commutator{ \tau_{t,s}(A), B }
    }
    =
    \norm[\big]{
        \commutator{ \tauint_{t,s} \circ \tau^\Psi_t (A), \tau^\Psi_s (B) }
    }
\end{equation}
and recall that \(\tau^\Psi_s\) does not change the norm, \(\norm{\tau^\Psi_t (A)} = \norm{A}\) but increases the support, if \(A\in \alg_X\), then \(\tau^\Psi_t (A)\in \alg_{X_R}\).
Hence, after applying a LRB for \(\tauint_{t,s}\) we need to subtract \(2R\) from the distance.

\section{Conclusions}
\label{sect:conclusions}

We proved that the quantum many-body dynamics produced by mutually commuting long-range interactions is much more strongly constrained than for general long-range interactions.
The difference is fundamental at a physical level, as commutativity directly affects the speed of information propagation and the shape of the light cone.
This finding opens up a new structural divide between general long-range interactions and long-range commuting ones.

While commuting interactions are very special, they are of great current interest in physics in the context of quantum error correction~\cite{DBT2021,BK2022, HMKL2023,PJBP2024,BGKL2024}.
An illustrative example of a long-range toric stabilizer code is given in \cref{ex:lrtc}.
The enhanced LRBs place severe limits on the rate of entanglement generation~\cite{BHV2006} and quantum messaging~\cite{EW2017} for these quantum codes.

The results also imply that commuting long-range Hamiltonians with \(\alpha>1\) have anomalously slow dynamics and could therefore provide simple test beds for mathematical physics conjectures surrounding many-body localization~\cite{AABS2019}.
For example, they satisfy the main assumption in~\cite{TB2024} without disorder-averaging.
Finally, it would be interesting to probe the robustness of the methods by assuming only power-law decay of the commutators, in the spirit of~\cite{PHKM2010} and~\cite{HHKL2021}, which treated such a question for finite-range and short-range interactions, respectively.

\statement{Acknowledgments}

We thank Tomotaka Kuwahara for useful comments on the first version of the manuscript.
\par\nopagebreak
The research of M.\,L.\ is  supported by the Deutsche Forschungsgemeinschaft (DFG, German Research Foundation)
through TRR 352 (470903074) and by the European Union (ERC, MathQuantProp, project 101163620).\footnote{Views and opinions expressed are however those of the authors only and do not necessarily reflect those of the European Union or the European Research Council Executive Agency. Neither the European Union nor the granting authority can be held responsible for them.}
The research of T.\,W.\ is  supported by the Deutsche Forschungsgemeinschaft (DFG, German Research Foundation) through TRR 352 (470903074) and FOR 5413 (465199066).

\statement{Conflict of interest}
The authors have no conflicts to disclose.

\statement{Author contributions}
\begin{description}[
        font=\normalfont,
        nosep,
        align=right,
        left=0pt,
        widest={Marius Lemm:},
    ]
    \raggedright
    \item[Marius Lemm:]
        conceptualization (equal);
        investigation (equal);\linebreak[2]
        writing -- original draft (equal);
        writing -- review and editing (equal).
    \item[Tom Wessel:]
        conceptualization (equal);
        investigation (equal);\linebreak[2]
        writing -- original draft (equal);
        writing -- review and editing (equal).
\end{description}

\statement{Data availability}
Data sharing is not applicable to this article as no new data were created or analyzed in this study.

\AtNextBibliography{\footnotesize}
\printbibliography[heading=bibintoc]

\end{document}